\documentclass[12pt]{elsarticle}

\usepackage{amsmath,amssymb,amsthm}
\usepackage[a4paper]{geometry}
\usepackage[utf8]{inputenc}
\usepackage[T1]{fontenc}
\usepackage{lmodern}
\usepackage{mathrsfs}
\usepackage{microtype}
\usepackage{latexsym}
\usepackage[colorlinks]{hyperref}
\usepackage{cleveref}
\usepackage{graphicx}
\usepackage{bookmark}
\usepackage{booktabs}
\usepackage{enumitem}
\usepackage{diagbox}
\usepackage{lineno}
\usepackage{float}
\usepackage{color}
\usepackage{bm}
\usepackage{longtable}

\newtheorem{thm}{Theorem}
\newtheorem{lem}{Lemma}
\newtheorem{prop}{Proposition}

\newtheorem{coro}{Corollary}
\theoremstyle{definition}

\newtheorem{ex}{Example}
\newtheorem{rem}{Remark}

\renewcommand{\vec}[1]{\boldsymbol{#1}}

\begin{document}
	
\begin{frontmatter}

\journal{arXiv}



\title{New bounds for codes over Gaussian integers based on the Mannheim distance\tnoteref{fund}
}
\tnotetext[fund]{The work of Minjia Shi was supported by the National Natural Science Foundation of China under Grant 12471490. 
Jon-Lark Kim was supported in part by the BK21 FOUR (Fostering Outstanding Universities for Research) funded by the Ministry of Education(MOE, Korea) and National Research Foundation of Korea(NRF) under Grant No. 4120240415042 and by Basic Science Research Program through the National Research Foundation of Korea(NRF) funded by the Ministry of Science and ICT under Grant No. RS-2025-24534992.
}


\author[ahu1]{Minjia Shi\corref{CorAuthor}}
\cortext[CorAuthor]{Corresponding author.	E-mail address: smjwcl.good@163.com (M. Shi).}
\author[ahu1]{Xuan Wang}
\author[sog1,sog2]{Junmin An}
\author[sog1,sog2]{Jon-Lark Kim}

\affiliation[ahu1]{organization={Key Laboratory of Intelligent Computing Signal Processing, Ministry of Education},
            addressline={School of Mathematical Sciences, Anhui University},
            city={Hefei},
            postcode={230601},
            state={Anhui},
            country={China}}
\affiliation[sog1]{organization={Department of Mathematics, Sogang University},
            addressline={35, Baekbeom-ro},
            city={Mapo-gu},
            postcode={04107},
            state={Seoul},
            country={Republic of Korea}}
\affiliation[sog2]{organization={Institute for Mathematical and Data Sciences, Sogang University},
            addressline={35, Baekbeom-ro},
            city={Mapo-gu},
            postcode={04107},
            state={Seoul},
            country={Republic of Korea}}

\begin{abstract}
We study linear codes over Gaussian integers equipped with the Mannheim distance. We develop Mannheim-metric analogues of several classical bounds. We derive an explicit formula for the volume of Mannheim balls, which yields a sphere packing bound and constraints on the parameters of two-error-correcting perfect codes. We prove several other useful bounds, and exhibit families of codes meeting these bounds for some parameters, thereby showing that these bounds are tight. We also discuss self-dual codes over Gaussian integers and obtain upper bounds on their minimum Mannheim distance for certain parameter regions using a Mannheim version of the Macwilliams-type identity. Finally, we present decoding algorithms for codes over Gaussian integer residue rings. We give examples showing that certain errors which are not correctable under the Hamming metric become correctable under the Mannheim metric.
\end{abstract}



\begin{keyword}
Gaussian integers \sep Mannheim distance \sep sphere packing bound \sep self-dual codes \sep two-dimensional channel




\end{keyword}

\end{frontmatter}




\section{Introduction}
Classical coding theory has been developed primarily for codes over finite fields under the Hamming distance. A significant turning point in this framework was the observation in~\cite{IEEE-HKCSS-Z4} that certain optimal nonlinear binary codes are connected to linear $\mathbb{Z}_4$ codes under the Lee distance via the Gray map, which provides the precise correspondence. This connection has stimulated substantial interest in the study of codes over various rings equipped with the Lee distance~\cite{IEEE-C-Z2k, PIT-CH, IEEE-GS-chainring}. Moreover, Chiang and Wolf~\cite{IC-CW-leechannel} showed that, for certain channel models, codes with the Lee distance are more suitable than those with the Hamming distance. Since then, a series of works have investigated their relevance to a variety of practical applications including phase shift keying (PSK)~\cite{IEEE-EY-Leemulti,IEEE-RS-LeeBCH,IEEE-S-LeeRM}. However, it is known that the codes under the Hamming or Lee distance are not well suited for error correction for quadrature amplitude modulation (QAM) signals, which is widely used in communication systems such as HSDPA and LTE, as well as in digital TV and Wi-Fi as discussed in~\cite{FCN-SLLK-QAM}. This is because QAM constellations lie in two-dimensional signal space, in which both the Hamming and Lee distances are inappropriate. To address this problem, Huber~\cite{IEEE-H-gaussian} proposed codes over the Gaussian integers equipped with the Mannheim distance and showed that such codes can be effectively applied to QAM signal constellations. His work has initiated a line of research on codes over Gaussian integers with the Mannheim distance.

In~\cite{AAECC-H-MTTDMM}, Huber established a MacWilliams-type identity for Mannheim weight enumerators by representing finite fields using Gaussian integers. Martinez et al.~\cite{TIT-MBG-metrics} developed a new metric for codes over Gaussian integers induced by circulant graphs and characterized perfect codes with respect to their metric. Bouyuklieva~\cite{B-Proc} provided an overview of coding-theoretic applications of the Gaussian integers, and discussed one Mannheim error–correcting (OMEC) codes over residue-class rings together with their decoding. Ozen and Güzeltépe~\cite{QIC-OG-quantum} constructed quantum codes from codes over Gaussian integers equipped with the Mannheim metric, and presented examples whose parameters are better than or comparable to previously known quantum codes. Matsui~\cite{IEEE-M-CM} proposed an algorithm to construct and search for self-orthogonal and self-dual codes over Gaussian integers using the Chinese Remainder Theorem. Sajjad et al.~\cite{IEEE-SSAA-GaussianBCH} provided a construction of BCH codes over Gaussian integers and presented a decoding algorithm for those codes based on modified Berlekamp-Massey algorithm. Although a wide range of studies has been conducted, bounds on the minimum distance of codes under the Mannheim
metric, and optimal codes meeting such bounds have been less studied. This motivates us to establish
Mannheim metric analogues of classical bounds and to investigate the existence of codes meeting such bounds.

The sphere packing bound is one of the most fundamental results in coding theory. If a code can correct up to $t$ errors, then its size is bounded above by the size of the ambient space divided by the volume of a ball of radius $t$ in the underlying metric. The codes meeting the sphere packing bound are called perfect codes, and the existence and classification of perfect codes have been extensively studied over various fields and rings under a wide range of distance metrics~\cite{IEEE-BR-1perfect, SAM-GW-Leeperfect, SAM-T-fieldperfect}.

 In this paper, we compute the volume of Mannheim balls, and we derive a Mannheim metric version of the sphere packing bound. As a consequence, we obtain a necessary condition for the existence of a 2-error-correcting perfect code over the Gaussian integer residue rings. In particular, our results imply that if a 2-error-correcting perfect code under the Mannheim metric exists, then the smallest field over which such a code can exist is $\mathbb{F}_{29}$ and the smallest candidate parameter is $[n, k, d_\pi]=[10, 8, 5]$, where $d_\pi$ is the minimum Mannheim distance of the code.

In addition to determining optimal parameters for general linear codes, the study of optimal self-dual codes is also a central topic because of their importance in coding theory. Many bounds for the minimum distances of self-dual codes for various parameters have been provided and also various construction methods have been developed to obtain self-dual codes meeting such bounds~\cite{FFA-H-self-dual,FFA-H-self-dual-2,JCTA-LL-self-dual,IEEE-K-building,IEEE-KC-building,DCC-CK-self-dual,IEEE-WLZ}. Using a Mannheim metric version of the MacWilliams-type identity by Huber, we obtain the upper bounds on the minimum distances of self-dual codes over $\mathbb{F}_{13}$ and $\mathbb{F}_{17}$ for several lengths.

Our main contributions are as follows. We consider codes over $\mathbb{Z}[i]/(\pi)$ where $(\pi)$ is a maximal ideal of $\mathbb{Z}[i]$. If $\pi\equiv 3\pmod 4$, then the Mannheim weight distribution of a code coincides with the Lee weight distribution of the code induced from it. For the remaining case when $\pi$ is a factor of a prime $p\equiv 1\pmod 4$, we derive Mannheim metric analogues of the sphere packing bound. In addition, we obtain several further bounds on code parameters in the Mannheim metric, including bounds for self-dual codes. Thus, we obtain upper bounds on the best achievable parameters for general linear codes and self-dual codes under the Mannheim distance. Moreover, we also present a decoding algorithm for codes equipped with the Mannheim distance. Since there are cases in which an error is not correctable under the Hamming distance but become correctable under the Mannheim distance, this further highlights the usefulness of our study.

This paper is organized as follows. Section 2 introduces preliminary concepts for codes over Gaussian integer residue rings under the Mannheim distance. In Section 3, we discuss the relations between Mannheim metric and other metrics such as Hamming metric and Lee metric. Section 4  establishes a Mannheim-metric version of the sphere-packing bound and discuss perfect codes with respect to the Mannheim distance. In Section 5, we derive bounds on the minimum Mannheim distance of self-dual codes and present optimal self-dual codes meeting these bounds. In Section 6, we develop a decoding algorithm under the Mannheim distance. We conclude our paper in Section 7.

\section{Preliminaries}

\subsection{Gaussian integers}

The ring of Gaussian integers, denoted by $\mathcal{G}=\mathbb{Z}[i]$, is the set of complex numbers with integer real and imaginary parts, that is,
\[ \mathcal{G} = \{ x + y i \mid x, y \in \mathbb{Z} \}. \]
It is known that $\mathcal{G}$ forms a Euclidean domain under the norm $\mathcal{N}(x+yi)=x^2+y^2$. For any $\pi\in\mathcal{G}$, we define the quotient ring $\mathcal{G}_\pi=\mathcal{G}/(\pi)$, where $(\pi)$ is the principal ideal generated by $\pi$. When $\pi=a+bi$ satisfies $\gcd(a, b)=1$, then the map $\eta:\mathbb{Z}_{a^2+b^2}\to \mathcal{G}_\pi$ given by $\eta(g)=g\pmod\pi$ is a ring isomorphism, which gives $\mathcal{G}_\pi\cong\mathbb{Z}_{a^2+b^2}$.

According to \cite{TIT-MBG-metrics}, two quotient rings $\mathcal{G}_\alpha$ and $\mathcal{G}_\beta$ are isomorphic if and only if $\alpha = u \beta$ or $\bar{\alpha} = u \beta$ for some unit $u\in\{\pm 1, \pm i\}$. To choose a unique representative from each isomorphism class of quotient rings, we assume that $\pi=a+bi$ satisfies $0<a<b$ and $\gcd(a, b)=1$. Since $\mathcal{G}$ is a principal ideal domain, the quotient ring is a finite field is and only if $\pi$ is a Gaussian prime. These primes and their corresponding fields are classified as follows:
\begin{enumerate}[label=$\mathrm{(\arabic*)}$]
	\item[(i)] The prime $1 + i$. In this case, the quotient ring is isomorphic to the binary field, $\mathcal{G}/(1+i)\cong \mathbb{F}_2$.
	\item[(ii)] Rational primes $p \equiv 3 \pmod{4}$. The resulting quotient ring is $\mathcal{G}/(p)\cong\mathbb{F}_{p^2}$.
	\item[(iii)] Factors of rational primes $p \equiv 1 \pmod{4}$. If $p=a^2+b^2$, then $\pi=a+bi$ is a Gaussian prime, and the quotient ring is $\mathcal{G}/(\pi)\cong \mathbb{F}_p$.
\end{enumerate}

The metric structure of $\mathcal{G}_\pi$ is given by the Mannheim weight. For an element $\alpha\in\mathcal{G}_\pi$, the {\it Mannheim weight} is defined as
\[
{\rm wt}_{\pi}(\alpha)=|a|+|b|,
\]
where $a+bi$ is a representative of the residue class $\alpha\pmod\pi$ that minimizes $|a|+|b|$. This definition naturally extends to vectors in $\mathcal{G}_\pi^n$. For $\vec{x}=(x_1, \ldots, x_n)\in\mathcal{G}_\pi^n$, the Mannheim weight is the sum of the weights of its components, that is,
\[
   {\rm wt}_{\pi}(\vec{x}) = \sum_{i=1}^{n} {\rm wt}_{\pi}(x_i).
\]
The \textit{Mannheim distance} between two vectors $\vec{x}, \vec{y}\in\mathcal{G}_\pi^n$ is then defined as
\[
d_\pi(\vec{x}, \vec{y})={\rm wt}_{\pi}(\vec{x}-\vec{y}),
\]
which has been shown to be a metric in \cite[Theorem 2]{TIT-MBG-metrics}.

\subsection{Codes over Gaussian integers}

A \textit{linear code} $\mathcal{C}$ of length $n$ over $\mathcal{G}_\pi$ is defined as a subspace of $\mathcal{G}_\pi^n$. Although these codes can be viewed as conventional linear codes over the finite fields $\mathbb{F}_p$ or $\mathbb{F}_{p^2}$ depending on the choice of $\pi$, we focus on their properties with respect to the Mannheim metric throughout the paper.

The \textit{(Euclidean) inner product} on $\mathcal{G}_\pi^n$ is defined as
\[
\langle \vec{x}, \vec{y}\rangle=\sum_{i=1}^nx_iy_i,
\]
for $\vec{x}=(x_1, \ldots, x_n), \vec{y}=(y_1, \ldots, y_n)\in \mathcal{G}_\pi^n$.
For a code $\mathcal{C}\subseteq\mathcal{G}_\pi^n$, the \textit{(Euclidean) dual} of $\mathcal{C}$ is
\[
\mathcal{C}^\perp=\{\vec{x}\in\mathcal{G}_\pi^n~|~\langle \vec{x}, \vec{c}\rangle=0\mbox{ for all }\vec{c}\in\mathcal{C}\}.
\]
A code $\mathcal{C}$ is called \textit{self-orthogonal} if $\mathcal{C}\subseteq\mathcal{C}^\perp$, and \textit{self-dual} if $\mathcal{C}=\mathcal{C}^\perp$.

We denote the \textit{minimum Mannheim distance} of $\mathcal{C}$ by $d_\pi(\mathcal{C})$, whereas $d_H(\mathcal{C})$ denotes the \textit{minimum Hamming distance} of $\mathcal{C}$. We define $d_\pi(n, k)$ and $d_H(n, k)$ to be the maximum of the minimum Mannheim and Hamming distances among all $[n, k]$ codes over $\mathcal{G}_\pi$, respectively. An $[n, k]$ code achieving $d_\pi(\mathcal{C})=d_\pi(n, k)$ is called a \textit{Mannheim optimal code}, and one achieving $d_H(\mathcal{C})=d_H(n, k)$ is called a \textit{Hamming optimal code}.

Similarly, we define $d_\pi^{SD}(n)$ and $d_H^{SD}(n)$ as the maximum of the minimum Mannheim and Hamming distances among all $[n,n/2]$ self-dual codes over $\mathcal{G}_\pi$, respectively. A self-dual code is referred to as a \textit{Mannheim optimal self-dual code} or a \textit{Hamming optimal self-dual code} if its minimum distance reaches these maximums.

Two linear codes $\mathcal{C}_1$ and $\mathcal{C}_2$ are said to be \textit{Mannheim-equivalent}, denoted by $\mathcal{C}_1 \cong \mathcal{C}_2$, if there exists a monomial matrix $M$ whose non-zero entries are in $\{\pm1, \pm i\}$ such  that $\mathcal{C}_2\cong\mathcal{C}_1M$.

For self-dual codes, we consider another notion of equivalence. Two codes $\mathcal{C}_1$ and $\mathcal{C}_2$ are called \textit{$(1, -1, 0)$-monomial equivalent} if there exists a monomial matrix $M$ whose non-zero entries are in $\{1, -1\}$ such that $\mathcal{C}'=\mathcal{C}M$. In the paper, we utilize $(1, -1, 0)$-monomial equivalence for self-dual codes over $\mathcal{G}_\pi$ instead of the Mannheim equivalence defined above. This is because the Mannheim equivalence does not necessarily preserves the self-duality of a code under the Euclidean inner product, as it allows multiplication by the units $\pm i$.

\section{Relations Between Mannheim Distance and Other Distances}

In this section, we will discuss the relations between several metrics.

\subsection{$p\equiv 3\pmod 4$}

Recall that the \textbf{Lee weight} of $x \in \mathbb{Z}_p$ is defined as ${\rm wt}_L(x) = \min \{ x, p-x \}$, where $0 \leqslant x < p$.
Similarly, the Lee weight of the vector $\vec{x} = (x_1, \dots, x_n) \in \mathbb{Z}_p^n$ is defined as ${\rm wt}_L(\vec{x}) = \sum_{i=1}^{n} {\rm wt}_L(x_i)$.
There is close relation between the Mannheim weight and Lee weight when $p \equiv 3 \pmod{4}$.
\begin{prop}
	Let $x = a+bi \in \mathcal{G}_p$, where $p \equiv 3 \pmod{4}$ is prime, and $0 \leqslant a,b < p$.
	Then ${\rm wt}_{\pi}(x) = \min \{ a, p-a \} + \min \{ b, p-b \} = {\rm wt}_L(a) + {\rm wt}_L(b)$.
\end{prop}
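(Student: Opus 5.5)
Here $\pi=p$ is a rational prime, so the residue class of $x=a+bi$ modulo $p$ is exactly the set of Gaussian integers
\[
(a+mp)+(b+np)i,\qquad m,n\in\mathbb{Z}.
\]
The plan is therefore to show that minimizing $|a+mp|+|b+np|$ over $(m,n)\in\mathbb{Z}^2$ splits into two independent one-dimensional minimizations.

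The first step is to note that the ideal $(p)$ is the full square lattice $p\mathbb{Z}+p\mathbb{Z}i$. Indeed, $p(u+vi)=pu+pvi$, and as $u,v$ range over $\mathbb{Z}$ this gives every pair $(pu,pv)$. Hence $m$ and $n$ can be chosen independently, unlike the case of a general $\pi=a+bi$, where the lattice $(\pi)$ is rotated.

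The second step uses this independence to split the minimum:
\[
\min_{m,n}\bigl(|a+mp|+|b+np|\bigr)=\min_m|a+mp|+\min_n|b+np|.
\]

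The third step computes each one-dimensional minimum. For $0\le a<p$, the values $|a+mp|$ are $a$ (at $m=0$) and $p-a$ (at $m=-1$), and every other choice of $m$ gives a value of at least $p$. So $\min_m|a+mp|=\min\{a,p-a\}={\rm wt}_L(a)$, and likewise $\min_n|b+np|={\rm wt}_L(b)$.

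Combining the three steps gives ${\rm wt}_\pi(x)={\rm wt}_L(a)+{\rm wt}_L(b)$. No step is a real obstacle. The only point needing care is the first one: a representative of $x$ may use any element of $(p)$, not only rational multiples of $p$. Writing $(p)$ explicitly as the lattice $p\mathbb{Z}[i]$ settles this.
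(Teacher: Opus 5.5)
Your proof is correct and follows the same route as the paper. Both write the residue class as $(a+k_1p)+(b+k_2p)i$, split the minimum into two independent one-dimensional minima, and evaluate each as $\min\{a,p-a\}$. You also fill in two points the paper leaves implicit: that $(p)=p\mathbb{Z}+p\mathbb{Z}i$, so the two shifts are genuinely independent, and that every other shift gives a value of at least $p$.
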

\begin{proof}
	Since $x \equiv a + b i \pmod{p}$, then $x = (a + k_1 p) + (b + k_2 p)$ for some integers $k_1, k_2$.
	According to the definition of Mannheim weight, we have
	\begin{align*}
		{\rm wt}_{\pi}(x)
		& = \min \{ |a + k_1 p| + |b + k_2 p| \} \\
		& = \min_{k_1 \in \mathbb{Z}} |a + k_1 p| + \min_{k_2 \in \mathbb{Z}} |b + k_2 p| \\
		& = \min \{ a, p-a \} + \min \{ b, p-b \} \\
		& = {\rm wt}_L(a) + {\rm wt}_L(b),
	\end{align*}
	which completes the proof.
\end{proof}

Let $\vec{r} = (r_1, \dots, r_n)$ be a vector over $\mathcal{G}_p$, where $p \equiv 3 \pmod{4}$ and $r_j = x_j + y_j i$ for $1 \leqslant j \leqslant n$.
Define the map $\varphi: \mathcal{G}_p^n \to \mathbb{Z}_p^{2n}$ as
\[ \varphi(\vec{r}) = (x_1, \dots, x_n, y_1, \dots, y_n) = (\vec{x}, \vec{y}), \]
where $\vec{x} = (x_1, \dots, x_n)$, and $\vec{y} = (y_1, \dots, y_n)$.

\begin{lem}
	Keep the notations above.
	If $\lambda = \mu + \upsilon i \in \mathcal{G}_p$ with $0 \leqslant \mu, \upsilon \leqslant p-1$,
	then $\varphi(\lambda \vec{r}) = \mu \varphi(\vec{r}) + \upsilon \varphi(i\vec{r})$.
	Moreover, ${\rm wt}_{\pi}(\vec{r}) = {\rm wt}_L(\varphi(\vec{r}))$.
\end{lem}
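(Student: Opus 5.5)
The plan is to verify both claims by direct computation, component by component, using the identification $\mathcal{G}_p\cong \mathbb{F}_{p^2}$ with elements written uniquely as $x+yi$, $0\le x,y<p$. First I will record how multiplication by $i$ acts: for $r_j=x_j+y_j i$ we have $i r_j = -y_j + x_j i$. Hence
\[
\varphi(i\vec r)=(-\vec y,\vec x),
\]
with entries reduced mod $p$. This is consistent because $\varphi$ is additive and $\mathbb{Z}_p$-linear: it is just the coordinate map of $\mathcal{G}_p^n$ as a $2n$-dimensional $\mathbb{Z}_p$-vector space with basis $\{1,i\}$ in each coordinate.

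For the first claim I will write $\lambda\vec r=\mu\vec r+\upsilon(i\vec r)$, which is a sum in $\mathcal{G}_p^n$ where $\mu,\upsilon$ act as integers, that is, as elements of the prime subfield $\mathbb{Z}_p\subset\mathcal{G}_p$. Applying $\mathbb{Z}_p$-linearity of $\varphi$ then gives $\varphi(\lambda\vec r)=\mu\varphi(\vec r)+\upsilon\varphi(i\vec r)$. As a sanity check, computing componentwise gives
\[
\lambda r_j=(\mu x_j-\upsilon y_j)+(\mu y_j+\upsilon x_j)i,
\]
so $\varphi(\lambda\vec r)=(\mu\vec x-\upsilon\vec y,\ \mu\vec y+\upsilon\vec x)$. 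This agrees with $\mu(\vec x,\vec y)+\upsilon(-\vec y,\vec x)$.

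For the weight identity I will apply the preceding Proposition to each coordinate, which gives ${\rm wt}_\pi(r_j)={\rm wt}_L(x_j)+{\rm wt}_L(y_j)$. Summing over $j$ yields
\[
{\rm wt}_\pi(\vec r)=\sum_j {\rm wt}_L(x_j)+\sum_j{\rm wt}_L(y_j)={\rm wt}_L(\vec x,\vec y)={\rm wt}_L(\varphi(\vec r)).
\]

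The only point needing care, and the step I expect to be the main obstacle, is making sure $\varphi$ is well defined and linear. This requires each residue class mod $p$ in $\mathcal{G}$ to have a unique representative $x+yi$ with $0\le x,y<p$. That holds because $(p)$ consists exactly of the Gaussian integers $p(c+di)$, so reduction acts independently on the real and imaginary parts. Once this is in place, reduction mod $p$ commutes with addition and with scalar multiplication by integers, which gives the linearity used above.
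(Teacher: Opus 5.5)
Your proposal is correct and follows essentially the same route as the paper, which reduces to $n=1$, computes $\mu(x,y)+\upsilon(-y,x)=(x\mu-y\upsilon,\,y\mu+x\upsilon)=\varphi(\lambda r)$, and then applies the preceding Proposition to obtain ${\rm wt}_\pi(r)={\rm wt}_L(x)+{\rm wt}_L(y)$. Your additional remarks on why $\varphi$ is well defined and $\mathbb{Z}_p$-linear, and your explicit coordinatewise summation, only make the paper's brief argument more careful.
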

\begin{proof}
	Just prove for $n = 1$.
	Assume that $r = x+y i \in \mathcal{G}_p$.
	Then
	\[
	\mu \varphi(r) + \upsilon \varphi(i r) = \mu (x, y) + \upsilon (-y, x) = (x \mu - y \upsilon, y \mu + x \upsilon) = \varphi(\lambda r).
	\]
	Moreover,
	\[ {\rm wt}_{\pi}(r) = {\rm wt}_L(x) + {\rm wt}_L(y) = {\rm wt}_L(\varphi(r)), \]
	which completes the proof.
\end{proof}

This gives the following theorem.
\begin{thm}\label{Gaussian-Lee}
	Let $\mathcal{C}$ be a linear $[n,k]$ code over $\mathcal{G}_p$ with generator matrix $G$ whose $j$-th row is $\vec{r}_j$, $1 \leqslant j \leqslant k$, where $p \equiv 3 \pmod{4}$.
	Let $\mathcal{C}'$ be the code over $\mathbb{F}_p$ spanned by the $2k$ vectors $\{ \varphi(\vec{r}_1), \varphi(i\vec{r}_1), \dots, \varphi(\vec{r}_k), \varphi(i\vec{r}_k) \}$.
	Then $\mathcal{C}'$ is a linear $[2n,2k]$ code over $\mathbb{Z}_p$.
	Moreover, the Mannheim weight distribution of $\mathcal{C}$ is the same as the Lee weight distribution of $\mathcal{C}'$.
\end{thm}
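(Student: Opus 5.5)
The plan is to show that $\varphi$, restricted to $\mathcal{C}$, is a bijection onto $\mathcal{C}'$ that preserves weight. Then both the dimension count and the equality of weight distributions follow.

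First I would note that $\varphi:\mathcal{G}_p^n\to\mathbb{Z}_p^{2n}$ is an isomorphism of additive groups. Indeed, $\mathcal{G}_p=\mathbb{Z}[i]/(p)$ has the $\mathbb{Z}_p$-basis $\{1,i\}$, so every element is uniquely $a+bi$ with $a,b\in\mathbb{Z}_p$, and $\varphi$ is additive and $\mathbb{Z}_p$-linear.

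Next, every codeword has the form $\vec c=\sum_{j}\lambda_j\vec r_j$ with $\lambda_j=\mu_j+\upsilon_j i\in\mathcal{G}_p$. By the preceding Lemma, $\varphi(\lambda_j\vec r_j)=\mu_j\varphi(\vec r_j)+\upsilon_j\varphi(i\vec r_j)$. Additivity then gives
\[
\varphi(\vec c)=\sum_j\bigl(\mu_j\varphi(\vec r_j)+\upsilon_j\varphi(i\vec r_j)\bigr)\in\mathcal{C}'.
\]
Conversely, each $\mathbb{Z}_p$-combination of the $2k$ spanning vectors is $\varphi$ of such a $\vec c$. Hence $\varphi(\mathcal{C})=\mathcal{C}'$, which shows that $\mathcal{C}'$ is a linear code of length $2n$ over $\mathbb{Z}_p$.

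For the dimension, injectivity of $\varphi$ gives $|\mathcal{C}'|=|\mathcal{C}|=(p^2)^k=p^{2k}$, since $\mathcal{G}_p\cong\mathbb{F}_{p^2}$ and $\mathcal{C}$ has dimension $k$. Therefore $\dim_{\mathbb{Z}_p}\mathcal{C}'=2k$. Equivalently, the $2k$ spanning vectors are linearly independent: a vanishing $\mathbb{Z}_p$-combination pulls back to $\sum\lambda_j\vec r_j=0$, which forces all $\lambda_j=0$ by independence of the rows of $G$.

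Finally, the Lemma gives ${\rm wt}_\pi(\vec c)={\rm wt}_L(\varphi(\vec c))$ for every $\vec c$. Since $\varphi|_{\mathcal{C}}$ is a bijection onto $\mathcal{C}'$, the number of codewords of each Mannheim weight $w$ in $\mathcal{C}$ equals the number of codewords of Lee weight $w$ in $\mathcal{C}'$.

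The only step that needs any care is the independence and dimension count, which relies on $\mathcal{G}_p$ being a field of order $p^2$ when $p\equiv 3\pmod 4$. Everything else is immediate from the Lemma.
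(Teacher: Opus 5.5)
Your proof is correct and takes the same route as the paper, which obtains the theorem directly from the preceding Lemma: the identity $\varphi(\lambda\vec{r})=\mu\varphi(\vec{r})+\upsilon\varphi(i\vec{r})$ together with ${\rm wt}_{\pi}(\vec{r})={\rm wt}_L(\varphi(\vec{r}))$. You also make explicit what the paper leaves implicit, namely that $\varphi|_{\mathcal{C}}\colon\mathcal{C}\to\mathcal{C}'$ is a weight-preserving bijection and that $|\mathcal{C}'|=p^{2k}$ gives the dimension $2k$.
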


\begin{rem}
	Assume that $\vec{r}_j = \vec{x}_j + i \vec{y}_j$.
	Then the generator matrix of $\mathcal{C}'$ is
	\[
	\begin{pmatrix}
		\vec{x}_1 & \vec{y}_1 \\
		\dots & \dots \\
		\vec{x}_k & \vec{y}_k \\
		-\vec{y}_1 & \vec{x}_1 \\
		\dots & \dots \\
		-\vec{y}_k & \vec{x}_k
	\end{pmatrix}
	=
	\begin{pmatrix}
		X & Y \\
		-Y & X
	\end{pmatrix}.
	\]
	Conversely, given a linear $[n,k]$ code $\mathcal{C}_1$ over $\mathbb{Z}_p$ with generator matrix $X$,
	there exists a $k \times n$ matrix $Y$ over $\mathbb{Z}_p$ such that $X + Y i$ is a generator matrix of a linear $[n,k]$ code $\mathcal{C}$ over $\mathcal{G}_{p}$.
\end{rem}

In the classification of Gaussian primes, the Mannheim distance of $\pi=1+i$ is identical to the Hamming distance. For rational primes $p\equiv 3\pmod 4$, Theorem~\ref{Gaussian-Lee} shows that the Mannheim weight distribution is determined by the Lee weight distribution over $\mathbb{Z}_p$. Consequently, our paper focuses on case when $\pi$ is a Gaussian prime such that $\pi$ is a factor of a rational prime $p\equiv 1\pmod 4$.

\subsection{$p\equiv1\pmod 4$}

Let $\gamma = \xi^{(p-1)/4}$, where $\xi$ is a primitive element of $\mathbb{F}_{p}^{\ast}$.
Then $\gamma^4 = 1$.
Considering that $\eta: \mathbb{Z}_{a^2+b^2} \to \mathcal{G}_{\pi}$ is isomorphism, we have $\eta(\gamma) = i$, i.e., ${\rm wt}_{\pi}(\gamma) = 1$.
Let $H = \{ \pm 1, \pm \gamma \}$.
Then all the elements in $H$ have Mannheim weight $1$.

In fact, we have
\begin{lem} \label{lemma-unit-weight-2}
	Let $\pi = a + b i$ be the Gaussian prime such that $0 < a < b$ and $\gcd(a,b) = 1$.
	Then for the given $\alpha \in \mathcal{G}_{\pi}^{\ast}$, we have ${\rm wt}_{\pi}(\alpha) = {\rm wt}_{\pi}(u \alpha)$,
	where $u \in \{\pm 1, \pm i\}$.
\end{lem}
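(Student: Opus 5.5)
The plan is to show that multiplication by a unit $u\in\{\pm 1,\pm i\}$ permutes the representatives of a residue class in a way that preserves the quantity $|x|+|y|$. Since ${\rm wt}_\pi(\alpha)$ is the minimum of $|x|+|y|$ over all representatives $x+yi$ of $\alpha$, the minimum will then be unchanged.

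First I will describe the residue class of $\alpha$. Fix $\alpha\in\mathcal{G}_\pi^*$ and let
\[
R(\alpha)=\{x+yi\in\mathcal{G} : x+yi\equiv\alpha \pmod{\pi}\}
\]
be the set of its representatives in $\mathcal{G}$. Then ${\rm wt}_\pi(\alpha)=\min\{|x|+|y| : x+yi\in R(\alpha)\}$. Multiplication by $u$ is a bijection $\mathcal{G}\to\mathcal{G}$, because $u^{-1}$ is also a unit. If $z\equiv\alpha \pmod{\pi}$, then $uz-u\alpha=u(z-\alpha)\in(\pi)$, since $(\pi)$ is an ideal. Hence $uR(\alpha)\subseteq R(u\alpha)$. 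Applying the same argument with $u^{-1}$ in place of $u$ and $u\alpha$ in place of $\alpha$ gives the reverse inclusion, so $uR(\alpha)=R(u\alpha)$.

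Next I will check that $z\mapsto uz$ preserves the $\ell^1$ quantity $|x|+|y|$. For $z=x+yi$ we have
\[
-z=-x-yi,\qquad iz=-y+xi,\qquad -iz=y-xi.
\]
In each case $|\mathrm{Re}|+|\mathrm{Im}|$ equals $|x|+|y|$, because the real and imaginary parts are either kept or swapped, possibly with sign changes.

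Putting these together, $\{|x|+|y| : x+yi\in R(u\alpha)\}$ and $\{|x|+|y| : x+yi\in R(\alpha)\}$ are the same set of nonnegative integers. Each set has a minimum because it consists of nonnegative integers, and the minima coincide, so ${\rm wt}_\pi(u\alpha)={\rm wt}_\pi(\alpha)$.

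No step is a real obstacle. The only point needing care is that the weight is defined by minimizing over all representatives rather than over a fixed residue system, so the bijection $uR(\alpha)=R(u\alpha)$ must be established explicitly. The hypotheses $0<a<b$ and $\gcd(a,b)=1$ are not used; they only fix the setting.
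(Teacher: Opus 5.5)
Your proof is correct and uses the same idea as the paper's: multiplying a representative by a unit gives a representative of the multiplied class with the same value of $|x|+|y|$, so the minima agree. The paper phrases this as a contradiction argument for $u=i$ only, pulling a shorter representative of $i\alpha$ back to one of $\alpha$ via $i^3$. Your bijection $uR(\alpha)=R(u\alpha)$ instead handles all four units and both inequalities at once, which is slightly more complete.
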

\begin{proof}
	Let $\alpha = x+yi \in \mathcal{G}_{\pi}^{\ast}$ such that $|x| + |y|$ is minimum.
	If ${\rm wt}_{\pi}(\alpha) \neq {\rm wt}_{\pi}(i \alpha)$, then there exists $x_0, y_0$ such that $i \alpha = x_0 i - y_0$ and $|x_0| + |y_0|$ is minimum, i.e., $|x| + |y| > |x_0| + |y_0|$, which is a contradiction since $\alpha = (i^3) \cdot i \alpha = x_0 + y_0 i$ and $|x| + |y|$ is minimum.
\end{proof}

Note $H$ is a subgroup of $\mathbb{F}_{p}^{\ast}$.
Hence, we get the following coset decomposition
\begin{equation} \label{eq-coset-decomposition}
	\mathbb{F}_{p}^{\ast} = i_1 H \cup i_2 H \cup \dots \cup i_{(p-1)/4} H,
\end{equation}
where $1 = i_1 < i_2 < \dots < i_{(p-1)/4}$.
By \Cref{lemma-unit-weight-2}, we know that all the elements of the same coset have the same Mannheim weight.

\begin{thm}{\rm(\cite[Theorem 6]{TIT-MBG-metrics})}
	Let $\pi = a + b i$ be a Gaussian prime such that $0 < a < b$, $\gcd(a,b) = 1$ and $\mathcal{N}(\pi)=p$ for some prime $p \equiv 1 \pmod{4}$. Let $\mathcal{W}_j$ denote the number of elements in $G_\pi$ with Mannheim weight $j$. Then
	\[
	\mathcal{W}_j =
	\begin{cases}
		1, & ~\mathrm{if}~j = 0; \\
		4j, & ~\mathrm{if}~1 \leqslant j \leqslant t; \\
		4(b-j), & ~\mathrm{if}~t < j \leqslant b-1,
	\end{cases}
	\]
	where $t = (a+b-1)/2$. Moreover, the maximum Mannheim weight in $\mathcal{G}_\pi$ is $b-1$.
\end{thm}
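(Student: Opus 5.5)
The approach is to identify $\mathcal{G}_\pi$ with a fundamental domain of the lattice $(\pi)=\pi\mathbb{Z}[i]$ in $\mathbb{Z}^2$, and to count lattice points of given $\ell_1$-norm that are the unique minimal representatives of their class. The lattice $(\pi)$ is generated by $\pi=a+bi$ and $i\pi=-b+ai$. It has covolume $p=a^2+b^2$, and its nonzero vectors of smallest $\ell_1$-norm are the four associates $u\pi$, each of $\ell_1$-norm $a+b$. To see that no shorter vector exists, note that any nonzero element $\pi\beta$ has norm $p\,\mathcal{N}(\beta)\ge p$, and the $\ell_1$-norm of $x+yi$ is at least $\sqrt{x^2+y^2}$. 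One then checks that within the diamond $|x|+|y|\le a+b$ only the associates of $\pi$ lie in $(\pi)$, which is a short case check using $x\equiv -\gamma' y$-type congruences, or simply $x^2+y^2\equiv 0 \pmod p$ with $x^2+y^2\le (a+b)^2<2p$.

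\textbf{Step 1 (small weights).} For $j\le t=(a+b-1)/2$, let $x+yi$ and $x'+y'i$ be two distinct points with $|x|+|y|\le j$ and $|x'|+|y'|\le j$. Their difference has $\ell_1$-norm at most $2j\le a+b-1<a+b$, so it is not in $(\pi)$. Hence every point of $\ell_1$-norm $j$ is the unique minimal representative of its class. The sphere of radius $j\ge1$ in $(\mathbb{Z}^2,\ell_1)$ has exactly $4j$ points, so $\mathcal{W}_j=4j$. Summing gives $1+\sum_{j\le t}4j=1+2t(t+1)$ classes used so far.

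\textbf{Step 2 (large weights).} The target is $\mathcal{W}_j=4(b-j)$ for $t<j\le b-1$. Here the diamonds of radius $j$ around lattice points overlap. For a class $\alpha$, its representatives of norm $j$ are the points $z$ with $|z|_1=j$, and $\alpha$ has weight $j$ exactly when no representative has smaller norm. I would use the tiling picture: $\mathbb{Z}^2$ is partitioned into Voronoi-type cells for the $\ell_1$ norm around lattice points. I would count the points $z$ of norm $j$ with $|z-u\pi|_1\ge j$ for all associates $u\pi$ (the closest other lattice points), and then divide by the multiplicity of ties. Concretely, the $\ell_1$-sphere of radius $j$ meets the regions closer to $\pi$ along the sides facing $\pi$. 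By the symmetry of Lemma~\ref{lemma-unit-weight-2}, it suffices to work in one quadrant and multiply by 4. In the first quadrant, points $(x,y)$ with $x+y=j$ satisfy $|x-a|+|y-b|<j$ exactly in a range of $x$ that one computes explicitly, and what remains (with boundary ties identified with their partner points) should number $b-j$.

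\textbf{Step 3 (consistency and maximum).} As a check, summing all $\mathcal{W}_j$ should give $p$. For $t<j\le b-1$ this gives $1+2t(t+1)+4\sum_{j=t+1}^{b-1}(b-j)=1+2t(t+1)+2(b-t-1)(b-t)$, which with $2t=a+b-1$ simplifies to $a^2+b^2$. This also shows that no class has weight $\ge b$, which gives the final claim. The main obstacle is Step 2, namely handling ties between representatives, so that classes rather than points are counted. If the direct quadrant count becomes messy, I would instead count classes of weight $\le j$ via the complement: show that the classes of weight $>j$ number exactly $2(b-j-1)(b-j)$ by exhibiting them as a triangular region near the "deep holes" of the lattice, and then take differences.
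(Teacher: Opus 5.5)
Your preliminary fact, Step~1, and the arithmetic of Step~3 are correct. The preliminary fact is that the only nonzero points of $(\pi)$ with $\ell_1$-norm at most $a+b$ are the four associates, shown via $x^2+y^2\le (a+b)^2<2p$. Step~3 does yield the maximum-weight claim once the counts are known.

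The gap is Step~2, which is the real content of the theorem for $t<j\le b-1$. It is a plan ending in ``should number $b-j$''. The tie problem you call the main obstacle is left open, and the deep-hole fallback is not carried out either. The paper gives no argument of its own here; it quotes \cite[Theorem 6]{TIT-MBG-metrics}, so your proof has to stand alone. Two points inside the plan are also unjustified or wrong.
\begin{itemize}
\item[(i)] Restricting to the associates $u\pi$ is not covered by your preliminary. A shorter representative $z-\lambda$ of a point with $|z|_1=j$ only forces $|\lambda|_1\le 2j-1\le 2b-3$. This exceeds $a+b$ once $b-a\ge 5$ (e.g.\ $\pi=1+6i$).
\item[(ii)] In the first quadrant, the neighbour $\pi$ alone does not give $b-j$. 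Discarding the points closer to $\pi$ leaves $(b-a+1)/2$ points with $x,y\ge 0$, independent of $j$, which is $j-t$ too many. This surplus is not a tie effect: it consists of the points closer to $-i\pi=b-ai$.
\end{itemize}

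The missing ideas are these. For $\lambda=\pi(c+di)$ one has $|\lambda|_1=\max\bigl(|(a+b)c-(b-a)d|,\,|(b-a)c+(a+b)d|\bigr)$. A sign check on $c,d$ then shows that every nonzero $\lambda\in(\pi)$ other than $\pm\pi,\pm i\pi$ has $|\lambda|_1\ge 2b$. Hence for $j\le b-1$ only associates can undercut $z$.

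Next, $|w|_1\equiv \mathrm{Re}\,w+\mathrm{Im}\,w \pmod 2$. So two representatives of common norm $j$ differ by some $\lambda$ with $|\lambda|_1$ even and at most $2j$. Associates have odd norm $a+b$, so this forces $2j\ge 2b$. Thus for $j\le b-1$ there are no ties at all. Classes of weight $j$ are therefore in bijection with points $z$ satisfying $|z|_1=j$ and $|z-u\pi|_1>j$ for every unit $u$.

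Now take $z=x+yi$ with $x,y\ge 0$ and $x+y=j>t$. Only $\pi$ and $-i\pi$ are within reach, and:
\begin{itemize}
\item[(a)] $|z-\pi|_1>j$ if and only if $x\ge j-\tfrac{b-a-1}{2}$, a bound that exceeds $a$;
\item[(b)] $|z+i\pi|_1=a+b+j-2x>j$ if and only if $x\le t$.
\end{itemize}
That gives exactly $b-j$ values of $x$, all with $x,y>0$. Rotating by $i$ yields $\mathcal{W}_j=4(b-j)$.
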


\begin{prop}\label{prop-special-weight-2}
	Let $\pi = a+bi$ be the Gaussian prime, where $0 < a < b = a+1$ and $p = a^2 + (a+1)^2 \equiv 1 \pmod{4}$ is a prime.
	Then each element in $\mathbb{F}_{p}^{\ast}$ can be expressed as $x+y\gamma$, where $|x| + |y| \leqslant a$.
	Moreover, we have $wt_{\pi}(x+y\gamma) = m$ if $|x| + |y| = m \leqslant a$, where $\gamma = 2a+1$.
\end{prop}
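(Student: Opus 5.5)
The plan is a counting argument that combines the weight distribution of \cite[Theorem 6]{TIT-MBG-metrics} with the isomorphism $\eta$. First I verify that $\gamma=2a+1$ is the image of $i$ under $\eta^{-1}$. With $\pi=a+(a+1)i$ we have $i\equiv -a/(a+1)\pmod\pi$ inside $\mathbb{Z}_p$, since $a+(a+1)i\equiv 0$. So it suffices to check that $(2a+1)(a+1)\equiv -a\pmod p$. Expanding gives $(2a+1)(a+1)+a=2a^2+4a+1=p+2a$. This is off by $2a$, so I will instead test $\gamma=\pm(2a+1)$ against $\gamma^2\equiv-1$. Indeed $(2a+1)^2=4a^2+4a+1=2p-1\equiv-1$, so $2a+1$ is a square root of $-1$ in $\mathbb{F}_p$. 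The only care needed is that $\eta$ may send $2a+1$ to $i$ or to $-i$. By Lemma~\ref{lemma-unit-weight-2} and the symmetry $y\mapsto -y$ of the condition $|x|+|y|=m$, either choice gives the same conclusion. Hence $x+y\gamma$ corresponds under $\eta$ to the Gaussian integer $x\pm yi$ modulo $\pi$, which has a representative of $\ell_1$-norm $|x|+|y|$. This immediately gives the upper bound ${\rm wt}_\pi(x+y\gamma)\le |x|+|y|$.

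Second, I specialize the weight distribution to $b=a+1$. Then $t=(a+b-1)/2=a$, and the maximum weight is $b-1=a$, so the third case is vacuous. Every nonzero element therefore has weight $j\in\{1,\dots,a\}$, and there are exactly $\mathcal{W}_j=4j$ elements of weight $j$. The consistency check is $1+\sum_{j=1}^a 4j=1+2a(a+1)=p$.

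Third, I count the pairs. For each $m$ with $1\le m\le a$ there are exactly $4m$ integer pairs $(x,y)$ with $|x|+|y|=m$. I claim that the map $(x,y)\mapsto x+y\gamma$ from $\{(x,y):|x|+|y|\le a\}$ to $\mathbb{F}_p$ is injective. Indeed, if two such pairs gave the same element, their difference $(x',y')$ would satisfy $x'+y'i\equiv 0\pmod\pi$ with $|x'|+|y'|\le 2a$. But every nonzero multiple of $\pi$ has $\ell_1$-norm at least the minimum over the lattice $\pi\mathcal{G}$. That lattice is spanned by $a+(a+1)i$ and $-(a+1)+ai$, and its nonzero vectors have $\ell_1$-norm at least $2a+1$. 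This is the step I expect to need the most care. I would argue it via $|x'|+|y'|\ge\sqrt{x'^2+y'^2}\cdot 1$ only when that suffices. More robustly, a nonzero multiple $\beta\pi$ has norm $\mathcal{N}(\beta)p\ge p$, and $(|x'|+|y'|)^2\ge x'^2+y'^2\ge p=2a^2+2a+1>(2a)^2/2$. That bound is not enough directly, so I would instead enumerate the small multiples $\beta\in\{1,i,1+i,\dots\}$: the multiples with $\mathcal{N}(\beta)\le 2$ have $\ell_1$-norms $2a+1$ and $|{-1}|+|2a+1|=2a+2$, and larger $\beta$ give Euclidean norm exceeding $2a$. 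Since there are $1+2a(a+1)=p$ pairs, injectivity makes the map a bijection onto $\mathbb{F}_p$, which proves the first claim.

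Finally, suppose $|x|+|y|=m$. We already have ${\rm wt}_\pi(x+y\gamma)\le m$. The elements $x+y\gamma$ with $|x|+|y|\le m-1$ are distinct, number $1+2m(m-1)$, and each has weight at most $m-1$. By the distribution, this count equals the total number of elements of weight at most $m-1$. So these elements are exactly the elements of weight $\le m-1$. Consequently no element with $|x|+|y|=m$ can have weight below $m$, and its weight equals $m$.
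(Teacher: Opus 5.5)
Your proof is correct, but it reaches the result by a different route from the paper's.

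The paper's route has three steps:
\begin{itemize}
\item It checks $(2a+1)^4\equiv 1$ and asserts $\eta(\gamma)=i$.
\item It takes the covering statement (every element of $\mathbb{F}_p^\ast$ is $x+y\gamma$ with $|x|+|y|\le a$) from the proof of \cite[Theorem 6]{TIT-MBG-metrics}.
\item It assigns weights through the cosets of $H=\{\pm1,\pm\gamma\}$: each coset is claimed to contain a unique $x+y\gamma$ with $x,y\ge 0$, and these elements are counted against the cosets.
\end{itemize}

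You prove the covering yourself by a packing argument. Every nonzero element of $\pi\mathcal{G}$ has $\ell_1$-norm at least $2a+1$, so $(x,y)\mapsto x+y\gamma$ is injective on the diamond $|x|+|y|\le a$. That diamond has exactly $1+2a(a+1)=p$ points, so the map is a bijection. You then determine the weights by comparing nested diamonds with the counts $\mathcal{W}_j=4j$.

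Your route gains self-containedness and precision:
\begin{itemize}
\item Checking $\gamma^2=-1$ is the right test, since $\gamma^4=1$ alone does not exclude $\gamma=\pm1$.
\item Your first computation, $(2a+1)(a+1)\equiv a$, actually shows $\gamma\equiv -i\pmod{\pi}$, not $i$. Your sign-agnostic treatment absorbs this, but a write-up should state it directly rather than as a false start.
\item You avoid the paper's coset count, which is loose on the axes. For $x>0$, both $x$ and $x\gamma$ have nonnegative coordinates in $xH$. Also, there are $a(a+3)/2$ nonnegative pairs with $1\le x+y\le a$, not $a(a+1)/2=(p-1)/4$.
\end{itemize}
The paper's route is shorter only because it outsources the covering to the citation.

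Two simplifications are available:
\begin{itemize}
\item Since $2p=(2a+1)^2+1$, the $\ell_1$-norm of $\beta\pi$ is at least its Euclidean length $\sqrt{\mathcal{N}(\beta)\,p}\ge\sqrt{2p}>2a+1$ for every non-unit $\beta$. So only the four unit multiples need to be inspected.
\item The same packing bound gives the weight without \cite[Theorem 6]{TIT-MBG-metrics}. Suppose $|x|+|y|=m\le a$ and there were a representative of $\ell_1$-norm below $m$. It would differ from $x\pm yi$ by a nonzero multiple of $\pi$ of $\ell_1$-norm at most $2m-1<2a+1$, which is impossible.
\end{itemize}
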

\begin{proof}
	It is easy to find that
	\[ (2a+1)^4 = 16 a^4 + 32 a^3 + 24 a^2 + 82 + 1 = (8 a^2 + 8 a)(2 a^2 + 2a + 1) + 1, \]
	which implies that $(2a+1)^4 \equiv 1 \pmod{(2a^2+2a+1)}$, i.e., $\gamma^4 = 1$ and $\eta(\gamma) = i$.
	Hence, each element in $\mathbb{F}_{p}^{\ast}$ can be expressed as $x+y\gamma$.
	
	Since $b = a+1$, we have $t = (a+b-1)/2 = a$ and
	\[ \frac{p-1}{4} = \frac{a^2 + (a+1)^2 - 1}{4} = \frac{a(a+1)}{2} = 1 + 2 + \dots + a. \]
	According to the proof of \cite[Theorem 6]{TIT-MBG-metrics}, all the elements of $\mathbb{F}_{p}^{\ast}$ can expressed by $x+y \gamma$, where $|x| + |y| \leqslant t = a$.
	Hence, there are exactly $4j$ elements of Mannheim weight $j$ in $\mathcal{G}_{\pi}^{\ast}$ where $1 \leqslant j \leqslant a$.
	
	Let $\alpha = x + y \gamma$.
	Then \[ \alpha H = \{ x+y\gamma, -y+x \gamma, -x-y\gamma, y-x\gamma \}. \]
	That is to say, in each coset, there is exactly one element $x+y\gamma$ such that $x,y$ are both non-negative integers.
	Since there are exactly $(p-1)/4$ cosets and exactly $(p-1)/4$ elements $x_m + y_m \gamma$ with non-negative integers $x_m,y_m$ such that $x_m + y_m = m \leqslant a$,
	the element $x_m + y_m \gamma$ has Mannheim weight $m$ for each $1 \leqslant m \leqslant a$.
\end{proof}

\begin{rem}
	According to \Cref{prop-special-weight-2}, the set of the elements of weight $m \leqslant a$ is
	\[ S_m = mH \cup (m-1+\gamma)H \cup (m-2+2\gamma) \cup \dots \cup (m-k+k\gamma) \cup \dots \cup (1+(m-1)\gamma)H. \]
	Since $H$ is a subgroup of $\mathbb{Z}_p^{\ast}$, we have the coset decomposition:
	\[ \mathbb{Z}_p^{\ast} = H \cup S_2 H \cup S_3H \cup \dots \cup S_m H. \]
\end{rem}

\begin{coro}\label{coro-special-weight-2}
	Let $\pi = a+bi$ be the Gaussian prime, where $0 < a < b = a+1$ and $p = a^2 + (a+1)^2 \equiv 1 \pmod{4}$ is a prime.
	Let
	\[ \Gamma_m= \{ m a + (m+1), \dots, m a + j, \dots, \ (m+1)a \}, \quad m+1 \leqslant j \leqslant a, \quad 0 \leqslant m \leqslant a-1. \]
	Then $ma+j$ has Mannheim weight $j$ when $m$ is even and Mannheim weight $a+m+1-j$ when $m$ is odd.
\end{coro}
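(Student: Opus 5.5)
The plan is to compute, for each integer $ma+j$ in $\Gamma_m$, an explicit Gaussian integer $x+yi$ congruent to it modulo $\pi$ with $|x|+|y|\leqslant a$. Then \Cref{prop-special-weight-2} applies directly. It says that every element of $\mathbb{F}_p^\ast$ written as $x+y\gamma$ with $|x|+|y|=m'\leqslant a$ has Mannheim weight exactly $m'$. Here $\gamma=2a+1$ corresponds under $\eta$ to a unit $\pm i$, and by \Cref{lemma-unit-weight-2} the sign of $i$ is immaterial. So it suffices to exhibit a short representative; minimality then comes for free from that proposition.

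The relations to use come from $\pi=a+(a+1)i\equiv 0$ and $i\pi=-(a+1)+ai\equiv 0$. These give $a\equiv -(a+1)i$ and $a+1\equiv ai$, hence $2a+1\equiv -i$, i.e.\ $2a\equiv -1-i \pmod{\pi}$.

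\emph{Case $m=2k$ even.} Since $ma=k\cdot 2a\equiv -k-ki$, we get $ma+j\equiv (j-k)-ki$. Because $j\geqslant m+1=2k+1>k$, this representative has $|x|+|y|=(j-k)+k=j\leqslant a$. By \Cref{prop-special-weight-2}, ${\rm wt}_\pi(ma+j)=j$.

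\emph{Case $m=2k+1$ odd.} First write $ma+j\equiv (a+j-k)-ki$. The natural representative is too long, so I shift it by the lattice element $i\pi-\pi=-(2a+1)-i$, which is $\equiv 0$. This gives
\[
ma+j\equiv (j-k-1-a)-(k+1)i .
\]
Since $j\leqslant a$, we have $|j-k-1-a|=a+k+1-j$. Hence $|x|+|y|=a+2k+2-j=a+m+1-j$. The condition $j\geqslant m+1=2k+2$ gives $a+m+1-j\leqslant a$, and $j\leqslant a$ gives $a+m+1-j\geqslant m+1\geqslant 1$. So \Cref{prop-special-weight-2} yields weight $a+m+1-j$.

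The only real obstacle is finding the right lattice shift in the odd case. Naive reductions, such as subtracting only $\pi$ or adding only $i\pi$, produce representatives whose length exceeds $a$, and those are useless for invoking the proposition. Everything else is bookkeeping: checking that the inequalities $m+1\leqslant j\leqslant a$ are exactly what forces $|x|+|y|\leqslant a$ in both cases, and that the map $\eta$ lets us move between integer residues mod $p$ and Gaussian residues mod $\pi$.
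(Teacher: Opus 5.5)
Your proof is correct and is essentially the argument the paper intends. The paper states this corollary without proof as an immediate consequence of Proposition~\ref{prop-special-weight-2}: with $\gamma=2a+1$ one writes $ma+j=(j-k)+k\gamma$ for $m=2k$ and $ma+j=(j-a-k-1)+(k+1)\gamma$ for $m=2k+1$, which is exactly your computation carried through $\eta$. Your ``lattice shift'' in the odd case is therefore just the integer identity $(2k+1)a+j=(k+1)(2a+1)+(j-a-k-1)$.

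One small correction: since $\gamma\equiv -i \pmod{\pi}$, the sign of $i$ is immaterial simply because $|{-y}|=|y|$, not because of Lemma~\ref{lemma-unit-weight-2}.
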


\section{General Bounds on Codes over Gaussian Integers when $p \equiv 1 \pmod{4}$}

Throughout this section, let $\pi = a+bi \in \mathcal{G}$, $0 < a < b$, $\gcd(a,b) = 1$ and $p = a^2+b^2 \equiv 1 \pmod{4}$.
In this section, we will discuss the values and bounds on $d_{\pi}(n,k)$, the maximum minimum Mannheim distance among all $[n,k]$ codes.

\begin{prop}
	Let $n$ and $k$ be positive integers such that $n \geqslant k$.
	Then $d_{\pi}(n,k) \leqslant d_{\pi}(d_H(n,k),1)$.
\end{prop}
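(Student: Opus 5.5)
The plan is to bound the minimum Mannheim distance of an arbitrary $[n,k]$ code $\mathcal{C}$ over $\mathcal{G}_\pi$ by looking only at the one-dimensional subcode spanned by a codeword of smallest Hamming weight. Taking the maximum over all such $\mathcal{C}$ then gives the inequality.

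\emph{Step 1.} Let $\mathcal{C}$ be any $[n,k]$ code and put $w=d_H(\mathcal{C})$. By definition of $d_H(n,k)$ we have $w\leqslant d_H(n,k)$. Choose $\vec{c}\in\mathcal{C}$ with Hamming weight exactly $w$, and let $S$ be its support, so $|S|=w$.

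\emph{Step 2.} By linearity, every scalar multiple $\lambda\vec{c}$ with $\lambda\in\mathcal{G}_\pi^\ast$ lies in $\mathcal{C}$. Each such multiple is supported on $S$, and its Mannheim weight is the Mannheim weight of its restriction to $S$. Let $\mathcal{D}$ be the $[w,1]$ code over $\mathcal{G}_\pi$ spanned by $\vec{c}|_S$. Every nonzero codeword of $\mathcal{D}$ has the form $(\lambda\vec{c})|_S$, so
\[
d_\pi(\mathcal{C})\leqslant \min_{\lambda\neq 0}{\rm wt}_\pi(\lambda\vec{c})=d_\pi(\mathcal{D})\leqslant d_\pi(w,1).
\]

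\emph{Step 3.} It remains to show that $d_\pi(m,1)$ is nondecreasing in $m$, so that $d_\pi(w,1)\leqslant d_\pi(d_H(n,k),1)$. Given any $[m,1]$ code with generator $\vec{g}$, append zeros to obtain $(\vec{g},0,\ldots,0)$ of length $m'\geqslant m$. This spans an $[m',1]$ code. Since zero coordinates contribute nothing, its weight distribution is the same, so $d_\pi(m',1)\geqslant d_\pi(m,1)$.

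Combining the three steps gives $d_\pi(\mathcal{C})\leqslant d_\pi(d_H(n,k),1)$ for every $[n,k]$ code $\mathcal{C}$, and hence the proposition. No step is a serious obstacle. The only points needing care are that the restricted code $\mathcal{D}$ really has dimension $1$, which holds because $\vec{c}|_S\neq 0$, and the monotonicity in Step 3.
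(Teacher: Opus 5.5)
Your proof is correct. It rests on the same core observation as the paper's: the nonzero multiples $\lambda\vec{c}$ of a minimum-Hamming-weight codeword, restricted to its support, form a one-dimensional code whose length is that Hamming weight.

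The difference is where you start. The paper fixes a Hamming-optimal $[n,k]$ code $\mathcal{C}_0$ and shows $d_\pi(\mathcal{C}_0)\leqslant d_\pi(d_H(n,k),1)$. Taken literally, this only bounds the Mannheim distance of that particular code. But $d_\pi(n,k)$ is a maximum over all $[n,k]$ codes, and a Mannheim-optimal code may have Hamming distance strictly smaller than $d_H(n,k)$.

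Your Step 1 takes an arbitrary $\mathcal{C}$ with $d_H(\mathcal{C})\leqslant d_H(n,k)$. Your Step 3 shows $d_\pi(m,1)$ is monotone in $m$ by zero-padding. Together these close exactly this gap, so your version is the complete form of the argument the paper sketches. You also make explicit the passage to the $[w,1]$ code $\mathcal{D}$, which the paper leaves implicit.
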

\begin{proof}
	Let $\mathcal{C}$ be an $[n,k]$ code with minimum Hamming distance $d_H(n,k)$.
	Let $\vec{c} \in \mathcal{C}$ be the codeword of Hamming weight $d_H(n,k)$.
	Then $d_{\pi}(\mathcal{C}) \leqslant w_{\pi}(\vec{c})$.
	The result form that $\min \{ w_{\pi}(\lambda \vec{c}): 0 \neq \lambda \in \mathcal{G}_{\pi} \} \leqslant d_{\pi}(d_H(\mathcal{C}),1)$.
\end{proof}

\begin{prop}
	Let $\pi = a+bi \in \mathcal{G}$, $0 < a < b$, $\gcd(a,b) = 1$ and $p = a^2+b^2 \equiv 1 \pmod{4}$.
	Then we have
	\[ d_{\pi}(n_1+n_2,k) \geqslant d_{\pi}(n_1,k) + d_{\pi}(n_2,k), \]
	where $n_1,n_2$ are positive integers.
\end{prop}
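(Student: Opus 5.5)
The natural approach is the classical juxtaposition construction, which gives superadditivity of optimal minimum distance in the length.

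Let $\mathcal{C}_1$ be an $[n_1,k]$ code over $\mathcal{G}_\pi$ with $d_\pi(\mathcal{C}_1)=d_\pi(n_1,k)$, with generator matrix $G_1$ ($k\times n_1$). Let $\mathcal{C}_2$ be an $[n_2,k]$ code with $d_\pi(\mathcal{C}_2)=d_\pi(n_2,k)$, with generator matrix $G_2$ ($k\times n_2$). Both maxima are attained because there are finitely many $[n,k]$ codes. Define $\mathcal{C}$ as the code of length $n_1+n_2$ generated by the $k\times(n_1+n_2)$ matrix $G=(G_1\mid G_2)$.

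First I check the dimension. Since $G_1$ already has rank $k$ over the field $\mathcal{G}_\pi\cong\mathbb{F}_p$, so does $G$, and hence $\mathcal{C}$ is an $[n_1+n_2,k]$ code. Every codeword has the form $\vec{m}G=(\vec{m}G_1,\vec{m}G_2)$ for some $\vec{m}\in\mathcal{G}_\pi^k$.

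Next I bound the minimum distance. For $\vec{m}\neq\vec{0}$, both $\vec{m}G_1$ and $\vec{m}G_2$ are nonzero, because each $G_j$ has full row rank $k$. The Mannheim weight is additive over coordinates by definition, so
\[ {\rm wt}_\pi(\vec{m}G)={\rm wt}_\pi(\vec{m}G_1)+{\rm wt}_\pi(\vec{m}G_2)\geqslant d_\pi(n_1,k)+d_\pi(n_2,k). \]
Linearity gives $d_\pi(\mathcal{C})$ equal to the minimum nonzero weight. Then $d_\pi(n_1+n_2,k)\geqslant d_\pi(\mathcal{C})\geqslant d_\pi(n_1,k)+d_\pi(n_2,k)$.

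There is no real obstacle here. The only point needing care is that both blocks are simultaneously nonzero, which requires $n_1,n_2\geqslant k$ so that both $[n_j,k]$ codes exist. This is implicit in the statement, since $d_\pi(n,k)$ is only defined for $n\geqslant k$.
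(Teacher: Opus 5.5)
Your proof is correct and is the same juxtaposition argument the paper uses: take generator matrices $G_1, G_2$ of optimal $[n_1,k]$ and $[n_2,k]$ codes and form $(G_1 \mid G_2)$. You additionally make explicit two points the paper leaves implicit: both blocks of $\vec{m}G$ are nonzero because each $G_j$ has full row rank, and the statement needs $n_1,n_2\geqslant k$.
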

\begin{proof}
	Let $G_{m,k}$ be the generator matrix of an $[m,k]$ code, and $G' = ( G_{n_1,k}, G_{n_2,k} )$.
	Then $d_{\pi}(n_1+n_2,k) \geqslant d_{\pi}(n_1,k) + d_{\pi}(n_2,k)$.
\end{proof}

\begin{coro}
	Let $\pi = a+bi \in \mathcal{G}$, $0 < a < b$, $\gcd(a,b) = 1$ and $p = a^2+b^2 \equiv 1 \pmod{4}$.
	If $n = qs + r$, where $q,s,r$ are non-negative integers, $q > 0$ and $0 \leqslant r < q$,
	then we have
	\[ s \cdot d_{\pi}(q,k) + d_{\pi}(r,k) \leqslant d_{\pi}(n,k), \]
	where $d_{\pi}(r,k) = 0$ if $r < k$.
\end{coro}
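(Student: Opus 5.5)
The plan is to iterate the superadditivity inequality $d_{\pi}(n_1+n_2,k) \geqslant d_{\pi}(n_1,k) + d_{\pi}(n_2,k)$ from the preceding proposition, splitting the length $n = qs + r$ into $s$ blocks of length $q$ and one block of length $r$.

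\textbf{Step 1: the $s$ blocks of length $q$.} We prove by induction on $s \geqslant 1$ that
\[ d_{\pi}(sq,k) \geqslant s\, d_{\pi}(q,k). \]
The case $s=1$ is trivial. For the inductive step, apply the proposition with $n_1 = (s-1)q$ and $n_2 = q$. This requires $(s-1)q \geqslant 1$, so it is legitimate once $s \geqslant 2$.

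The case $s=0$ gives $n = r$, and the claim reads $d_\pi(r,k) \leqslant d_\pi(r,k)$, which is trivial. If $q<k$, then $d_\pi(q,k)$ is undefined. We would adopt the stated convention that it equals $0$, since no $[q,k]$ code exists. The same convention handles any shorter length.

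\textbf{Step 2: appending the block of length $r$.}
\begin{itemize}
\item If $r \geqslant k$, apply the proposition once more with $n_1 = sq$ and $n_2 = r$. This gives
\[ d_{\pi}(n,k) \geqslant d_{\pi}(sq,k) + d_{\pi}(r,k) \geqslant s\,d_{\pi}(q,k) + d_{\pi}(r,k). \]
\item If $r < k$ (including $r=0$), the convention $d_\pi(r,k)=0$ applies. It then suffices to show $d_\pi(n,k) \geqslant d_\pi(sq,k)$ for $n \geqslant sq$.
\end{itemize}

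\textbf{Step 3: monotonicity in $n$.} Take an optimal $[sq,k]$ code with generator matrix $G$ and pad it with $r$ zero columns to form $(G \mid 0)$. This gives an $[n,k]$ code: its rank is still $k$, and every codeword keeps exactly the same Mannheim weight. Hence $d_\pi(n,k)\geqslant d_\pi(sq,k)$. The same zero-padding trick also covers any degenerate case in Step 1 where a block length is too short to carry a $k$-dimensional code.

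\textbf{Main obstacle.} The mathematics is straightforward; the only real difficulty is the bookkeeping of these degenerate lengths.

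The superadditivity step itself relies on one observation. Concatenating generator matrices $(G_1 \mid G_2)$, with the same message vector $\vec{m}$ encoded in both blocks, gives
\[ \mathrm{wt}_\pi(\vec{m}G_1,\vec{m}G_2) = \mathrm{wt}_\pi(\vec{m}G_1)+\mathrm{wt}_\pi(\vec{m}G_2), \]
and both summands are nonzero for $\vec m\neq 0$ because $G_1$ and $G_2$ have full rank $k$. So the one point to verify at each application is that each block length is at least $k$, and when it is not, to substitute the zero-padding monotonicity argument of Step 3 instead.
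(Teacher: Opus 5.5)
Your proof is correct and is essentially the paper's argument. The paper directly concatenates $s$ copies of an optimal $[q,k]$ generator matrix with an optimal $[r,k]$ one, $G'=(G_{q,k},\dots,G_{q,k},G_{r,k})$, which is what your iterated superadditivity produces; your zero-padding for $r<k$ and your convention for $q<k$ make explicit the degenerate cases the paper passes over.
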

\begin{proof}
	Let $G_{m,k}$ be the generator matrix of an $[m,k]$ code, and
	\[ G' = ( G_{q,k}, \dots, G_{q,k}, G_{r,k} ). \]
	Then $s \cdot d_{\pi}(q,k) + d_{\pi}(r,k) \leqslant d_{\pi}(n,k)$.
\end{proof}

\subsection{Improved bounds}

Recall the following coset decomposition as in Equation \eqref{eq-coset-decomposition},
\[ \mathbb{F}_{p}^{\ast} = i_1 H \cup i_2 H \cup \dots \cup i_{(p-1)/4} H, \]
where $1 = i_1 < i_2 < \dots < i_{(p-1)/4}$.
By \Cref{lemma-unit-weight-2}, we know that all the elements of the same coset have the same Mannheim weight.
Moreover, we have the following.

\begin{prop} \label{prop-constant-weight}
	Let $\vec{w} = (i_1, i_2, \dots, i_{(p-1)/4})$ where the coordinates are from Equation \eqref{eq-coset-decomposition}.
	Then for each $\lambda \in \mathbb{F}_{p}^{\ast}$, we have
	\[ w_{\pi}(\lambda \vec{w}) = \sum_{j=1}^{(a+b-1)/2} j^2 + \sum_{j=(a+b-1)/2+1}^{b-1} j \cdot (b-j) = S(a,b). \]
\end{prop}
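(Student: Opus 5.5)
The plan is to show that multiplication by a fixed $\lambda\in\mathbb{F}_p^{\ast}$ permutes the cosets of $H$. The weight $w_{\pi}(\lambda\vec{w})$ then does not depend on $\lambda$, and I compute it at $\lambda=1$ by counting elements of each weight, using \cite[Theorem 6]{TIT-MBG-metrics}.

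\textbf{Step 1 (permutation of cosets).} By the coset decomposition \eqref{eq-coset-decomposition}, the coordinates $i_1,\dots,i_{(p-1)/4}$ of $\vec{w}$ form a complete system of representatives of $\mathbb{F}_p^{\ast}/H$. Fix $\lambda\in\mathbb{F}_p^{\ast}$. The map $xH\mapsto \lambda xH$ is a bijection of the quotient group $\mathbb{F}_p^{\ast}/H$, since its inverse is multiplication by $\lambda^{-1}$. Hence $\lambda i_1,\dots,\lambda i_{(p-1)/4}$ is again a complete system of coset representatives, one from each coset.

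\textbf{Step 2 (reduce to a sum over cosets).} By \Cref{lemma-unit-weight-2}, together with $\eta(\gamma)=i$ so that $H$ corresponds to $\{\pm1,\pm i\}$, the Mannheim weight is constant on each coset $xH$. Write $\mathrm{wt}_\pi(xH)$ for this common value. Then
\[
w_{\pi}(\lambda\vec{w})=\sum_{j}\mathrm{wt}_\pi(\lambda i_j)=\sum_{C\in\mathbb{F}_p^{\ast}/H}\mathrm{wt}_\pi(C),
\]
and the right-hand side is visibly independent of $\lambda$.

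\textbf{Step 3 (evaluate the sum).} Each coset has exactly $4$ elements, all of the same weight. So the number of cosets of weight $j$ equals $\mathcal{W}_j/4$. By \cite[Theorem 6]{TIT-MBG-metrics}, this is $j$ for $1\leqslant j\leqslant t$ and $b-j$ for $t<j\leqslant b-1$, where $t=(a+b-1)/2$. Summing weight times multiplicity gives
\[
\sum_{C}\mathrm{wt}_\pi(C)=\sum_{j=1}^{t} j\cdot j+\sum_{j=t+1}^{b-1} j\,(b-j)=S(a,b),
\]
which is the claimed value.

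\textbf{Main obstacle.} I expect the only delicate point to be the justification in Step 2 that weight is constant on cosets of $H$. This needs the identification of $\gamma$ with $i$ under $\eta$, so that $H=\{\pm1,\pm\gamma\}$ is exactly the image of the units $\{\pm 1,\pm i\}$, after which \Cref{lemma-unit-weight-2} applies. As a side check, the multiplicities $\sum_{j\le t} j+\sum_{j>t}(b-j)$ should add up to $(p-1)/4$. This follows from $\sum_j\mathcal{W}_j=p-1$ and confirms the count in Step 3.
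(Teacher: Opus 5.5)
Your proof is correct and takes the same route as the paper: multiplication by $\lambda$ permutes the cosets of $H$, so $w_{\pi}(\lambda\vec{w})$ does not depend on $\lambda$, and the value then comes from \cite[Theorem 6]{TIT-MBG-metrics}. You also spell out the step the paper leaves implicit, namely that the number of cosets of weight $j$ is $\mathcal{W}_j/4$, which is what turns that weight distribution into the formula for $S(a,b)$.
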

\begin{proof}
	Note that for each $\lambda \in \mathcal{G}_{\pi}^{\ast}$, the elements $\lambda i_1, \lambda i_2, \dots, \lambda i_{(p-1)/4}$ are still in different cosets.
	Thus, for each nonzero $\lambda$, the value of $w_{\pi}(\lambda \vec{w})$ is a constant.
	The result follows from \cite[Theorem 6]{TIT-MBG-metrics}.
\end{proof}

Next, for each $[n,k]$ code $\mathcal{C}$ with the generator matrix $G$, we can define the $[n (p-1)/4, k]$ code $\mathcal{L}(\mathcal{C})$ generated by
\[ \mathcal{L}(G) = (i_1 G, i_2 G, \dots, i_{(p-1)/4} G). \]

\begin{thm}\label{thm-weight-relation}
	Let $\mathcal{C}$ be an $[n,k]$ code over $\mathbb{F}_p$, where $p = a^2 + b^2 \equiv 1 \pmod{4}$.
	Then for each $\vec{v} \in \mathcal{C}$, we have  $w_{\pi}(\mathcal{L}(\vec{v})) = S(a,b) \cdot w_{H}(\vec{v})$.
	Moreover,
	\[ d_{\pi}(\mathcal{C}) \leqslant \frac{4 S(a,b)}{p-1} d_H(\mathcal{C}). \]
\end{thm}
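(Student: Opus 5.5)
The plan is to work coordinate by coordinate. Write $\vec{v}=(v_1,\dots,v_n)\in\mathcal{C}$. By definition of $\mathcal{L}(G)$, the image $\mathcal{L}(\vec{v})$ is the concatenation $(i_1\vec{v}, i_2\vec{v},\dots,i_{(p-1)/4}\vec{v})$. Since the Mannheim weight is additive over coordinates, we can regroup the coordinates of $\mathcal{L}(\vec{v})$ by the original position $j$:
\[
w_{\pi}(\mathcal{L}(\vec{v}))=\sum_{j=1}^{n}\sum_{l=1}^{(p-1)/4} w_{\pi}(i_l v_j)=\sum_{j=1}^{n} w_{\pi}(v_j\vec{w}),
\]
where $\vec{w}=(i_1,\dots,i_{(p-1)/4})$ is the vector of \Cref{prop-constant-weight}.

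For each $j$ there are two cases. If $v_j=0$, the inner sum is $0$. If $v_j\neq 0$, then \Cref{prop-constant-weight} gives $w_{\pi}(v_j\vec{w})=S(a,b)$. Summing over $j$ therefore yields $w_{\pi}(\mathcal{L}(\vec{v}))=S(a,b)\,w_H(\vec{v})$, which is the first claim.

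For the inequality, pick a codeword $\vec{c}\in\mathcal{C}$ with $w_H(\vec{c})=d_H(\mathcal{C})$. By \Cref{lemma-unit-weight-2}, for every nonzero $x$ the value $w_{\pi}(x)$ depends only on the coset $xH$. The cosets $i_1H,\dots,i_{(p-1)/4}H$ partition $\mathbb{F}_p^{\ast}$, so for $x\neq 0$
\[
\sum_{l=1}^{(p-1)/4} w_{\pi}(i_l x)=\frac14\sum_{y\in\mathbb{F}_p^{\ast}} w_{\pi}(y).
\]
Together with \Cref{prop-constant-weight} this gives $S(a,b)=\frac14\sum_{y\neq 0}w_{\pi}(y)$. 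We then average $w_{\pi}(\lambda\vec{c})$ over all $\lambda\in\mathbb{F}_p^{\ast}$. For each nonzero coordinate $c_j$, the map $\lambda\mapsto \lambda c_j$ is a bijection of $\mathbb{F}_p^{\ast}$, so
\[
\sum_{\lambda\neq 0} w_{\pi}(\lambda\vec{c})=d_H(\mathcal{C})\sum_{y\neq 0}w_{\pi}(y)=4S(a,b)\,d_H(\mathcal{C}).
\]
Some $\lambda$ attains at most the average, so $d_{\pi}(\mathcal{C})\le w_{\pi}(\lambda\vec{c})\le \frac{4S(a,b)}{p-1}d_H(\mathcal{C})$, since $\lambda\vec{c}\in\mathcal{C}$ is nonzero.

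The only delicate step is the identity $\sum_{y\neq0}w_{\pi}(y)=4S(a,b)$. It needs the coset-invariance of the weight, i.e.\ that multiplication by $\gamma$ in $\mathbb{F}_p$ corresponds to multiplication by $i$ in $\mathcal{G}_{\pi}$. This holds because $\eta(\gamma)=i$ and \Cref{lemma-unit-weight-2} applies. Everything else is a direct regrouping of sums.
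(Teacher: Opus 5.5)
Your proposal is correct and is essentially the paper's argument. The identity $w_{\pi}(\mathcal{L}(\vec{v}))=S(a,b)\,w_H(\vec{v})$ comes coordinatewise from Proposition~\ref{prop-constant-weight}, and the bound follows because the minimum weight over nonzero scalar multiples of a codeword is at most their average. The only difference is that you average over all $p-1$ scalars $\lambda\in\mathbb{F}_p^{\ast}$, which requires the extra identity $\sum_{y\neq 0}w_{\pi}(y)=4S(a,b)$. The paper averages only over the $(p-1)/4$ coset representatives $i_l$, so the average is exactly $\frac{4}{p-1}w_{\pi}(\mathcal{L}(\vec{v}))$ and the first claim plugs in directly; by the $H$-invariance of the weight the two averages coincide.
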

\begin{proof}
	Let $\vec{v}' = \mathcal{L}(\vec{v}) = (i_1 \vec{v}, i_2 \vec{v}, \dots, i_{(p-1)/4} \vec{v}) \in \mathcal{C}'$ for some nonzero codeword $\vec{v} = (v_1, \dots, v_n) \in \mathcal{C}$.
	By Proposition \ref{prop-constant-weight}, for each nonzero $v_j$, we have
	\[ w_{\pi}((i_1 v_j, i_2 v_j, \dots, i_{(p-1)/4} v_j)) = S(a,b), \]
	since the elements $i_1 v_j, i_2 v_j, \dots, i_{(p-1)/4} v_j$ are still in different cosets.
	Then
	\[ w_{\pi}(\vec{v}') = \sum_{j=1}^{(p-1)/4} w_{\pi}(i_j \vec{v}) = S(a,b) \cdot w_{H}(\vec{v}). \]
	Then
	\begin{align*}
		d_{\pi}(\mathcal{C})
		& = \min_{\vec{0} \neq \vec{v} \in \mathcal{C}} \min_{1 \leqslant j \leqslant (p-1)/4} w_{\pi}(i_j \vec{v}) \leqslant \min_{\vec{0} \neq \vec{v} \in \mathcal{C}} \frac{4 w_{\pi}(\mathcal{L}(\vec{v}))}{p-1} = \min_{\vec{0} \neq \vec{v} \in \mathcal{C}} \frac{4 S(a,b) \cdot w_{H}(\vec{v})}{p-1} \\
		& = \frac{4 S(a,b)}{p-1} \cdot d_{H}(\mathcal{C}),
	\end{align*}
	which completes the proof.
\end{proof}

\begin{coro}
	Let $n \geqslant k$ and $p = a^2 + b^2 \equiv 1 \pmod{4}$.
	Then
	\begin{equation}\label{eq-upper-bound}
		d_{\pi}(n,k) \leqslant \frac{4 S(a,b)}{p-1} d_H(n,k).
	\end{equation}
\end{coro}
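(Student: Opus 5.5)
The corollary follows by applying \Cref{thm-weight-relation} to a single well-chosen code and then taking the maximum over all codes. Fix $n \geqslant k$ and let $\mathcal{C}$ be any $[n,k]$ code over $\mathcal{G}_\pi \cong \mathbb{F}_p$ that is Mannheim optimal, so that $d_\pi(\mathcal{C}) = d_\pi(n,k)$. Such a code exists because there are only finitely many $[n,k]$ codes over $\mathbb{F}_p$, so the maximum defining $d_\pi(n,k)$ is attained.

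The second part of \Cref{thm-weight-relation} holds for every $[n,k]$ code over $\mathbb{F}_p$ with $p=a^2+b^2\equiv 1 \pmod 4$. Applying it to this $\mathcal{C}$ gives
\[
d_\pi(n,k) = d_\pi(\mathcal{C}) \leqslant \frac{4S(a,b)}{p-1}\, d_H(\mathcal{C}).
\]
By the definition of $d_H(n,k)$ as the largest minimum Hamming distance among all $[n,k]$ codes, we have $d_H(\mathcal{C}) \leqslant d_H(n,k)$. The factor $\frac{4S(a,b)}{p-1}$ is positive, so multiplying this inequality by it preserves the direction, and chaining the two inequalities yields \eqref{eq-upper-bound}.

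The one point that needs care is the order of quantifiers. The Mannheim-optimal code need not be Hamming-optimal, so we cannot take a Hamming-optimal code and use it on the left-hand side. The argument avoids this by starting from a Mannheim-optimal code and then bounding its Hamming distance above by the global maximum $d_H(n,k)$. No other obstacle is expected.
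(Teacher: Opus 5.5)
Your proposal is correct and follows the paper's approach. The paper states this corollary without a separate proof, as an immediate consequence of Theorem~\ref{thm-weight-relation}. Your argument, which applies that theorem to a Mannheim-optimal $[n,k]$ code and then uses $d_H(\mathcal{C}) \leqslant d_H(n,k)$, is exactly the intended one.
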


The above upper bound is tight for some special parameters.

\begin{ex}
	Let $\mathcal{S}_{k,p}$ be the Simplex code of length $(p^k-1)/(p-1)$, dimension $k$ and minimum distance $p^{k-1}$ over $\mathbb{F}_p$.
	Let $\mathcal{S}'_{k,p} = \{ (\vec{v}, 2\vec{v}, 3 \vec{v}, \dots, (p-1) \vec{v}) : \vec{v} \in \mathcal{S}_{k,p} \}$.
	Then $\mathcal{S}'_{k,p}$ has length $p^k-1$ and minimum distance $(p-1) p^{k-1}$.
	By Theorem \ref{thm-weight-relation}, for each nonzero $\vec{v} \in \mathcal{S}'_{k,p}$, we have
	\[ w_{\pi}(\vec{v}) = 4S(a,b) \cdot p^{k-1} = \frac{4 S(a,b)}{p-1} d_{H}(\vec{v}). \]
	Moreover, it is known that $\mathcal{S}'_{k,p}$ meets the Griesmer bound, since
	\[ p^k-1 = \sum_{i=0}^{k-1} \left\lceil \frac{d}{p^i} \right\rceil = \sum_{i=0}^{k-1} \left\lceil \frac{(p-1)p^{k-1}}{p^i} \right\rceil = (p-1)(1 + p + \dots + p^{k-1}). \]
	That is to say, $d_H(p^k-1, k) = (p-1) p^{k-1}$ and $d_{\pi}(p^k-1,k) = 4 S(a,b) p^{k-1}$.
\end{ex}

\begin{coro}
	Let $\pi = a+bi \in \mathcal{G}$, $0 < a < b$, $\gcd(a,b) = 1$ and $p = a^2+b^2 \equiv 1 \pmod{4}$.
	Assume that $p' = (p-1)/4$ and $n = sp'+t$, where $n,s,t$ are non-negative integers and $t<p'$.
	If $s \geqslant k$, then
	\begin{equation} \label{eq-lower-bound}
		d_{\pi}(n,k) \geqslant S(a,b) \cdot d_H(s,k) + d_{\pi}(t,k),
	\end{equation}
	where we assume $d_{\pi}(t,k) = 0$ if $t < k$.
	Moreover, we have
	\[ S(a,b) d_H\left(\left\lfloor \frac{4n}{p-1} \right\rfloor, k\right) \leqslant d_{\pi}(n,k) \leqslant \frac{4S(a,b)}{p-1} d_{H}(n,k). \]
\end{coro}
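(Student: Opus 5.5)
The plan is a concatenation construction built from the map $\mathcal{L}$, combined with \Cref{thm-weight-relation} and the subadditivity proposition $d_{\pi}(n_1+n_2,k)\geqslant d_{\pi}(n_1,k)+d_{\pi}(n_2,k)$.

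First take an $[s,k]$ code $\mathcal{C}_1$ over $\mathbb{F}_p$ with $d_H(\mathcal{C}_1)=d_H(s,k)$; such a code exists because $s\geqslant k$. Apply $\mathcal{L}$ to get the $[sp',k]$ code $\mathcal{L}(\mathcal{C}_1)$. The generator matrix $\mathcal{L}(G)$ still has rank $k$, since its first block is $G$. By \Cref{thm-weight-relation}, every nonzero codeword of $\mathcal{L}(\mathcal{C}_1)$ has the form $\mathcal{L}(\vec{v})$ with $\vec{0}\neq\vec{v}\in\mathcal{C}_1$. Each such codeword has Mannheim weight $S(a,b)\,w_H(\vec{v})\geqslant S(a,b)\,d_H(s,k)$. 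Hence $d_{\pi}(sp',k)\geqslant S(a,b)\,d_H(s,k)$.

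Next, if $t\geqslant k$, take a Mannheim-optimal $[t,k]$ code with generator matrix $G_2$ and concatenate it as $(\mathcal{L}(G),G_2)$. A nonzero message gives nonzero parts in both blocks, so the Mannheim weights add. This yields \eqref{eq-lower-bound}. If $t<k$, pad with $t$ zero columns instead. This leaves the dimension and all weights unchanged, which justifies the convention $d_{\pi}(t,k)=0$.

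For the two-sided inequality, the upper bound is exactly \eqref{eq-upper-bound}. For the lower bound, set $s=\lfloor n/p'\rfloor=\lfloor 4n/(p-1)\rfloor$. Then drop the nonnegative term $d_{\pi}(t,k)$ from \eqref{eq-lower-bound}. This needs $s\geqslant k$; otherwise $d_H(s,k)$ is undefined and should be read as $0$, and the inequality is trivial. The only delicate points are this degenerate case and checking that the padding and concatenation preserve dimension $k$. The rest is direct bookkeeping from \Cref{thm-weight-relation}.
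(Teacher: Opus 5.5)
Your proof is correct and is essentially the argument the paper intends; the paper gives no written proof of this corollary, but its ingredients are exactly yours: apply $\mathcal{L}$ to a Hamming-optimal $[s,k]$ code and use Theorem~\ref{thm-weight-relation} to get $d_{\pi}(sp',k)\geqslant S(a,b)\,d_H(s,k)$, concatenate with a Mannheim-optimal $[t,k]$ code (or $t$ zero columns when $t<k$) as in the superadditivity proposition, and take the upper bound from \eqref{eq-upper-bound}. Your checks that $\mathcal{L}(G)$ has rank $k$ and your handling of the degenerate case $\lfloor 4n/(p-1)\rfloor<k$ supply details the paper leaves implicit.
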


\subsection{Special values on $k$ and $q$}

If $k = 1$, then $d_{H}(n,1) = n$ and we have
\[ \left\lfloor \frac{4 n}{p-1} \right\rfloor \cdot S(a,b) \leqslant d_{\pi}(n,1) \leqslant \left\lfloor \frac{4 n S(a,b)}{p-1} \right\rfloor. \]
It is clear that when $(p-1) \mid 4n$, we have
\[ d_{\pi}(n,1) = \frac{4 n S(a,b)}{p-1}. \]

\begin{prop}
	Let $\pi = 2 + 3 i$.
	Then
	\begin{equation} \label{eq-upper-bound-F13-1}
		d_{\pi}(n,1) = 2n - \left\lceil \frac{n}{3} \right\rceil.
	\end{equation}
\end{prop}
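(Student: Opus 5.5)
The plan is to prove matching upper and lower bounds on $d_{\pi}(n,1)$, specializing the general bounds of this section to $\pi=2+3i$. Here $a=2$, $b=3$, $p=13$, and we have $\gamma=5$, since $5^2\equiv -1 \pmod{13}$. Hence $H=\{1,5,8,12\}$. The coset decomposition \eqref{eq-coset-decomposition} is
\[ \mathbb{F}_{13}^{\ast}=H\cup 2H\cup 4H, \qquad 2H=\{2,3,10,11\},\quad 4H=\{4,6,7,9\}. \]
By the theorem quoted from \cite[Theorem 6]{TIT-MBG-metrics}, $t=(a+b-1)/2=2$ and the maximum weight is $b-1=2$. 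So the four elements of weight $1$ are exactly $H$, and the remaining eight elements, namely $2H\cup 4H$, have weight $2$. This gives $S(2,3)=1+2+2=5$ and $4S(a,b)/(p-1)=5/3$.

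The first step is a purely arithmetic check: $2n-\lceil n/3\rceil=\lfloor 5n/3\rfloor$. I would verify this separately for $n=3m$, $3m+1$ and $3m+2$, where both sides equal $5m$, $5m+1$ and $5m+3$ respectively.

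For the upper bound, take a $[n,1]$ code generated by $\vec{v}$. By \Cref{lemma-unit-weight-2}, $w_{\pi}(\lambda\vec{v})$ depends only on the coset of $\lambda$, so it suffices to consider $\lambda\in\{1,2,4\}$. By Theorem~\ref{thm-weight-relation},
\[ \sum_{\lambda\in\{1,2,4\}} w_{\pi}(\lambda\vec{v})=5\,w_H(\vec{v})\leqslant 5n. \]
Hence the minimum of these three integers is at most $\lfloor 5n/3\rfloor$.

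For the lower bound, I would give an explicit generator depending on $n \bmod 3$. Write $n=3m+r$ and take $\vec{v}$ to be $m$ copies of $(1,2,4)$, followed by nothing if $r=0$, by $(1)$ if $r=1$, and by $(1,2)$ if $r=2$.
\begin{itemize}
\item Each block $(1,2,4)$ contributes exactly $S(2,3)=5$ to $w_{\pi}(\lambda\vec{v})$ for every nonzero $\lambda$, by Proposition~\ref{prop-constant-weight}.
\item The tail $(1)$ contributes $w_{\pi}(\lambda)\geqslant 1$.
\item For the tail $(1,2)$, I check the three coset representatives directly. For $\lambda=1$ the tail is $(1,2)$, of weight $1+2=3$. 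For $\lambda=2$ it is $(2,4)$, of weight $2+2=4$. For $\lambda=4$ it is $(4,8)$, of weight $2+1=3$. So the tail contributes at least $3$.
\end{itemize}
This gives minimum weights $5m$, $5m+1$ and $5m+3$, which matches $\lfloor 5n/3\rfloor$ in each case. Alternatively, the same lower bound follows from the superadditivity proposition together with the base values $d_{\pi}(1,1)=1$, $d_{\pi}(2,1)=3$ and $d_{\pi}(3,1)=5$.

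The only step needing genuine care is the $n\equiv 2 \pmod 3$ case. There a naive tail such as $(1,1)$ only gives weight $2$ under $\lambda=1$, so one must choose tail representatives from different cosets and check all three multipliers, as done above. Everything else reduces directly to Theorem~\ref{thm-weight-relation} and Proposition~\ref{prop-constant-weight}.
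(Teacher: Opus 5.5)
Your proposal is correct and follows essentially the same route as the paper. The upper bound $\lfloor 5n/3\rfloor$ comes from the averaging identity of Theorem~\ref{thm-weight-relation}, with $S(2,3)=5$ and three cosets, and the lower bound is the concatenation bound \eqref{eq-lower-bound}, which your explicit generator (copies of $(1,2,4)$ followed by a tail) simply unpacks. You do slightly more than the paper in one place: you verify the base value $d_{\pi}(2,1)=3$ by checking the tail $(1,2)$ against all three coset representatives, whereas the paper only asserts it.
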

\begin{proof}
	It is easy to check that $S(2,3) = 5$ and $(13-1)/4 = 3$.
	Assume that $n = 3s + t$, where $0 \leqslant t < 3$.
	So the bound \eqref{eq-upper-bound-F13-1} is true when $t = 0$.
	For otherwise, by the bounds \eqref{eq-upper-bound} and \eqref{eq-lower-bound}, we have
	\[
	5s + d_{\pi}(t,1) \leqslant d_{\pi}(n,1) \leqslant \left\lfloor \frac{5n}{3} \right\rfloor = 2n - \left\lceil \frac{n}{3} \right\rceil = 5s+2t-1 =
	\begin{cases}
		5s+1, & \text{if} \ t=1, \\
		5s+3, & \text{if} \ t=2.
	\end{cases}
	\]
	The result follows from $d_{\pi}(1,1) = 1$ and $d_{\pi}(2,1) = 3$.
\end{proof}

\begin{coro}
	Let $\pi = 2+3i \in \mathcal{G}$.
	Then
	\[ d_{\pi}(n,k) \leqslant 2(n-k+1) - \left\lceil \frac{n-k+1}{3} \right\rceil. \]
\end{coro}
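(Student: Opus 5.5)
This is a Singleton-type reduction to the case $k=1$, which the previous proposition settles. The idea is to find a nonzero codeword supported on at most $n-k+1$ coordinates and then apply the $k=1$ formula \eqref{eq-upper-bound-F13-1} to the one-dimensional code it spans.

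\textbf{Step 1: a short codeword.} Take any $[n,k]$ code $\mathcal{C}$ over $\mathcal{G}_\pi\cong\mathbb{F}_{13}$. Project $\mathcal{C}$ onto its first $k-1$ coordinates. The image has dimension at most $k-1$, so the kernel is nontrivial. This gives a nonzero $\vec{c}\in\mathcal{C}$ vanishing on those coordinates, hence supported within the last $m=n-k+1$ positions.

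\textbf{Step 2: reduce to $k=1$.} Let $\vec{c}'\in\mathcal{G}_\pi^{m}$ be the restriction of $\vec{c}$ to the last $m$ coordinates. All multiples $\lambda\vec{c}$ with $\lambda\neq 0$ lie in $\mathcal{C}$, so
\[
d_\pi(\mathcal{C})\leqslant \min_{\lambda\neq 0} {\rm wt}_\pi(\lambda\vec{c}') = d_\pi(\langle \vec{c}'\rangle).
\]
Here $\langle\vec{c}'\rangle$ is a $1$-dimensional code of length $m$. Its support may be smaller than $m$, but adding zero coordinates changes no weights. So I need monotonicity: $d_\pi(m',1)\leqslant d_\pi(m,1)$ for $m'\leqslant m$. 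This follows by padding a code of length $m'$ with zero coordinates.

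\textbf{Step 3: conclude.} Combining the two steps gives
\[
d_\pi(\mathcal{C})\leqslant d_\pi(m,1)=2m-\left\lceil m/3\right\rceil,
\]
with $m=n-k+1$. Since $\mathcal{C}$ was arbitrary, this bounds $d_\pi(n,k)$.

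An alternative route is through the proposition $d_\pi(n,k)\leqslant d_\pi(d_H(n,k),1)$ together with the Singleton bound $d_H(n,k)\leqslant n-k+1$. That route needs the same monotonicity of $d_\pi(\cdot,1)$ in the length, which is immediate from the closed formula: $m\mapsto \lfloor 5m/3\rfloor$ is nondecreasing.

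No step is a real obstacle. The only care needed is with zero coordinates when passing to the short code, and the monotonicity handles this.
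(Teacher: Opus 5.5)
Your proof is correct and takes the same route as the paper, which simply asserts $d_{\pi}(n,k)\leqslant d_{\pi}(n-k+1,1)$ and then applies the closed formula \eqref{eq-upper-bound-F13-1}; your Steps 1--2 supply the Singleton-type justification of that inequality. The monotonicity remark is harmless but not needed, since $\langle\vec{c}'\rangle$ is a subspace of $\mathcal{G}_\pi^m$ and hence already an $[m,1]$ code.
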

\begin{proof}
	The result just follows from that $d_{\pi}(n,k) \leqslant d_{\pi}(n-k+1, 1)$.
\end{proof}

\begin{ex}\label{ex-Z13}
	The above bound is tight.
	We have the following examples.
	\begin{enumerate}
		\item[$(1)$] The case $k=1$ is trivial.
		\item[$(2)$] The codes generated by
		\[
		G_{3,2} =
		\begin{pmatrix}
			1 & 0 & 2 \\
			0 & 1 & 4
		\end{pmatrix},
		G_{4,2} =
		\begin{pmatrix}
			1 & 0 & 2 & 4 \\
			0 & 1 & 4 & 2
		\end{pmatrix}
		\]
		have the minimum Mannheim distance $3$ and $5$, respectively.
	\item[$(3)$] When $k=3$, there are no linear $[5,3]$ codes meeting the bound.

	\end{enumerate}
\end{ex}

\begin{prop}
	Let $\pi = 1 + 4 i$ and $n = 4s+t$, where $0 < t < 4$.
	Then
	\begin{equation} \label{eq-bound-F17-1}
		8s+2t-1 \leqslant d_{\pi}(n,1) \leqslant 8s+2t.
	\end{equation}
\end{prop}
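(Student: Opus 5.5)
Fix $\pi = 1+4i$, so $a=1$, $b=4$, $p=17$ and $(p-1)/4 = 4$. From the weight distribution theorem of \cite[Theorem 6]{TIT-MBG-metrics} with $t=(a+b-1)/2=2$, there are $4$ elements of weight $1$, $8$ of weight $2$, and $4(4-3)=4$ of weight $3$. The four cosets of $H$ therefore carry weights $1,2,2,3$. Hence
\[ S(1,4)=1+2+2+3=8. \]
Equivalently, from the formula in Proposition~\ref{prop-constant-weight}, $S(1,4)=1^2+2^2+3\cdot1=8$. I will obtain the two inequalities in \eqref{eq-bound-F17-1} separately.

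For the upper bound I use \eqref{eq-upper-bound} with $k=1$, where $d_H(n,1)=n$:
\[ d_\pi(n,1)\le \left\lfloor \frac{4\cdot 8\, n}{16}\right\rfloor = 2n = 8s+2t. \]

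For the lower bound I use \eqref{eq-lower-bound} with $p'=4$:
\[ d_\pi(n,1)\ge S(1,4)\,d_H(s,1)+d_\pi(t,1)=8s+d_\pi(t,1). \]
This requires $s\ge 1$. If $s=0$, the same inequality holds directly, since $d_\pi(0,1)$ contributes nothing. It therefore suffices to show $d_\pi(t,1)\ge 2t-1$ for $t=1,2,3$.

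The main step is computing these three small values, which amounts to choosing a good generator vector $(g_1,\dots,g_t)$. I will identify the cosets through $\mathbb{F}_{17}$ with $\gamma=4$, since $4^2=-1$. Then $H=\{1,4,16,13\}$ has weight $1$. The cosets $2H$ and $3H$ have weight $2$; for example, $2=1+1$ and $3 = -1+\gamma$. The remaining coset $6H$ has weight $3$. A codeword $\lambda\vec g$ cycles through cosets according to $\lambda$, so the task is to choose coset labels $g_j$ whose weights sum to at least $2t-1$ for every shift.
\begin{itemize}
\item For $t=1$, any nonzero entry gives $d_\pi(1,1)=1$.
\item For $t=2$, I need two entries whose weight sum is at least $3$ under every multiplier. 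Taking $\vec g=(1,\beta)$ with $\beta$ such that $\lambda$ and $\lambda\beta$ never both have weight $1$ and never both lie in weight-$2$ cosets that sum to less than $3$ should work. I will take $\beta$ to be a generator of the coset group $\mathbb{F}_{17}^*/H\cong\mathbb{Z}_4$, which would give weight pairs summing to $3$ or $5$.
\item For $t=3$, I aim for weight at least $5$ using three distinct cosets. The worst case omits the weight-$3$ coset, leaving $1+2+2=5$.
\end{itemize}
To certify these choices I need the cyclic order of the coset weights in $\mathbb{F}_{17}^*/H$, which I will compute explicitly. Finally I will combine the three cases to conclude $8s+2t-1\le d_\pi(n,1)$.
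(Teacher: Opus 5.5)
Your proposal is correct and follows the paper's own argument: the upper bound $2n$ from \eqref{eq-upper-bound} with $S(1,4)=8$, the lower bound $8s+d_\pi(t,1)$ from \eqref{eq-lower-bound} (your remark on $s=0$ covers a case the paper skips), and the small values $d_\pi(1,1)=1$, $d_\pi(2,1)\geqslant 3$, $d_\pi(3,1)\geqslant 5$, which the paper simply asserts. For these small values you do not need the cyclic order of the cosets: for $t=2$ any $\beta\notin H$ works, since $\lambda$ and $\lambda\beta$ are never both in $H$ (for a generator $\beta$ the sums are $3,4,5,4$, not only $3$ or $5$); for $t=3$, three distinct cosets give weight $8-{\rm wt}_{\pi}(\text{omitted coset})\geqslant 5$.
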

\begin{proof}
	It is easy to check that $S(1,4) = 8$ and $(17-1)/4 = 4$.
	By the bounds \eqref{eq-upper-bound} and \eqref{eq-lower-bound}, we have
	\[
	8s + d_{\pi}(t,1) \leqslant d_{\pi}(n,1) \leqslant 2n = 8s+2t.
	\]
	The result follows from $d_{\pi}(1,1) = 1$, $d_{\pi}(2,1) = 3$, $d_{\pi}(3,1) = 5$.
\end{proof}

\begin{ex}
	Since $S(2,3) = 5$ and $d_{2+3i}(2,1) = 3$ by Example \ref{ex-Z13}, we have $d_{\pi}(n,1) = s \cdot S(a,b) + d_{\pi}(t,1)$,  when $\pi = 2+3i$.
\end{ex}

\begin{ex}
	If $p = 17$ and $\pi = 1+4i$, then $d_{\pi}(2,1) = 3$, $d_{\pi}(3,1) = 5$, $d_{\pi}(4,1) = 8$.
	Moreover, $S(1,4) = 8$.
\end{ex}

Next, we will consider the values on $d_{\pi}(n,n-1)$.

\begin{thm}
	Let $\pi = a+bi \in \mathcal{G}$, $0 < a < b$, $\gcd(a,b) = 1$ and $p = a^2+b^2 \equiv 1 \pmod{4}$.
	Then for each positive integer $n$, we have
	\[ d_{\pi}(n,n-1) =
	\begin{cases}
		2, & n > (p-1)/4, \\
		3, & n = (p-1)/4, \\
		d \geqslant 3, & n < (p-1)/4.
	\end{cases}
	\]
\end{thm}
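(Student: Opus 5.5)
The plan is to pass to the dual. An $[n,n-1]$ code $\mathcal{C}$ over $\mathbb{F}_p$ is $\langle \vec{h}\rangle^\perp$ for a single nonzero vector $\vec{h}=(h_1,\dots,h_n)$. So $\mathcal{C}=\{\vec{c} : \sum_j h_jc_j=0\}$, and I would read off small-weight codewords directly from this check equation. Throughout I use that the elements of Mannheim weight $1$ are exactly $H=\{\pm1,\pm\gamma\}$, and that $H$ is a subgroup containing $-1$.

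\textbf{Step 1: when $d_\pi(\mathcal{C})\geqslant 2$ and $\geqslant 3$.}
\begin{enumerate}
\item[(a)] If some $h_j=0$, then the unit vector $\vec{e}_j$ lies in $\mathcal{C}$, so $d_\pi(\mathcal{C})=1$. If all $h_j\neq0$, no codeword has Hamming weight $1$, so every nonzero codeword has Mannheim weight at least $2$.
\item[(b)] With all $h_j\neq 0$, a codeword of Mannheim weight $2$ must be $x\vec{e}_j+y\vec{e}_k$ with $j\neq k$ and $x,y\in H$. A single coordinate of weight $2$ is impossible by (a).
\item[(c)] Such a codeword exists iff $h_jx+h_ky=0$, i.e. $h_j/h_k=-yx^{-1}\in H$, i.e. $h_j$ and $h_k$ lie in the same coset of $H$.
\item[(d)] Hence $d_\pi(\mathcal{C})\geqslant 3$ iff the $h_j$ are nonzero and lie in pairwise distinct cosets of $H$. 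By the decomposition \eqref{eq-coset-decomposition} there are exactly $(p-1)/4$ cosets.
\end{enumerate}

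\textbf{Step 2: the three cases.}
\begin{enumerate}
\item[(i)] For $n>(p-1)/4$: by pigeonhole two $h_j$ share a coset, so every $[n,n-1]$ code has $d_\pi\leqslant 2$. Equality is attained, e.g. by $\vec{h}=(1,\dots,1)$, using Step 1(a).
\item[(ii)] For $n\leqslant (p-1)/4$: taking $h_j=i_j$, distinct coset representatives, gives $d_\pi\geqslant 3$. This settles the case $n<(p-1)/4$.
\item[(iii)] For $n=(p-1)/4$ it remains to prove that \emph{every} code with $d_\pi\geqslant 3$ has a codeword of weight exactly $3$.
\end{enumerate}

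\textbf{Step 3: the upper bound when $n=(p-1)/4$.} This is the main point of the proof. In this case $\{h_jH\}$ is all of $\mathbb{F}_p^*$. Choose any $u$ with ${\rm wt}_\pi(u)=2$; one exists since $b-1\geqslant 2$ whenever $p\geqslant 13$, by the weight distribution theorem from \cite{TIT-MBG-metrics}.
\begin{enumerate}
\item[(a)] The element $-h_1u$ lies in some coset $h_kH$, so $-h_1u=h_kv$ with $v\in H$.
\item[(b)] We cannot have $k=1$: that would force $u=-v\in H$, contradicting ${\rm wt}_\pi(u)=2$.
\item[(c)] So $\vec{c}=u\vec{e}_1+v\vec{e}_k$ is a codeword of Mannheim weight $2+1=3$.
\end{enumerate}

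\textbf{Step 4: the edge case $p=5$.} Here $(p-1)/4=1$, so the case $n=(p-1)/4$ is $n=1$ with a $[1,0]$ code. That case has to be excluded or read by convention, and I would remark on this.
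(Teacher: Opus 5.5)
Your proof is correct. For $n>(p-1)/4$ it is the paper's own argument: pigeonhole over the $(p-1)/4$ cosets of $H$ puts two nonzero check entries in the same coset, and this yields the weight-$2$ codeword $\vec{e}_j - v_jv_k^{-1}\vec{e}_k$. The difference is that the paper's written proof stops there. It never treats $n\leqslant (p-1)/4$, so the values $3$ and $d\geqslant 3$ in the statement are left unjustified.

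Your Step 1(d) gives the lower bound $3$ for every $n\leqslant (p-1)/4$ via $h_j=i_j$. It says a code has $d_\pi\geqslant 3$ exactly when its check entries are nonzero and lie in pairwise distinct cosets of $H$. Your Step 3 supplies the matching upper bound at $n=(p-1)/4$. There the cosets $h_jH$ exhaust $\mathbb{F}_p^*$, so $h_1u+h_kv=0$ with ${\rm wt}_\pi(u)=2$ has a solution $v\in H$. Moreover $k\neq 1$ because $u\notin H$. This is exactly the missing ingredient. It also shows why only the middle case is pinned down: for $n<(p-1)/4$ some coset is unrepresented, and the value can exceed $3$ (e.g.\ $d_\pi(2,1)=4$ for $\pi=4+5i$).

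Your caveats are also right. A weight-$2$ element exists only when $t=(a+b-1)/2\geqslant 2$, i.e.\ $p\geqslant 13$; then the weight distribution gives $\mathcal{W}_2=8$. For $p=5$ the case $n=(p-1)/4=1$ concerns the zero $[1,0]$ code, so the stated value $3$ there is a matter of convention.
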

\begin{proof}
	Let $\mathcal{C}$ be an $[n, n-1]$ code over $\mathcal{G}_{\pi}$ with the parity-check matrix $\vec{v} = (v_1, \dots, v_n)$.
	It $v_j = 0$ for some $j$, then $\vec{e}_j = (e_{j,1}, e_{j,2}, \dots, e_{j,n}) \in \mathcal{C}$ and $d_{\pi}(\mathcal{C}) = 1$, where $e_{j,k} = \delta_{j,k}$ and $\delta_{j,k}$ is the Kronecker symbol.
	Otherwise, all the coordinates of $\vec{v}$ are nonzero.
	If $n > (p-1)/4$, then there exist $1 \leqslant j < k \leqslant n$ such that $v_j, v_k$ belong to the same coset.
	Thus $\vec{e}_j - v_j v_k^{-1} \vec{e}_k \in \mathcal{C}$ and $d_{\pi} = 2$ since $v_j v_k^{-1} \in \{ \pm 1, \pm i\}$.
\end{proof}

\begin{ex}
	Consider the field $\mathbb{Z}_{41} \cong \mathcal{G}_{\pi}$, where $\pi = 4+5i$.
	Note that $9^4 \equiv 1 \pmod{41}$, so $\{ 1,9,32,40 \} = H \leqslant \mathbb{Z}_{41}^{\ast}$ and we have the decomposition:
	\[ \mathbb{Z}_{41}^{\ast} = H \cup 2H \cup 3H \cup 4H \cup 6H \cup 7H \cup 8H \cup 11H \cup 12H \cup 16H. \]
	Moreover, the Mannheim weight of each coset leader is listed in Table \ref{tab-Z41}.
	Consider the code generated by the vector $(1,3)$.
	It is easy to check that its minimum  Mannheim distance is $4$.
	Furthermore, $d_{\pi}(2,1) = 4$.
	\begin{table}[htbp]
		\caption{Mannheim weight of the coset leaders in $\mathbb{Z}_{41}$}
		\label{tab-Z41}
		\begin{center}
				\begin{tabular}{|c|c|c|c|c|c|c|c|c|c|c|}
					\hline
					coset & 1 & 2 & 3 & 4 & 6 & 7 & 8 & 11 & 12 & 16 \\
					\hline
					$\mathcal{G}_{4+5i}$ & 1 & 2 & 3 & 4 & $-3+i$ & $-2+i$ & $-1+i$ & $1+2i$ & $1+3i$ & $-2-i$ \\
					\hline
					weight & 1 & 2 & 3 & 4 & 4 & 3 & 2 & 3 & 4 & 4 \\
					\hline
				\end{tabular}
		\end{center}
	\end{table}
\end{ex}

\begin{ex}
	Consider the field $\mathbb{Z}_{61} \cong \mathcal{G}_{\pi}$, where $\pi = 5+6i$.
	Note that $11^4 \equiv 1 \pmod{61}$, so $\{ 1,11,50,60 \} = H \leqslant \mathbb{Z}_{61}^{\ast}$ and we have the coset decomposition $\mathbb{Z}_{61}^{\ast} = \cup_{j \in I} j H$, where
	\[ I = \{ 1,2,3,4,5,7,8,9,10,13,14,15,19,20,25 \}. \]
	Moreover, the Mannheim weight of the coset leader are listed in Table \ref{tab-Z61}.
	Consider the code generated by the vector $(1,4)$.
	It is easy to check that its minimum Mannheim distance is $5$.
	Furthermore, $d_{\pi}(2,1) = 5$.
	\begin{table}[htbp]
		\caption{Mannheim weight of the coset leaders in $\mathbb{Z}_{61}$}
		\label{tab-Z61}
		\resizebox{\columnwidth}{!}{
			\begin{tabular}{|c|c|c|c|c|c|c|c|c|c|c|c|c|c|c|c|}
				\hline
				coset & 1 & 2 & 3 & 4 & 5 & 7 & 8 & 9 & 10 & 13  & 14 & 15 & 19 & 20 & 25   \\ \hline
				$\mathcal{G}_{5+6i}$ & 1 & 2 & 3 & 4 & 5 & -4+i & -3+i & -2+i & -1+i & 2+i & 3+i & 4+i & -3+2i & -2+2i & 3+2i \\ \hline
				weight & 1 & 2 & 3 & 4 & 5 & 5 & 4 & 3 & 2 & 3 & 4 & 5 & 5 & 4 & 5   \\ \hline
			\end{tabular}
		}
	\end{table}
\end{ex}

\section{Sphere Packing Bound} \label{sec-sphere-packing-bound}

Throughout this section, the weight refers to Mannheim weight, and $\pi = a+bi$ is a Gaussian prime such that $0 < a < b$, $\gcd(a,b) = 1$ and $p = a^2+b^2 \equiv 1 \pmod{4}$.
In this section, we will determine the Sphere Packing Bound with respect to Mannheim weight.

Let $V_{\pi}(s,n)$ be the number of vectors of length $n$ and weight $\leqslant s$.
Let $W_{\pi}(s,n)$ be the number of vectors of length $n$ and weight $s$.

According to  \cite[Theorem 6]{TIT-MBG-metrics}, it is known that the maximum weight of a nonzero element in $\mathbb{F}_p$ should be $b-1$.
For the vector $\vec{v} = (v_1, \dots, v_n) \in \mathbb{F}_p^n$, let $x_j$ be the number of its coordinates of weight $j$, $0 \leqslant j \leqslant b-1$.
Thus, $\vec{v}$ has weight $x_1 + 2 x_2 + \dots + (b-1) x_{b-1}$.
Let $\mathcal{N}_{\pi,s}$ be the set of all non-negative integer solution $(x_1, x_2, \dots, x_{b-1})$ of the equation
\[ x_1 + 2 x_2 + \dots + (b-1) x_{b-1} = s. \]

For each possible solution, there are exactly $\binom{n}{x_1 + \dots + x_{b-1}}$ choices of nonzero coordinates and there are exactly $\binom{x_1 + \dots + x_{b-1}}{x_1,\dots,x_{b-1}}$ arrangements for each above choice.
Hence
\[ W_{\pi}(s,n) = \sum_{(x_1,\dots,x_{b-1}) \in \mathcal{N}_{\pi,s}} \binom{n}{x_1 + \dots + x_{b-1}} \binom{x_1 + \dots + x_{b-1}}{x_1,\dots,x_{b-1}} \prod_{i=1}^{b-1} W_{\pi}(i,1)^{x_i}, \]
and
\[ V_{\pi}(s,n) = W_{\pi}(0,n) + W_{\pi}(1,n) + \dots + W_{\pi}(s-1,n) + W_{\pi}(s,n). \]

Similar to the classical sphere packing bound for Hamming distance, we have the following.
\begin{thm}
	Let $\mathcal{C}$ be a code of length $n$ over $\mathcal{G}_{\pi}$, and $d$ be its minimum Mannheim distance, where $\pi = a+bi \in \mathcal{G}$, $0 < a < b$, $\gcd(a,b) = 1$ and $p = a^2+b^2 \equiv 1 \pmod{4}$.
	Let $e = \lfloor (d-1)/2 \rfloor$. Then
	\[ |\mathcal{C}| \cdot V_{\pi}(e,n) \leqslant p^n.  \]
\end{thm}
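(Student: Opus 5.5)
The plan is the standard packing argument, carried out in the Mannheim metric. For each codeword $\vec{c}\in\mathcal{C}$ I will look at the Mannheim ball
\[ B(\vec{c},e)=\{\vec{x}\in\mathcal{G}_\pi^n : d_\pi(\vec{x},\vec{c})\leqslant e\}. \]
The argument rests on two facts: these balls are pairwise disjoint, and each of them has exactly $V_\pi(e,n)$ elements. Once both are known, the union of the balls over all codewords is a subset of $\mathcal{G}_\pi^n$, which has $p^n$ elements. Counting then gives $|\mathcal{C}|\cdot V_\pi(e,n)\leqslant p^n$.

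\textbf{Disjointness.} Suppose $\vec{x}\in B(\vec{c}_1,e)\cap B(\vec{c}_2,e)$ with $\vec{c}_1\neq\vec{c}_2$. The Mannheim distance is a metric by \cite[Theorem 2]{TIT-MBG-metrics}, so the triangle inequality gives
\[ d_\pi(\vec{c}_1,\vec{c}_2)\leqslant d_\pi(\vec{c}_1,\vec{x})+d_\pi(\vec{x},\vec{c}_2)\leqslant 2e\leqslant d-1. \]
This contradicts the definition of $d$ as the minimum Mannheim distance. Symmetry of $d_\pi$ is needed here, namely ${\rm wt}_\pi(-\vec{y})={\rm wt}_\pi(\vec{y})$. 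This follows from \Cref{lemma-unit-weight-2} with $u=-1$, and it is in any case part of the metric property.

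\textbf{Ball size.} The map $\vec{x}\mapsto\vec{x}-\vec{c}$ is a bijection from $B(\vec{c},e)$ onto $B(\vec{0},e)$, because $d_\pi(\vec{x},\vec{c})={\rm wt}_\pi(\vec{x}-\vec{c})$ by definition. So every ball has the same size as the set of vectors of weight at most $e$, and that number is $V_\pi(e,n)$ by definition. The explicit formula for $V_\pi(e,n)$ given earlier does not need to be checked for the inequality itself, since only the definition of $V_\pi(e,n)$ as a count is used.

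There is no real obstacle here. The only point needing care is that the Mannheim weight is translation-compatible and symmetric, so that $d_\pi$ is a genuine translation-invariant metric. This is exactly what the cited results supply. Note also that the argument does not use linearity of $\mathcal{C}$, so it holds for any code.
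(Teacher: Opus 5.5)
Your proof is correct and is the standard packing argument the paper relies on. The paper gives no separate proof, only remarking that the bound holds ``similar to the classical sphere packing bound for Hamming distance''. Your argument supplies those details: disjointness of the radius-$e$ balls via the triangle inequality, and the translation $\vec{x}\mapsto\vec{x}-\vec{c}$ identifying each ball with the set counted by $V_\pi(e,n)$.
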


\begin{ex}
	We have $W_{\pi}(1,n) = 4n$, and $W_{\pi}(2,n) = 8n + 16 \binom{n}{2} = 8n^2$.
\end{ex}

\begin{ex}
	The OMEC(one Mannheim error-correcting code)) code $\mathcal{C}$ over $\mathcal{G}_{\pi}$ has parameter $[n = (p^r-1)/4, n-r, d_{\pi} = 3]$, where $p \equiv 1 \pmod{4}$.
	It is known to be a perfect code since $|\mathcal{C}| \cdot W_{\pi}(1,n) = p^{n-r} \cdot (1+4n) = p^n$.
\end{ex}


\begin{lem}
	If there exists a $2$-perfect code $\mathcal{C}$ of length $n$ and dimension $n-r$ over $\mathcal{G}_{\pi}$, then $p^r = 8n^2 + 4n + 1$ and $r \in \{ 1,2,3,4,5 \}$.
\end{lem}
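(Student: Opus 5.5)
Perfectness means equality in the sphere packing bound of the preceding theorem. With $e=2$ this reads $p^{n-r}\cdot V_\pi(2,n)=p^n$. The example above gives $W_\pi(1,n)=4n$ and $W_\pi(2,n)=8n^2$; the latter assumes $b\ge 3$, so that there are $8$ elements of weight $2$ in $\mathcal{G}_\pi$, and this holds for every $p\equiv 1\pmod 4$ with $p\ge 13$. Hence $V_\pi(2,n)=1+4n+8n^2$, and perfectness is equivalent to $p^r=8n^2+4n+1$. For $p=5$, i.e.\ $\pi=1+2i$, the maximal weight is $b-1=1$, so a code with $d_\pi\ge 5$ and $e=2$ cannot behave like this. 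I would handle it separately, or note that the ball count degenerates, and not dwell on it.

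For the restriction on $r$, I would rewrite the equation as a Diophantine problem. Multiplying by $2$ gives $2p^r=16n^2+8n+2=(4n+1)^2+1$. So we need $x^2+1=2p^r$ with $x=4n+1$, equivalently $x^2-2p^r=-1$. Over the Gaussian integers, $x^2+1=(x+i)(x-i)=2p^r$. Since $\gcd(x+i,x-i)$ divides $2i$ and $x$ is odd, unique factorization in $\mathcal{G}$ gives $x+i=u(1+i)\pi^r$ or $x+i=u(1+i)\bar\pi^r$ for a unit $u$. This is the case because $p=\pi\bar\pi$ and $\pi,\bar\pi$ cannot both divide $x+i$, as otherwise $p$ would divide $x+i$. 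Taking imaginary parts gives a single equation $\pm 1=\mathrm{Im}\big(u(1+i)\pi^r\big)$, a polynomial identity in $a,b$ of degree $r$.

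The key step, and the main obstacle, is proving that this equation has no solution for $r\ge 6$. The plan I would follow is to quote the classical result on $x^2+1=2y^n$. Ljunggren's theorem shows that $x^2+1=2y^4$ has only $x=1,239$. More generally, Cohn and later Bennett--Skinner-type results show that $x^2+1=2y^n$ has no solutions with $y>1$ for odd prime $n$ (Darmon--Merel). Combining these, any solution with $y=p$ prime forces $r\le 2$ or $r\in\{4\}$. That lies inside $\{1,\dots,5\}$, and the lemma as stated (with the weaker range) follows.

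If I wanted a self-contained route instead, I would try reducing $x^2+1=2p^r$ modulo small numbers and bounding by the Pell structure $x^2-2(p^{r/2})^2=-1$ for even $r$, but I expect this to be insufficient. Citing the known theorem on $x^2+1=2y^n$ is what I would actually do.

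Finally I would sanity-check the result against the introduction's claim that the first candidate is $p=29$, $n=10$, $r=2$. Indeed $8\cdot 100+40+1=841=29^2$, so the candidate fits the allowed range.
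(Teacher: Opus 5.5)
Your derivation of $p^r=8n^2+4n+1$ is the same as the paper's: equality in the sphere packing bound with $V_\pi(2,n)=1+4n+8n^2$. You are right that this requires $b\ge 3$. For $p=5$ the Mannheim metric is the Hamming metric and the ball has $8n^2-4n+1$ points, so that case should be excluded from the statement rather than ``handled''.

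The genuine difference is the bound on $r$. The paper argues that $d_\pi(\mathcal{C})\in\{5,6\}$, $d_\pi(\mathcal{C})\ge d_H(\mathcal{C})$ and $d_H(\mathcal{C})\le r+1$, ``hence $r\le 5$''. That inference does not go through. Both inequalities bound $d_H(\mathcal{C})$ from above and say nothing about $r$; one would need $r+1\le d_H(\mathcal{C})$, an MDS-type condition. The OMEC codes are $1$-perfect with $d_\pi=3$ for every $r$, which shows that this pattern of reasoning cannot bound the redundancy of a perfect code.

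Your number-theoretic route is what actually proves the lemma. It costs citing nontrivial results on $x^2+1=2y^m$, but it gives more:
\begin{itemize}
\item If $r$ has an odd prime factor $q$, the odd-exponent theorem applied to $y=p^{r/q}>1$ rules it out.
\item If $r=2^k\ge 4$, Ljunggren's theorem applied to $y=p^{r/4}$ forces $p^{r/4}=13$, so $r=4$ and $x=239$.
\item The case $r=4$ is impossible anyway, since $x=4n+1\equiv 1\pmod 4$ while $239\equiv 3\pmod 4$.
\end{itemize}
So you get $r\in\{1,2\}$ outright. This also subsumes the paper's separate case analysis of $r=3,4,5$ in the following theorem.

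When writing it up, spell out this reduction instead of saying ``combining these''. Also fix the citations, since your attributions to Bennett--Skinner and Darmon--Merel are muddled. The fact that $x^2+1=2y^m$ with $m$ odd forces $y=1$ is the classical result going back to St{\o}rmer. The even exponents are covered by Ljunggren's theorem, or by Cohn's determination of the perfect powers among the Pell numbers.
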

\begin{proof}
	If there exists a $2$-perfect code, then
	\[ V_{\pi}(2,n) \cdot |\mathcal{C}| = (8n^2 + 4n + 1) \cdot p^{n-r} = p^n, \]
	and the minimum Mannheim distance of $\mathcal{C}$ should be $5$ or $6$.
	Considering that $d_{\pi}(\mathcal{C}) \geqslant d_H(\mathcal{C})$ and $d_H(\mathcal{C}) \leqslant n-(n-r)+1 = r+1$, we have $r \leqslant 5$.
\end{proof}

\begin{thm}
	Let $1 \leqslant r \leqslant 5$.
	Then the equation $p^r = 8n^2 + 4n + 1$ has positive integer solutions if and only if $r = 1,2$.
	Moreover, when $r = 2$, the positive integer solutions satisfy $(n,p) = ((u_m + 2 v_m - 1)/4 , u_m + v_m)$, where $m$ is even and
	\begin{equation} \label{eq-solution}
		u_m = \frac{1}{2} \left( (3 + 2 \sqrt{2})^m + (3 - 2 \sqrt{2})^m \right), \quad
		v_m = \frac{1}{2\sqrt{2}} \left( (3 + 2 \sqrt{2})^m - (3 - 2 \sqrt{2})^m \right).
	\end{equation}
\end{thm}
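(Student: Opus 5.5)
The plan is to turn the sphere-packing condition into a classical Diophantine equation. Multiplying $p^r = 8n^2+4n+1$ by $2$ gives
\[ 2p^r = 16n^2+8n+2 = (4n+1)^2+1 . \]
With $x=4n+1$, the statement becomes: solve $x^2+1=2p^r$ with $x\equiv 1 \pmod 4$, $x>1$. The case $r=1$ is immediate: for instance $n=1$ gives $p=13$. So every value of $n$ with $8n^2+4n+1$ prime gives a solution. Everything therefore rests on the cases $r=2$ and $r\in\{3,4,5\}$.

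\textbf{Case $r=2$.} The equation is the negative Pell equation $x^2-2p^2=-1$. Its positive solutions are $x+p\sqrt2=(1+\sqrt2)(3+2\sqrt2)^m$ with $m\geqslant 0$. Write $(3+2\sqrt2)^m=u_m+v_m\sqrt2$, with $u_m,v_m$ exactly as in \eqref{eq-solution}. Expanding the product gives $x=u_m+2v_m$ and $p=u_m+v_m$, so $n=(u_m+2v_m-1)/4$. It remains to impose $x\equiv1\pmod 4$. The recursion
\[ u_{m+1}=3u_m+4v_m, \qquad v_{m+1}=2u_m+3v_m \]
shows that $(u_m,v_m)\bmod 4$ alternates between $(1,0)$ and $(3,2)$. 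Hence $x\equiv1\pmod4$ exactly when $m$ is even. Then $m=0$ gives $n=0$, and $m=2$ gives $(n,p)=(10,29)$, matching the $[10,8,5]$ candidate over $\mathbb{F}_{29}$. Here I read ``positive integer solutions'' as not requiring $p$ to be prime; primality of $p=u_m+v_m$ is an additional filter.

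\textbf{Case $r=4$.} The equation reads $x^2+1=2(p^2)^2$. Ljunggren's theorem says the only positive solutions of $X^2+1=2Y^4$ are $X=1$ and $X=239$. The first gives $n=0$. The second has $239\equiv 3\pmod 4$, so it gives no integer $n$.

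\textbf{Cases $r=3,5$.} I would factor in $\mathcal{G}$:
\[ (x+i)(x-i)=2p^r . \]
Since $x$ is odd, $\gcd(x+i,x-i)=(1+i)$, so $x+i=\varepsilon(1+i)\beta^r$ for a unit $\varepsilon$ and some $\beta=c+di$ with $\mathcal{N}(\beta)=p$. Because $r$ is odd, every unit is an $r$-th power, so $\varepsilon$ can be absorbed into $\beta$. Comparing imaginary parts then gives a binary form equation $F_r(c,d)=1$ of degree $r$. For $r=3$ it is
\[ c^3+3c^2d-3cd^2-d^3=1 . \]
Solving these Thue equations should leave only the trivial solutions $|c|,|d|\leqslant1$. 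These force $p\leqslant 2$ and $x=1$, hence $n=0$, a contradiction. For $r=3$, solving the cubic equation (equivalently $x^2+1=2y^3$, whose only solution is $y=1$) should be classical.

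The main obstacle is $r=5$, the quintic Thue equation. As a first attempt I would look for congruence obstructions on $x^2+1=2p^5$ modulo small primes such as $11$. Failing that, I would cite the known result that $x^2+1=2y^n$ has only the solution $x=y=1$ for odd $n\geqslant3$ (Cohn / Pink–Tengely). That result would also cover $r=3$ directly.
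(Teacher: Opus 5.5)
Your cases $r=1,2,4$ agree with the paper. For $r=2$, your negative Pell equation $x^2-2p^2=-1$ with $x+p\sqrt2=(1+\sqrt2)(3+2\sqrt2)^m$ gives the same family as the paper's $X^2-2Y^2=2$ with $X=2p$, $Y=4n+1$ and fundamental solution $(2,1)$. Your mod-$4$ recursion is a cleaner form of the paper's parity computation. For $r=4$, both you and the paper use Ljunggren.

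The gap is in $r=3$ and $r=5$. You set up exactly the factorization the paper uses for $r=5$, namely $x+i=(1+i)\beta^r$ with the unit absorbed since $r$ is odd. But you stop at the resulting binary form equation, call the quintic ``the main obstacle'', and propose looking for a congruence obstruction. That cannot work. The pair $(x,p)=(1,1)$ is a genuine solution of $x^2+1=2p^r$ with $x\equiv1\pmod 4$, so it survives every congruence; modulo $11$, for example, you only learn $x\equiv\pm1$ and $p^5\equiv1$. No congruence argument by itself can exclude $x>1$.

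The missing observation is that these are not irreducible Thue forms: each has a rational linear factor. With $\beta=c+di$, the imaginary part of $(1+i)\beta^r$ is
\[
(c-d)(c^2+4cd+d^2)\quad\text{for } r=3,\qquad (c+d)\bigl((c+d)^4-20c^2d^2\bigr)\quad\text{for } r=5.
\]
Setting it equal to $1$ forces the linear factor to be $\pm1$.

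For $r=3$, put $c=d+\epsilon$ with $\epsilon=\pm1$. This gives $6d^2+6\epsilon d+1=\epsilon$, whose only integer solutions are $(c,d)\in\{(1,0),(0,-1)\}$. For $r=5$, one gets $1-20c^2d^2=\pm1$, hence $c+d=1$ and $cd=0$. In both cases $\mathcal{N}(\beta)=p=1$ and $x=\pm1$, i.e.\ $n=0$.

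This is precisely how the paper finishes $r=5$, with $\ell_1,\ell_2$ in place of $c,d$. For $r=3$ the paper instead writes $(8n+2)^2+4=(2p)^3$ and quotes the integer points of $y^2=x^3-4$. The factored cubic gives a self-contained argument and treats both odd cases uniformly.

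Your fallback, that $x^2+1=2y^n$ has only $x=\pm1$ for odd $n\geqslant3$, is a true classical theorem going back to St{\o}rmer. Citing it would close the proof. As written, though, your proposal does not itself establish these two cases, and the route you began was one factorization away from doing so.
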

\begin{proof}
	It is easy to find that $2p^r = (4n+1)^2 + 1$.
	\begin{enumerate}
		\item[$(1)$]
		If $r = 1$, then $p = 8n^2+4n+1$.
		
		\item[$(2)$]
		If $r = 2$, then
		\[ 8n^2+4n+1 = p^2 \Longleftrightarrow (2p)^2 - 2(4n+1)^2 = 2.  \]
		Let $x = 2p$ and $y = 4n+1$.
		Then we get a general Pell's equation $x^2 - 2y^2 = 2$.
		Consider the positive integer solution $(u_m,v_m)$ of the Pell's equation $u^2 - 2v^2 = 1$ whose fundamental solution is $(u_1,v_1) = (3,2)$, where $(u_m, v_m)$ are defined in Equation \eqref{eq-solution}.
		According to \cite[\S 4.9.3, pp. 98]{book-AA-QDE}, the equation $x^2 - 2y^2 = 2$ only has one fundamental solution $(x^*,y^*) = (2,1)$.
		By \cite[Theorem 4.1.3]{book-AA-QDE}, the solution of $x^2 - 2y^2 = 2$ is
		\[ x_m = x^* u_m + 2 y^* v_m = 2 u_m + 2 v_m = 2p, \quad y_m = y^* u_m + x^* v_m  = u_m + 2 v_m = 4n+1. \]
		Hence, \[ v_m = 4n+1-p, \quad u_m = p - v_m = 2p - 4n - 1. \]
		As for $u_m$ and $v_m$, we have
		\begin{align*}
			u_m
			& = \frac{1}{2} \left( (3 + 2 \sqrt{2})^m + (3 - 2 \sqrt{2})^m \right) \\
			& = \frac{1}{2} \sum_{i=0}^{m} \binom{m}{i} \left( 1+(-1)^i \right) 3^{m-i} (2\sqrt{2})^i \\
			& \equiv 3^m \equiv (-1)^m \pmod{4}, \\
			v_m
			& = \frac{1}{2\sqrt{2}} \left( (3 + 2 \sqrt{2})^m - (3 - 2 \sqrt{2})^m \right) \\
			& = \frac{1}{2\sqrt{2}} \sum_{i=0}^{m} \binom{m}{i} \left( 1-(-1)^i \right) 3^{m-i} (2\sqrt{2})^i \\
			& \equiv 2m \cdot 3^{m-1} \equiv (-1)^{m-1} \cdot \left( 1-(-1)^i \right) \pmod{4}.
		\end{align*}
		Thus, $8n^2+4n+1 = p^2$ has positive integer solutions if and only if $m$ is even since $4 \mid v_m$.
		
		\item[$(3)$]
		If $r = 3$,
		then \[ (4n+1)^2 + 1 = 2(8n^2+4n+1) = 2p^3 \implies (8n+2)^2 + 4 = (2p)^3.  \]
		Take $y = 8n+2$ and $x = 2p$.
		Then we have $y^2 = x^3 - 4$, whose positive integer solutions are $(x,y) = (2,2)$ and $(x, y) = (5,11)$ by \cite[Exercise 2.4.1]{book-ME-problem}.
		Thus, the equation $p^3 = 8n^2 + 4n + 1$ has no positive integer solutions.
		
		\item[$(4)$]
		If $r = 4$, then
		\[ (4n+1)^2 - 2p^4 = -1. \]
		By \cite{JLMS-L-QDE-4}, the only positive integer solutions of the equation $x^2 - 2y^4 = -1$ are $(x,y) = (1,1)$ and $(x,y) = (239,13)$.
		That is to say, $(n,p) = (0,1)$ or $(n,p) = (119/2,13)$.
		Hence, the equation $p^4 = 8n^2 + 4n + 1$ has no positive integer solutions.

		\item[$(4)$]
		If $r = 5$, then
		\[ (4n+1)^2 - 2p^5 = -1. \]
		Let $x=4n+1$ and $y=p$.
		Then over $\mathbb{Z}[i]$, we have \[ x^2 + 1 = 2y^5 \implies (x+i)(x-i) = (-i)(1+i)^2 y^5. \]
		Let $\delta = \gcd(x+i, x-i)$.
		Then $\delta \mid \gcd(2x,2i)$, which implies that $\delta \mid 2$.
		Note that $\gcd(2, x \pm i) = \gcd(2, 4n+1 \pm i) = \gcd(2,1\pm i) = 1 \pm i$.
		Thus, $\delta \neq 2$ and $\delta \neq 1$, i.e., $\delta = 1+i$.
		Then
		\[ \frac{x+i}{1+i} \cdot \frac{x-i}{1-i} = y^5, \quad \text{where}~ \gcd\left( \frac{x+i}{1+i}, \frac{x-i}{1-i} \right) = 1. \]
		Let $x+i = (1+i)(\ell_1 + \ell_2 i)^5$, where $\ell_1, \ell_2 \in \mathbb{Z}$.
		By computing, we get
		\begin{align*}
			(\ell_1 + \ell_2 i)^5 & = \ell_1^5 + 5 \ell_1^4 \ell_2 i + 10 \ell_1^3 \ell_2^2 i^2 + 10 \ell_1^2 \ell_2^3 i^3 + 5 \ell_1 \ell_2^4 i^4 + \ell_2^5 i^5 \\
			& = (\ell_1^5 - 10 \ell_1^3 \ell_2^2 + 5 \ell_1 \ell_2^4) + (5 \ell_1^4 \ell_2 - 10 \ell_1^2 \ell_2^3 + \ell_2^5) i,
		\end{align*}
		and
		\begin{align*}
			1 & = (\ell_1^5 - 10 \ell_1^3 \ell_2^2 + 5 \ell_1 \ell_2^4) + (5 \ell_1^4 \ell_2 - 10 \ell_1^2 \ell_2^3 + \ell_2^5) \\
			& = (\ell_1 + \ell_2)(\ell_1^4 + 4 \ell_1^3 \ell_2 - 14 \ell_1^2 \ell_2^2 + 4 \ell_1 \ell_2^3 + \ell_2^4) \\
			& = (\ell_1 + \ell_2)( (\ell_1 + \ell_2)^4 - 20 \ell_1^2 \ell_2^2),
		\end{align*}
		which implies that $\ell_1 + \ell_2 = 1$ and $20 \ell_1^2 \ell_2^2 = 0$, or $\ell_1 + \ell_2 = -1$ and $20 \ell_1^2 \ell_2^2 = 2$.
		Hence, $\ell_1 \ell_2 = 0$ and $\ell_1 = 1$, $\ell_2 = 0$, $x = 1$ or $\ell_1 = 0$, $\ell_2 = 1$, $x = -1$.
		And the only positive integer solution of the equation $x^2 + 1 = 2y^5$ is $(1, 1)$, while $4n + 1 > 1$.
		Thus, the equation $p^4 = 8n^2 + 4n + 1$ has no positive integer solutions.
	\end{enumerate}
	Therefore, the equation $p^r = 8n^2 + 4n + 1$ has positive integer solutions if and only if $r = 1,2$.
\end{proof}

\begin{rem}
	The smallest $2$  positive integer solutions of the equation $8n^2 + 4n + 1 = p^2$ that satisfied our conditions are $(n,p) = (10, 29)$ and $(11830, 33461)$.
	However, we do not know the existence of the $2$-perfect $[10, 8, d_{\pi} = 5]_{29}$ code, which is a possible perfect code with smallest length over the smallest field.
	As the field size becomes larger, maybe there exists a perfect code if there are infinite suitable primes produced by the solutions of $u^2 - 2v^2 = 1$.
\end{rem}

\begin{ex}
	Consider the field $\mathbb{F}_{29} \cong \mathcal{G}_{\pi}$, where $\pi = 2+5i$.
	Note that $12^4 \equiv 1 \pmod{29}$, then $\{ 12, 28, 17, 1 \} = \mathcal{F}_1 \leqslant \mathbb{F}_{29}^{\ast}$ and we have the decomposition:
	\[ \mathbb{F}_{29}^{\ast} = \mathcal{F}_1 \cup \mathcal{F}_2 \cup \mathcal{F}_3 \cup \mathcal{F}_4 \cup \mathcal{F}_6 \cup \mathcal{F}_8 \cup \mathcal{F}_{11}, \]
	where $\mathcal{F}_i = i \mathcal{F}_1$, $i=1,2,3,4,6,8,11$, and
	\begin{align*}
		\mathcal{F}_2 & = \{ 2,5,24,27 \}, \quad \mathcal{F}_3 = \{ 3,7,22,26\}, \quad \mathcal{F}_4 = \{ 4,19,10,25 \}, \\
		\mathcal{F}_{6} & = \{ 6,14,15,23 \}, \quad \mathcal{F}_{8} = \{ 8,9,20,21 \}, \quad \mathcal{F}_{11} = \{ 11,16,13,18 \} \\
	\end{align*}
	Moreover, we have
	\[ 4 = -1+2i, \quad 6 = 1+2i, \quad 8 = 1-3i, \quad 11 = -1-i. \]
	Thus, for the nonzero element in $\mathbb{F}_{29}$, $w_{\pi}(x) = 1$ if $x \in \mathcal{F}_1$, $2$ if $x \in \mathcal{F}_2 \cup \mathcal{F}_{11}$, $4$ if $x \in \mathcal{F}_8$ and $3$ for other $3$ cosets.
	The details are listed in Table \ref{tab-Z29}.
	\begin{table}[htbp]
		\caption{Mannheim weight of the elements in $\mathbb{Z}_{29}$}
		\label{tab-Z29}
		\begin{center}
			\resizebox{\columnwidth}{!}{
			\begin{tabular}{|c|c|c|c|c|c|c|c|c|c|c|c|c|c|c|}
				\hline
				$n$ & 1 & 2 & 3 & 4 & 5 & 6 & 7 & 8 & 9 & 10 & 11 & 12 & 13 & 14 \\
				\hline
				$\mathcal{G}_{2+5i}$ & 1 & 2 & 3 & $-1+2i$ & $2i$ & $2i+1$ & $-3i$ & $1-3i$ & $-3-i$ & $-2-i$ & $-1-i$ & $-i$ & $1-i$ & $2-i$ \\
				\hline
				weight & 1 & 2 & 3 & 3 & 2 & 3 & 3 & 4 & 4 & 3 & 2 & 1 & 2 & 3 \\
				\hline
			\end{tabular}}
		\end{center}
	\end{table}
\end{ex}

\section{Self-dual codes over Gaussian integers}

In this section, we investigate self-dual codes over finite fields viewed as Gaussian integer residue rings and study their minimum Mannheim distances. We consider self-dual codes defined with respect to the usual Euclidean inner product, in the same sense as conventional self-dual codes. However, our focus is on their minimum Mannheim distances while conventional self-dual codes are usually studied with respect to the Hamming distance.

\begin{thm}\label{thm-F13-4}
    Let $\mathcal{C}$ be a $[4, 2]$ self-dual code over $\mathbb{F}_{13}$. Then $d_\pi(\mathcal{C})\le 5$.
\end{thm}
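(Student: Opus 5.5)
The plan is to show directly that $\mathcal{C}$ contains a nonzero codeword of Mannheim weight at most $5$. Self-duality will not be needed: the argument uses only that $\mathcal{C}$ has length $4$ and dimension $2$, together with the weight structure of $\mathbb{F}_{13}\cong\mathcal{G}_{2+3i}$.

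First I will record the Mannheim weights in $\mathbb{F}_{13}$. Here $\pi=2+3i$, so $a=2$, $b=3$ and $t=(a+b-1)/2=2$. By \cite[Theorem 6]{TIT-MBG-metrics} (the weight-count theorem quoted above), $\mathcal{W}_1=4$ and $\mathcal{W}_2=8$, and the maximum Mannheim weight is $b-1=2$. Concretely, $\gamma=5$ satisfies $\gamma^2\equiv -1$, so $H=\{1,5,8,12\}$ is exactly the set of elements of weight $1$. All other nonzero elements, namely $2H\cup 6H=\{2,3,10,11\}\cup\{4,6,7,9\}$, have weight $2$. Two facts follow: every nonzero element has weight at most $2$, and every nonzero element can be scaled to weight $1$ by a suitable $\lambda\in\mathbb{F}_{13}^*$ (take $\lambda$ to be its inverse).

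Next I will bound the Hamming support. Any $[4,2]$ code has an information set of size $2$. After permuting coordinates, which does not change Mannheim weights, $\mathcal{C}$ has a generator matrix $(I_2\mid A)$. Its first row $\vec{c}=(1,0,x,y)$ is a nonzero codeword with Hamming weight at most $3$; this is simply the Singleton bound $d_H\le n-k+1=3$ realized by a specific codeword.

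Finally I will estimate the Mannheim weight of $\vec{c}$. Its first coordinate is $1$, of weight $1$. Its second coordinate is $0$. The remaining two coordinates each have weight at most $2$. Hence ${\rm wt}_\pi(\vec{c})\le 1+0+2+2=5$, and therefore $d_\pi(\mathcal{C})\le 5$.

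There is no real obstacle here. The only points to check carefully are the identification of the weight-$1$ coset $H$ and the fact that the maximum weight in $\mathbb{F}_{13}$ is $2$; both follow directly from the quoted theorem. If a sharper bound using self-duality were wanted, I would additionally use the relation $AA^{T}=-I$, i.e.\ $x^2+y^2=12$, to restrict the possible pairs $(x,y)$. For the stated bound this is unnecessary.
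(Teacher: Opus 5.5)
Your proof is correct, and it takes a genuinely different and more elementary route than the paper's.

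The paper argues by contradiction. If $d_\pi(\mathcal{C})=6$, codewords of Hamming weight at most $2$ are excluded, so the Singleton bound yields a codeword $\mathbf{c}$ of Hamming weight $3$ whose three nonzero entries all have Mannheim weight $2$. Self-duality then forces $\mathbf{c}\cdot\mathbf{c}=0$. This fails because no three elements of $\{3,4,9,10\}$ (the squares of the weight-$2$ elements) sum to $0$ in $\mathbb{F}_{13}$.

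You instead rescale a codeword of Hamming weight at most $3$ so that one entry equals $1$. That already rules out the configuration the paper eliminates with self-orthogonality, and you never use self-duality at all. Consequently your argument proves $d_\pi\le 5$ for every $[4,2]$ code over $\mathbb{F}_{13}$. It is exactly the case $(n,k)=(4,2)$ of the paper's own general bound $d_\pi(n,k)\le d_\pi(n-k+1,1)=2(n-k+1)-\lceil (n-k+1)/3\rceil$ for $\pi=2+3i$. It also shows that the self-duality hypothesis in this theorem is superfluous.

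What the paper's approach buys is a template: normalize to $(1,u_1,\dots,u_{h-1},0,\dots,0)$ and impose $\sum u_j^2=-1$. This reappears as constraint (iv) in its linear-programming procedure. Self-duality information of this kind does matter at larger lengths. For example, for $n=6$ over $\mathbb{F}_{13}$ the general bound gives only $6$, while the paper obtains $d_\pi^{SD}(6)=5$.

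Your closing remark about sharpening the bound via $AA^{T}=-I$ cannot succeed, because $5$ is attained by a self-dual code. One example is the code generated by $(1,0,4,3)$ and $(0,1,3,9)$; another is $C_{13,4,2}$ in the paper's table.
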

\begin{proof}
For $\mathbb{F}_{13}\cong\mathcal{G}_{2+3i}$, we have the coset decomposition
\[
\mathbb{F}_{13}=\{0\}\cup \{1, 5, 8, 12\}\cup \{2, 10, 3, 11\}\cup \{4, 7, 6, 9\}
\]
where $\mbox{wt}_\pi(0)=0$, $\mbox{wt}_\pi(1)=1$, $\mbox{wt}_\pi(2)=2$ and $\mbox{wt}_\pi(4)=2$. Clearly, all elements in the same coset have the same Mannheim weight.

Suppose that $d_\pi(\mathcal{C})=6$. Since every nonzero element of $\mathbb{F}_{13}\cong \mathcal{G}_{2+3i}$ has Mannheim weight at most $2$, any vector of Hamming weight $1$ or $2$ has Mannheim weight at most $4$. Thus, every codeword in $\mathcal{C}$ has Hamming weight at least $3$. Since $d_H(\mathcal{C}) \le 3$ by the Singleton bound, it follows that $d_H(\mathcal{C})=3$. Hence, $\mathcal{C}$ contains a codeword $\mathbf{c}$ of Hamming weight $3$. Therefore, every nonzero coordinate of $\mathbf{c}$ must have Mannheim weight $2$, that is, every coordinate of $\mathbf{c}$ is in $A=\{2, 3, 4, 6, 7, 9, 10, 11\}\subset \mathbb{F}_{13}$. Since $A^2=\{a^2~|~a\in A\}=\{3, 4, 9, 10\}$ and no three elements of $A^2$ add to $0$, it follows that $\mathbf{c}\cdot\mathbf{c}\ne 0$, which is a contradiction. Therefore, $d_\pi(\mathcal{C})\le 5$.
\end{proof}

\begin{thm}\label{thm-F17-4}
    Let $\mathcal{C}$ be a $[4, 2]$ self-dual code over $\mathbb{F}_{17}$. Then $d_\pi(\mathcal{C})\le 5$.
\end{thm}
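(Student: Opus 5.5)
The plan is to argue by contradiction, following the proof of Theorem~\ref{thm-F17-4}'s companion, Theorem~\ref{thm-F13-4}, but adapted to the weight structure of $\mathbb{F}_{17}\cong\mathcal{G}_{1+4i}$. Since $4^2=16=-1$, we have $\gamma=4$ and $H=\{1,4,13,16\}$. The coset decomposition is
\[
\mathbb{F}_{17}^{\ast}=H\cup\{2,8,9,15\}\cup\{3,5,12,14\}\cup\{6,7,10,11\}.
\]
The Mannheim weights are $1,2,2,3$ on these cosets. For instance $3\equiv -1+i$ has weight $2$, and $6\equiv 2+i$ has weight $3$. This agrees with \cite[Theorem 6]{TIT-MBG-metrics}, which gives $\mathcal{W}_1=4$, $\mathcal{W}_2=8$, $\mathcal{W}_3=4$. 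So the maximum weight is $3$, and exactly the coset $6H$ has weight $3$. I will also use the set of nonzero squares, $Q=\{1,2,4,8,9,13,15,16\}$. Suppose $\mathcal{C}$ is a $[4,2]$ self-dual code with $d_\pi(\mathcal{C})\ge 6$.

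First I will show $d_H(\mathcal{C})=3$. A codeword of Hamming weight $1$ cannot satisfy $\mathbf{c}\cdot\mathbf{c}=0$. For a codeword $\mathbf{c}$ of Hamming weight $2$ with nonzero entries $x,y$, self-orthogonality gives $x^2+y^2=0$. Hence $(y/x)^2=-1$, so $y/x\in\{\pm4\}\subset H$. Scaling $\mathbf{c}$ by $x^{-1}$ then gives a codeword with both nonzero entries in $H$, of Mannheim weight $2<6$, a contradiction. Together with the Singleton bound $d_H(\mathcal{C})\le 3$, this gives $d_H(\mathcal{C})=3$. So there is a codeword $\mathbf{c}$ of Hamming weight $3$.

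Next, since every nonzero multiple of $\mathbf{c}$ lies in $\mathcal{C}$, I may scale $\mathbf{c}$ so that one of its nonzero coordinates equals $1$. Up to permuting coordinates, $\mathbf{c}=(1,x,y,0)$ with $x,y\neq0$. The condition $\mathrm{wt}_\pi(\mathbf{c})\ge6$ forces $\mathrm{wt}_\pi(x)+\mathrm{wt}_\pi(y)\ge5$. Since weights are at most $3$, one of them, say $x$, lies in $6H$, and the other has weight at least $2$. Self-orthogonality gives $1+x^2+y^2=0$.

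The key (and only real) step is the short finite check that these conditions are incompatible. The squares of $6H$ are $\{2,15\}$. If $x^2=2$, then $y^2=-3=14\notin Q$, which is impossible. If $x^2=15$, then $y^2=-16=1$, so $y=\pm1\in H$, which has weight $1$. Then $\mathrm{wt}_\pi(\mathbf{c})=1+3+1=5<6$, a contradiction. Hence $d_\pi(\mathcal{C})\le5$.

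I expect the main care point to be getting the coset weights in $\mathbb{F}_{17}$ right. The other delicate point is the Hamming-weight-$2$ case, because unlike over $\mathbb{F}_{13}$, two weight-$3$ entries could a priori give Mannheim weight $6$; this is exactly why the argument uses the relation $y/x=\pm\gamma\in H$.
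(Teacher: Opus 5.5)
Your proof is correct and follows essentially the same route as the paper: the Singleton bound gives $d_H(\mathcal{C})\le 3$, a Hamming-weight-$2$ codeword has both entries in one coset of $H$ because $c_1=\pm 4c_2$, and a Hamming-weight-$3$ codeword $(1,x,y,0)$ is handled through $1+x^2+y^2=0$ over the nonzero squares. Your coset weights $1,2,2,3$ on $H,2H,3H,6H$ are the correct ones (they agree with $\mathcal{W}_1=4$, $\mathcal{W}_2=8$, $\mathcal{W}_3=4$), whereas the paper's proof states $\mathrm{wt}_\pi(2)=4$ and $\mathrm{wt}_\pi(6)=5$, contradicting the maximum weight $b-1=3$; that error is why the paper passes to $3\mathbf{c}$ in the case $c_2\in H$, $c_3\in 6H$, while in your version the codeword $(1,\pm 1,x,0)$ with $x\in 6H$ already has weight $5$.
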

\begin{proof}
We first find the set $H$ of elements in $\mathbb{F}_{17}$ whose Mannheim weight is $1$. Since $4^2=-1$, $H=\{1, 4, 13, 16\}$. Then, the coset decomposition of $\mathbb{F}_{17}\cong\mathcal{G}_{1+4i}$ is given as follows:
\begin{align*}
\mathbb{F}_{17}&=\{0\}\cup H\cup 2H\cup 3H\cup 6H\\&=\{0\}\cup\{1, 4, 13, 16\}\cup\{3, 5, 12, 14\}\cup\{2, 8, 9, 15\}\cup \{6, 7, 10, 11\}.
\end{align*}
Clearly, $\mbox{wt}_\pi(0)=0$, $\mbox{wt}_\pi(1)=1$. Since $3\equiv -1+i$, $\mbox{wt}_\pi(3)=2$. Since there are exactly four elements in $\mathbb{Z}[i]/(1+4i)$ with Mannheim weight $2$, $\mbox{wt}_\pi(2), \mbox{wt}_\pi(6)>2$. Note that there is no element of Mannheim weight $3$ and four elements of Mannheim weight $4$. Thus, $\mbox{wt}_\pi(2)=4$ and $\mbox{wt}_\pi(6)=5$ since $6\equiv 2+i$.

By the Singleton bound, $d_H(\mathcal{C})\le 3$. First, assume that $d_H(\mathcal{C})=2$, and let $\mathbf{c}\in\mathcal{C}$ be a codeword of Hamming weight $2$. That is, $\mathbf{c}=(c_1, c_2, 0, 0)$ for some $c_1, c_2\in \mathbb{F}_{17}$. Since $\mathcal{C}$ is self-dual, $c_1^2=-c_2^2=16c_2^2$. Hence, $c_1=\pm 4c_2$. This shows that $c_1$ and $c_2$ are in the same coset of $H$. Choose $\alpha\in\mathbb{F}_{17}^\times$ such that $\alpha\cdot c_1\in 3H$. Multiplying $\mathbf{c}$ by $\alpha$, we obtain $\mbox{wt}_\pi(\alpha\cdot\mathbf{c})=4$. Then $d_\pi(\mathcal{C})\le 4 < 5$.

Next, assume that $d_H(\mathcal{C})=3$, and let $\mathbf{c}=(c_1,c_2,c_3,0)\in\mathcal{C}$ be a codeword of Hamming weight $3$. Since $c_1^{-1}\cdot(c_1, c_2, c_3, 0)=(1, c^{-1}_1c_2, c^{-1}_1c_3, 0)\in\mathcal{C}$, without loss of generality, we assume that $\mathbf{c}=(1, c_2, c_3, 0)$ for some $c_2, c_3\in \mathbb{F}_{17}$. Since $\mathcal{C}$ is self-dual, we have $1+c_2^2+c_3^2=0$. Let $A=\{x^2~:~x\in \mathbb{F}_{17}, x\ne 0\}=\{1, 2, 4, 8, 9, 13, 15, 16\}$. Then the only solutions for
\[
X+Y=-1\pmod{17}
\]
in $A$ are $(X, Y)=(1, 15)$ and $(X, Y)=(8, 8)$. We consider each case.
\begin{enumerate}
    \item[(i)] Let $c_2^2=1$ and $c_3^2=15$. Then $c_2\in H$, $c_3\in 6H$, and $\mbox{wt}_\pi(\mathbf{c})=1+1+5=7$.
    \item[(ii)] Let $c_2^2=c_3^2=8$. Then $c_2, c_3\in 3H$, and $\mbox{wt}_\pi(\mathbf{c})=1+2+2=5$.
\end{enumerate}
Suppose that $\mbox{wt}_\pi(\mathbf{c})=7$, that is, $\mathbf{c}=(1, c_2, c_3, 0)$ where $c_2\in H$ and $c_3\in 6H$. Then $\mbox{wt}_\pi(3\cdot\mathbf{c})=5$ since $3\cdot 1, 3\cdot c_2\in 3H$ and $3\cdot c_3\in H$. This implies that $\mathcal{C}$ must contain a codeword of Mannheim weight $5$. This completes the proof.
\end{proof}

\begin{thm}\label{thm-F17-6}
    Let $\mathcal{C}$ be a $[6, 3]$ self-dual code over $\mathbb{F}_{17}$. Then $d_\pi(\mathcal{C})\le 6$.
\end{thm}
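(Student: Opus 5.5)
The plan is to split according to the minimum Hamming distance, as in the proofs of Theorems~\ref{thm-F13-4} and~\ref{thm-F17-4}. I will use the coset decomposition of $\mathbb{F}_{17}^{\ast}$ from the proof of Theorem~\ref{thm-F17-4}: the weights are $1$ on $H$, $2$ on $3H$, $4$ on $2H=9H$, and $5$ on $6H=10H$. Since $\mathbb{F}_{17}^{\ast}/H$ is cyclic of order $4$ generated by $3H$, label the cosets $H,3H,9H,10H$ by indices $0,1,2,3$. Their weights are then $1,2,4,5$, and multiplying by $3^j$ shifts every index by $j$ modulo $4$. I also record the squares of each coset: they are $\{1,16\}=\{\pm1\}$, $\{8,9\}=\{\pm 8\}$, $\{4,13\}=\{\pm4\}$ and $\{2,15\}=\{\pm 2\}$ for indices $0,1,2,3$. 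By the Singleton bound $d_H(\mathcal{C})\le 4$.

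\emph{Case $d_H(\mathcal{C})\le 3$.} I reuse the argument of Theorem~\ref{thm-F17-4} verbatim. A self-orthogonal word of Hamming weight $2$ has $c_1^2=-c_2^2$, so $c_1,c_2$ lie in one coset, and scaling into $H$ gives weight $2$. A word of Hamming weight $3$, normalized to $(1,c_2,c_3,0,0,0)$, satisfies $1+c_2^2+c_3^2=0$. The only solutions are $c_2^2=1,c_3^2=15$ and $c_2^2=c_3^2=8$, which give weights $7$ and $5$ respectively, and in the first case multiplying by $3$ produces weight $5$. Hence $d_\pi(\mathcal{C})\le 5\le 6$.

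\emph{Case $d_H(\mathcal{C})=4$.} Now $\mathcal{C}$ is MDS with a generator matrix $(I_3\mid A)$. Self-duality forces $AA^{T}=-I_3$, and MDS forces every entry and every $2\times 2$ minor of $A$ to be nonzero. Each weight-$4$ codeword $\vec c$ has a coset-index pattern $(e_1,\dots,e_4)\in\mathbb{Z}_4^4$ on its support. By the shift action, $\min_\lambda w_\pi(\lambda\vec c)\le 6$ as soon as some shift of the pattern is $\{0,0,0,0\}$, $\{0,0,0,1\}$ or $\{0,0,1,1\}$ (weights $4,5,6$). Equivalently, this holds whenever the four indices take at most two values and those values are equal or consecutive modulo $4$. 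Self-orthogonality $\sum c_j^2=0$ with $c_j^2\in\{\pm1,\pm8,\pm4,\pm2\}$ according to the index then gives a short finite list of admissible index patterns. I will enumerate the solutions of $s_1+s_2+s_3+s_4\equiv 0\pmod{17}$ with each $s_j$ in one of these sets. The goal is to show that every pattern is either good in the sense above, or is of one of a few bad types (containing two opposite indices, such as $\{0,0,2,2\}$ or $\{0,1,2,3\}$).

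The main obstacle is ruling out a code in which every weight-$4$ codeword has a bad pattern. For this I will exploit that the $240$ weight-$4$ codewords are highly structured. The rows of $(I_3\mid A)$ and their pairwise combinations eliminating one coordinate of $A$ are all weight-$4$ words, and their patterns are linked through the entries of $A$. First I will normalize $A$ up to $(1,-1,0)$-monomial equivalence and scaling, putting $1$'s in the first row and column where possible. Then I will show that bad patterns on the rows $(e_i\mid \vec a_i)$ force the row conditions $1+\sum_j a_{ij}^2=0$ and the orthogonality conditions $\sum_j a_{ij}a_{kj}=0$ into contradictory coset choices. If this hand analysis becomes unwieldy, I will fall back on a complete (small) enumeration of the matrices $A$ over $\mathbb{F}_{17}$ with $AA^T=-I_3$ modulo the equivalence, checking that each resulting code contains a word of Mannheim weight at most $6$.
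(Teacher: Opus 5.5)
\textbf{There is a gap: the case $d_H(\mathcal{C})=4$ is not proved.} Your treatment of $d_H(\mathcal{C})\le 3$ is fine. The case $d_H(\mathcal{C})=4$, however, carries the whole content of the theorem, and you only announce it (``I will enumerate'', ``I will show'', or a computer fallback).

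You are right that one weight-$4$ codeword cannot suffice, and this is exactly where the paper's own argument breaks. Normalize to $(1,c_2,c_3,c_4,0,0)$. The congruence $X+Y+Z\equiv-1$ over the nonzero squares has seven solutions: $\{1,16,16\}$, $\{2,15,16\}$, $\{8,9,16\}$, $\{4,13,16\}$, $\{1,2,13\}$, $\{4,4,8\}$, $\{9,9,15\}$. The paper uses only the first three. Up to a cyclic shift of coset indices, the seven solutions give the patterns $\{0,0,0,0\}$, $\{0,0,1,1\}$, $\{0,0,2,2\}$, $\{0,0,2,3\}$. For example, $(1,1,6,8,0,0)$ is self-orthogonal, and every nonzero multiple of it has Mannheim weight at least $7$.

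\textbf{The weights you import are wrong.} The coset weights taken from the proof of Theorem~\ref{thm-F17-4} are incorrect. In $\mathcal{G}_{1+4i}$ one has $i\equiv 4$, so $2$ is the Gaussian integer $2$ and $6\equiv 2+i$. The weights on $H,3H,2H,6H$ are therefore $1,2,2,3$, which is consistent with $\mathcal{W}_1=4$, $\mathcal{W}_2=8$, $\mathcal{W}_3=4$ and $S(1,4)=8$. Your $d_H\le 3$ argument survives this correction. The good/bad split changes, though: $\{0,0,2,2\}$ is good (weight $6$), and only the class of $\{0,0,2,3\}$ is bad (minimum $7$).

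\textbf{How to close the MDS case without enumeration.} Since $4^2=-1$ and $4\in H$, the matrix $Q=4^{-1}A$ is orthogonal and has the same coset-index matrix $P=(p_{kl})$ as $A$. The $15$ weight-$4$ words (up to scalars) are then explicit:
\begin{itemize}
\item $(\vec{e}_i\mid 4\,\mathrm{row}_i(Q))$, with pattern $\{0,p_{i1},p_{i2},p_{i3}\}$;
\item $(4^{-1}\mathrm{col}_l(Q)^T\mid\vec{e}_l)$, with pattern $\{0,p_{1l},p_{2l},p_{3l}\}$;
\item for $\vec{x}=\vec{e}_i\times\mathrm{col}_l(Q)$, the word $(\vec{x}\mid 4\vec{x}Q)$, which vanishes at positions $i$ and $3+l$. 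Because $\mathrm{col}_l\times\mathrm{col}_m=\pm\mathrm{col}_{m'}$ for $\{l,m,m'\}=\{1,2,3\}$, its pattern is $\{p_{kl}:k\ne i\}\cup\{p_{im}:m\ne l\}$.
\end{itemize}
Suppose all $15$ were bad. Then each row and each column of $P$, as a multiset, would be one of $\{0,2,3\}$, $\{1,1,3\}$, $\{1,2,2\}$. An entry $p_{il}=0$ forces the cross pattern $\{2,2,3,3\}$. An entry $p_{il}=1$ forces one of $\{1,1,3,3\}$, $\{1,2,2,3\}$, $\{2,2,2,2\}$. None of these is bad, yet every admissible row contains a $0$ or a $1$, a contradiction. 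Hence some multiple of some weight-$4$ word has Mannheim weight at most $6$.

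The same check shows that your stronger target always holds: there is a weight-$4$ word in the class of $\{0,0,0,0\}$ or $\{0,0,1,1\}$. So your reduction would also work once carried out.
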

\begin{proof}
    Note that $d_H(\mathcal{C})\le 4$ by the Singleton bound. If $d_H(\mathcal{C})<4$, as shown in the proof of Theorem~\ref{thm-F17-4}, $d_\pi(\mathcal{C})\le 5<6$.

    So, we assume that $d_H(\mathcal{C})=4$. Then, there exists a codeword $\mathbf{c}=(1, c_2, c_3, c_4, 0, 0)\in\mathcal{C}$ whose Hamming weight is $4$, and $c_2^2+c_3^2+c_4^2=-1$. By exhaustive search, we confirmed that all solutions of
    \[
    X+Y+Z=-1\pmod{17}
    \]
    in $A=\{1, 2, 4, 8, 9, 13, 15, 16\}$ are $(X, Y, Z)=(1, 16, 16),~(2, 15, 16),~(8, 9, 16)$. We consider each case.
    \begin{enumerate}
        \item[(i)] Let $c_2^2=1$ and $c_3^2=c_4^2=16$. Then $c_2, c_3, c_4\in H$, and $\mbox{wt}_\pi(\mathbf{c})=1+1+1+1=4$.
        \item[(ii)] Let $c_2^2=2$, $c_3^2=15$ and $c_4^2=16$. Then $c_2, c_3\in 6H$, $c_4\in H$ and $\mbox{wt}_\pi(\mathbf{c})=1+5+5+1=12$.
        \item[(iii)] Let $c_2^2=8$, $c_3^2=9$ and $c_4^2=16$. Then $c_2, c_3\in 3H$, $c_4\in H$ and $\mbox{wt}_\pi(\mathbf{c})=1+2+2+1=6$.
    \end{enumerate}
    Assume that $c_2, c_3\in 6H$ and $c_4\in H$. Then $3\cdot6H=H$ and $3\cdot H=3H$. Hence, $\mathrm{wt}_\pi(3\cdot\mathbf{c})=2+1+1+2=6$, and therefore $d_\pi(\mathcal{C})\le 6$.
\end{proof}

We present the best known values of $d_H^{SD}(n)$ and the upper bounds on $d_{\pi}^{SD}(n)$ in Table~\ref{tab-F13-general-bound} and~\ref{tab-F17-general-bound}, which are obtained from Theorem~\ref{thm-weight-relation} and~\cite[Table 4]{IEEE-KC-building}.

\begin{table}[H]
	\caption{The best known values of $d_H^{SD}(n)$ and upper bounds on $d_{\pi}^{SD}(n)$ over $\mathbb{F}_{13}$}
	\label{tab-F13-general-bound}
	\centering
	\begin{tabular}{c|c|c|c|c|c|c|c|c|c|c}
		\toprule
		length            & 2  & 4  & 6  & 8  & 10 & 12 & 14 & 16 & 18 & 20 \\ \hline
		$d_H^{SD}(n)$  & 2  & 3  & 4  & 5  & 6  & 6  & 8  & 8  & 8  & 10 \\ \hline
		$d_{\pi}^{SD}(n)$ & 3  & 5  & 6  & 8  & 10 & 10 & 13 & 13 & 13 & 16 \\ \toprule
		length            & 22 & 24 & 26 & 28 & 30 & 32 & 34 & 36 & 38 & 40 \\ \hline
		$d_H^{SD}(n)$  & 10 & 10 & 10 & 11 & 11 & 12 & 12 & 13 & 13 & 14 \\ \hline
		$d_{\pi}^{SD}(n)$ & 16 & 16 & 16 & 18 & 18 & 20 & 20 & 21 & 21 & 23 \\ \bottomrule
	\end{tabular}
\end{table}
\begin{table}[H]
	\caption{The best known values of $d_H^{SD}(n)$ and upper bounds on $d_{\pi}^{SD}(n)$ over $\mathbb{F}_{17}$}
	\label{tab-F17-general-bound}
	\centering
	\begin{tabular}{c|c|c|c|c|c|c|c|c|c|c}
		\toprule
		length            & 2  & 4  & 6  & 8  & 10 & 12 & 14 & 16 & 18 & 20 \\ \hline
		$d_H^{SD}(n)$  & 2  & 3  & 4  & 5  & 6  & 7  & 7  & 8  & 10 & 10 \\ \hline
		$d_{\pi}^{SD}(n)$ & 4  & 6  & 8  & 10 & 12 & 14 & 14 & 16 & 20 & 20 \\ \toprule
		length            & 22 & 24 & 26 & 28 & 30 & 32 & 34 & 36 & 38 & 40 \\ \hline
		$d_H^{SD}(n)$  & 10 & 10 & 10 & 11 & 12 & 12 & 12 & 13 & 14 & 14 \\ \hline
		$d_{\pi}^{SD}(n)$ & 20 & 20 & 20 & 22 & 24 & 24 & 24 & 26 & 28 & 28 \\ \bottomrule
	\end{tabular}
\end{table}

Let $\mathbb{F}_p=\{0\}\cup k_1H\cup\cdots\cup k_{(p-1)/4}H$ be the coset decomposition of $\mathbb{F}_p$. For a codeword $\mathbf{c}\in\mathcal{C}$, the \textit{composition} of $\mathbf{c}$ is defined as the sequence $t=(t_0, t_1, \ldots, t_{(p-1)/4})$ where $t_0$ is the number of components of $\mathbf{c}$ equal to $0$ and $t_j$ is the number of components of $\mathbf{c}$ in $k_jH$ for $1\le j\le (p-1)/4$. Let $A(t)$ be the number of codewords in $\mathcal{C}$ with composition $t=(t_0, t_1, \ldots, t_{(p-1)/4})$. Then the Gaussian integer enumerator polynomial for $\mathcal{C}$ is given as
\begin{align*}
\mbox{GWE}_\mathcal{C}(z_0, z_1, \ldots, z_{(p-1)/4})&=\sum_t A(t)z_0^{t_0}z_1^{t_1}\cdots z_{(p-1)/4}^{t_{(p-1)/4}} \\&=\sum_{\mathbf{c}\in \mathcal{C}}z_0^{t_0}z_1^{t_1}\cdots z_{(p-1)/4}^{t_{(p-1)/4}}.
\end{align*}
For an element $\gamma\in \mathbb{F}_p$, we define the character $\chi(\gamma)$ as $\chi(\gamma)=\xi^\gamma$ where $\xi$ is a primitive complex $p$-th root of unity, that is, $\xi=e^{2\pi i/p}$. The following theorem gives the Gaussian integer enumerator of $\mathcal{C}^\perp$.
\begin{thm}{\rm(\cite{AAECC-H-MTTDMM})}\label{Thm-Gaussian-enumerator}
    Let $\mathcal{C}$ be a code over $\mathcal{G}_\pi$ where $\pi = a+bi \in \mathcal{G}$, $0 < a < b$, $\gcd(a,b) = 1$ and $p = a^2+b^2 \equiv 1 \pmod{4}$. For $1\le j\le (p-1)/4$, choose a representative $\omega_j$ from $k_jH$ and set $\omega_0=0$. Then the Gaussian integer enumerator of the dual code $\mathcal{C}^\perp$ is
    \[
    {\rm{GWE}}_{\mathcal{C}^\perp}(z_0, z_1, \ldots, z_{(p-1)/4})=\frac{1}{|\mathcal{C}|}{\rm{GWE}}_\mathcal{C}(Z_0, Z_1, \ldots, Z_{(p-1)/4}),
    \]
    where
    \[
    Z_j=z_0+\sum_{s=1}^{(p-1)/4}(\chi(\omega_j\omega_s)+\chi(i\omega_j\omega_s)+\chi(-\omega_j\omega_s)+\chi(-i\omega_j\omega_s))z_s.
    \]
\end{thm}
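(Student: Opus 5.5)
The plan is to derive this from the ordinary complete weight enumerator MacWilliams identity over $\mathbb{F}_p$, and then group coordinates by $H$-cosets.

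First, recall the classical identity. For a linear code $\mathcal{C}\subseteq\mathbb{F}_p^n$ with complete weight enumerator
\[
\mathrm{CWE}_\mathcal{C}(x_\gamma:\gamma\in\mathbb{F}_p)=\sum_{\mathbf{c}\in\mathcal{C}}\prod_{m}x_{c_m},
\]
we have
\[
\mathrm{CWE}_{\mathcal{C}^\perp}=\frac{1}{|\mathcal{C}|}\,\mathrm{CWE}_\mathcal{C}(X_\gamma), \qquad X_\gamma=\sum_{\beta\in\mathbb{F}_p}\chi(\gamma\beta)x_\beta .
\]
I would prove this in the usual way. Apply the Poisson summation formula to the function $f(\mathbf{v})=\prod_m x_{v_m}$. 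Then use the orthogonality fact $\sum_{\mathbf{c}\in\mathcal{C}}\chi(\langle\mathbf{c},\mathbf{u}\rangle)=|\mathcal{C}|\,[\mathbf{u}\in\mathcal{C}^\perp]$. This holds because $\mathbf{c}\mapsto\chi(\langle\mathbf{c},\mathbf{u}\rangle)$ is a nontrivial character of $\mathcal{C}$ whenever $\mathbf{u}\notin\mathcal{C}^\perp$. Finally, the Fourier transform of a product factors coordinatewise. Applying the identity with $\mathcal{C}^\perp$ in place of $\mathcal{C}$ and using $(\mathcal{C}^\perp)^\perp=\mathcal{C}$ gives the displayed form.

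Next, specialize the variables. Set $x_0=z_0$, and set $x_\beta=z_s$ whenever $\beta\in k_sH$. By definition, the CWE then becomes the GWE. Because this is a ring homomorphism from polynomials in the $x_\beta$ to polynomials in the $z_s$, I can apply it to both sides of the identity. The transformed variable becomes
\[
X_\gamma\mapsto z_0+\sum_{s}\Big(\sum_{\beta\in k_sH}\chi(\gamma\beta)\Big)z_s .
\]
Identify $i$ with $\gamma=\xi^{(p-1)/4}$, so that $H=\{\pm1,\pm i\}$. Since $k_sH=\omega_sH$, the inner sum equals $\chi(\gamma\omega_s)+\chi(i\gamma\omega_s)+\chi(-\gamma\omega_s)+\chi(-i\gamma\omega_s)$.

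The key check, and the main obstacle, is well-definedness: the substituted $X_\gamma$ must depend only on the coset of $\gamma$. Only then does each factor $X_{c_m}$ become $Z_j$ with $c_m\in k_jH$, and collapse to a polynomial in the $Z_j$. This holds because multiplying $\gamma$ by $u\in H$ permutes $\omega_sH$, since $uH=H$. So for $\gamma\in k_jH$ the sum equals the one with $\gamma$ replaced by $\omega_j$, giving exactly $Z_j$. For the same reason the result does not depend on the choice of the representatives $\omega_j$. With this checked, the specialized right-hand side is $\frac{1}{|\mathcal{C}|}\mathrm{GWE}_\mathcal{C}(Z_0,\dots,Z_{(p-1)/4})$, with $Z_0=\sum_\beta x_\beta\mapsto z_0+4\sum_s z_s$, which is consistent with the formula at $\omega_0=0$.
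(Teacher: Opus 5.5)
The paper gives no proof of this theorem; it is quoted from Huber~\cite{AAECC-H-MTTDMM}, so there is no in-paper argument to compare against. Your derivation is correct and is a complete, self-contained proof.

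The Poisson summation step together with $\sum_{\mathbf{c}\in\mathcal{C}}\chi(\langle\mathbf{c},\mathbf{u}\rangle)=|\mathcal{C}|\,[\mathbf{u}\in\mathcal{C}^\perp]$ already gives the complete weight enumerator identity in the displayed direction. The final interchange of $\mathcal{C}$ and $\mathcal{C}^\perp$ is therefore unnecessary, though harmless.

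The specialization $x_0\mapsto z_0$, $x_\beta\mapsto z_s$ for $\beta\in k_sH$ is justified by exactly the point you isolate. The sum $\sum_{\beta\in\omega_sH}\chi(\gamma\beta)$ depends only on the coset $\gamma H$, because $uH=H$ for $u\in H$. Hence each $X_{c_m}$ specializes to $Z_j$ with $c_m\in k_jH$, independently of the chosen representatives. This is the usual mechanism behind MacWilliams identities for enumerators that record only the orbit of each coordinate under a multiplicative group, here the units $\{\pm1,\pm i\}$.

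Compared with the bare citation, your argument also explains a structure the paper uses afterwards. The coefficient $\alpha_{j,s}$ depends only on the coset $\omega_j\omega_sH$. When the cosets are indexed by powers of a primitive element, as in Lemma~\ref{lem-field-decomposition}, that coset depends only on $j+s$ modulo $(p-1)/4$. This is why the rows of the transformation matrix are cyclic shifts of one another.

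Two cosmetic points. First, you use $\gamma$ both as the index in $X_\gamma$ and for $\xi^{(p-1)/4}$. Likewise, $\xi$ is simultaneously a primitive element of $\mathbb{F}_p^*$ and $e^{2\pi i/p}$, a clash inherited from the paper. Rename one symbol in each pair. Second, the orthogonality step uses that $\mathcal{C}$ is linear, which the statement leaves implicit.
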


\begin{lem}\label{lem-field-decomposition}
    Let $\mathbb{F}_{p}$ be the finite field of order $p$, where $p=a^2+b^2\equiv 1\pmod 4$ with $a<b$. Then, $\mathbb{F}_{p}$ has the coset decomposition given as follows:
    \[
    \mathbb{F}_{p}=\{0\}\cup \alpha H\cup \alpha^2H\cup \cdots\cup \alpha^{(p-1)/4}H
    \]
    where $\alpha$ is a primitive element of $\mathbb{F}_p$.
\end{lem}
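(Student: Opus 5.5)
We need to show that the cosets $\alpha^jH$ for $j=1,\dots,(p-1)/4$ are pairwise distinct and therefore exhaust $\mathbb{F}_p^\ast$. The plan is to identify $H$ as the unique subgroup of order $4$ in the cyclic group $\mathbb{F}_p^\ast$ and pass to the quotient group $\mathbb{F}_p^\ast/H$.

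\textbf{Step 1: identify $H$.} Recall $H=\{\pm1,\pm\gamma\}$ with $\gamma=\xi^{(p-1)/4}$ for a primitive element $\xi$. This $\gamma$ has multiplicative order exactly $4$, so $H=\langle\gamma\rangle$ is the subgroup of order $4$. Since $\mathbb{F}_p^\ast$ is cyclic of order $p-1$ and $4\mid p-1$, it has a unique subgroup of order $4$. That subgroup is
\[
\langle \alpha^{(p-1)/4}\rangle
\]
for any primitive element $\alpha$, so $H$ does not depend on which primitive element is chosen. Equivalently, $H$ is the set of fourth roots of unity, which corresponds to $\{\pm1,\pm i\}$ under $\eta$.

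\textbf{Step 2: the quotient is generated by $\alpha H$.} The quotient $\mathbb{F}_p^\ast/H$ is cyclic of order $m=(p-1)/4$. It is generated by the image $\alpha H$, because $\alpha$ generates $\mathbb{F}_p^\ast$.

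\textbf{Step 3: the cosets are distinct.} Suppose $\alpha^jH=\alpha^kH$ with $1\le j<k\le m$. Then $\alpha^{k-j}\in H$, so $\alpha^{4(k-j)}=1$. Hence $p-1$ divides $4(k-j)$, that is, $m\mid k-j$. This is impossible since $0<k-j<m$.

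\textbf{Step 4: count.} Steps 2 and 3 give $m$ distinct cosets, each of size $4$. Together they cover $4m=p-1$ elements, which is all of $\mathbb{F}_p^\ast$. Adjoining $\{0\}$ yields the stated decomposition of $\mathbb{F}_p$.

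The index range $1,\dots,(p-1)/4$, rather than $0,\dots,(p-1)/4-1$, needs no separate treatment. The coset $\alpha^{m}H=\alpha^{(p-1)/4}H$ equals $H$ itself, because $\alpha^{(p-1)/4}$ has order $4$ and so lies in $H$. Thus the exponent $m$ plays the role of the trivial coset.

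The only real obstacle is making sure that $H$, defined via one primitive element $\xi$, coincides with the subgroup generated by $\alpha^{(p-1)/4}$ for an arbitrary primitive element $\alpha$. Uniqueness of subgroups of a given order in a cyclic group settles this. The hypothesis $p=a^2+b^2$ with $a<b$ enters only to guarantee that $4\mid p-1$ and that $H$ consists exactly of the weight-$1$ units.
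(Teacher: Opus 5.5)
Your proof is correct and takes essentially the same route as the paper. Both identify $H$ as the unique order-$4$ subgroup $\langle\alpha^{(p-1)/4}\rangle$, rule out $\alpha^{k-j}\in H$ for $0<k-j<(p-1)/4$, and finish by counting the $(p-1)/4$ cosets; the only cosmetic difference is that you get the contradiction from $\alpha^{4(k-j)}=1$ and the order of $\alpha$, whereas the paper reads it off from the explicit list $H=\{1,\alpha^{(p-1)/4},\alpha^{2(p-1)/4},\alpha^{3(p-1)/4}\}$.
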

\begin{proof}
    For some $1\le i\le j\le (p-1)/4$, let $\alpha^iH=\alpha^jH$, that is, $\alpha^{j-i}\in H$. Since there is a unique subgroup of $\mathbb{F}_p^*$ of order $4$, we have
    \[
    H=\langle \alpha^{(p-1)/4}\rangle=\{1, \alpha^{(p-1)/4}, \alpha^{2(p-1)/4}, \alpha^{3(p-1)/4}\}.
    \]
    Since $0\le j-i<(p-1)/4$, $j$ must be equal to $i$. Since the number of cosets of $H$ in $\mathbb{F}_p^*$ is $|\mathbb{F}_p^*|/|H|=(p-1)/4$, this completes the proof.
\end{proof}

Next, we provide a procedure to find an upper bound for the minimum Mannheim distance of self-dual codes over $\mathbb{F}_{p}$ where $p=a^2+b^2\equiv 1\pmod 4$ with $a<b$. For some $n\ge 2$, let
\begin{align*}
S=\{(s_0, s_1, s_2, \ldots, s_{(p-1)/4})\in\mathbb{Z}^{(p-1)/4+1}~|~s_0+s_1+s_2+\cdots+&s_{(p-1)/4}=n,\\&s_i\ge 0\mbox{ for all }i\}
\end{align*}
be the set of compositions of vectors in $\mathbb{F}_p^n$. Then
\[
|S|={n+(p-1)/4\choose (p-1)/4}=\frac{1}{((p-1)/4)!}\prod_{i=1}^{(p-1)/4}(n+i)
\]
and the Gaussian weight enumerator of a linear code $\mathcal{C}$ over $\mathbb{F}_p$ with length $n$ is given as
\[
{\rm GWE}_\mathcal{C}(z_0, z_1, z_2, \ldots, z_{(p-1)/4})=\sum_{t\in S}A(t)z_0^{t_0}z_1^{t_1}z_2^{t_2}\cdots z_{(p-1)/4}^{t_{(p-1)/4}}.
\]
Since it is the weight enumerator of a linear code $\mathcal{C}$, $A(t)\ge 0$ for every $t\in S$ and $A(n, 0, \ldots, 0)=1$. If $\mathcal{C}$ is self-dual, then we have
\[
\sum_{t\in S}A(t)=p^{n/2}.
\]
We denote
\[
\alpha_{j,s}
=
\chi(\omega_j\omega_s)+\chi(i\omega_j\omega_s)+\chi(-\omega_j\omega_s)+\chi(-i\omega_j\omega_s)
=
\sum_{a\in \omega_j\omega_s H}\xi^a
\]
for \(1\le j,s\le (p-1)/4\), where \(\omega_j\) is a representative of \(k_jH\).
Then
\[
Z_j=z_0+\sum_{s=1}^{(p-1)/4}\alpha_{j,s}z_s.
\]
By Theorem~\ref{Thm-Gaussian-enumerator}, the Gaussian weight enumerator of $\mathcal{C}^\perp$ is given as
\[
    {\rm{GWE}}_{\mathcal{C}^\perp}(z_0, z_1, z_2,\ldots, z_{(p-1)/4})=\frac{1}{p^{n/2}}{\rm{GWE}}_\mathcal{C}(Z_0, Z_1, Z_2,\ldots, Z_{(p-1)/4}),
\]
where
\[
\begin{pmatrix}
    Z_0\\Z_1\\Z_2\\\vdots\\Z_{(p-1)/4}
\end{pmatrix}=
\begin{pmatrix}
    1 & 4 & 4 & \cdots & 4\\
    1 & \alpha_1 & \alpha_2 & \cdots & \alpha_{(p-1)/4}\\
    1 & \alpha_2 & \alpha_3 & \cdots & \alpha_1\\
    \vdots & \vdots & \vdots & \cdots & \vdots\\
    1 & \alpha_{(p-1)/4} & \alpha_1 & \cdots & \alpha_{(p-1)/4-1}
\end{pmatrix}
\begin{pmatrix}
    z_0\\z_1\\z_2\\\vdots\\z_{(p-1)/4}
\end{pmatrix}.
\]
Note that ${\rm{GWE}}_{\mathcal{C}^\perp}(z_0,z_1,\ldots,z_{(p-1)/4})={\rm{GWE}}_{\mathcal{C}}(z_0,z_1,\ldots,z_{(p-1)/4})$ if $\mathcal{C}$ is self-dual. For each $n$, consider the following equation
\[
p^{\frac{n}{2}}\sum_{t\in S}A(t)z_0^{t_0}z_1^{t_1}z_2^{t_2}\cdots z_{(p-1)/4}^{t_{(p-1)/4}}=\sum_{t\in S}A(t)Z_0^{t_0}Z_1^{t_1}Z_2^{t_2}\cdots Z_{(p-1)/4}^{t_{(p-1)/4}}
\]
with $A(n, 0, \ldots, 0)=1$ and $A(t)\in\mathbb{Z}^{\ge 0}$ for every $t\in S$. For a given integer $d>0$, impose the constraints
\[
A(t)=0
\]
for every composition $t=(t_0,\ldots,t_{(p-1)/4})\in S$ satisfying
\[
0<\sum_{j=1}^{(p-1)/4} m_j t_j < d,
\]
where $m_j$ denotes the Mannheim weight of any element of the coset $k_jH$.
If the resulting system has a solution, but the analogous system with $d$ replaced by $d+1$ has no solution, then $d$ gives an upper bound on the minimum Mannheim distance of self-dual codes of length $n$ over $\mathbb{F}_p$.

To further reduce the number of variables in the equation, we introduce several additional strategies. Recall that
\[
\mathbb{F}_{p}\backslash\{0\}=k_1H\cup k_2H\cup\cdots\cup k_{(p-1)/4}H
\]
is the coset decomposition of $\mathbb{F}_p$. Define a bijection $\varphi:\mathbb{F}_{p}\to\mathbb{F}_{p}$ as $\varphi(x)=\alpha x$ for $x\in\mathbb{F}_{p}$ where $\alpha$ is a primitive element of $\mathbb{F}_p$. By Lemma~\ref{lem-field-decomposition}, the map $\varphi$ permutes the collection of cosets $\{k_jH\}_{j=1}^{(p-1)/4}$ in a single cycle. Let $\sigma$ be the permutation of the indices $\{1, 2, \ldots, (p-1)/4\}$ defined as $\varphi(k_jH)=k_{\sigma(j)}H$. Let $\mathbf{c}\in\mathcal{C}$ be a codeword with composition $t=(t_0, t_1, t_2,\ldots, t_{(p-1)/4})$, that is $\mathbf{c}$ consists of $t_0$ zeros and $t_j$ coordinates from $k_jH$ for $1\le j\le (p-1)/4$. Then $\varphi(\mathbf{c})=(\varphi(c_1), \varphi(c_2), \ldots, \varphi(c_n))$ is also a codeword of $\mathcal{C}$ and its composition $\varphi(t)$ is given by permuting the coordinates of $t$ according to $\sigma$. Thus, $A(t)=A(\varphi(t))$. By repeating this process, we have
\[
A(t)=A(\varphi(t))=A(\varphi^2(t))=\cdots=A(\varphi^{((p-1)/4)-1}(t)).
\]

Subsequently, we use the ideas from the proofs of Theorems~\ref{thm-F13-4},~\ref{thm-F17-4} and~\ref{thm-F17-6} to further restrict the values of $A(t)$. For any codeword $\mathbf{c}=(c_1, c_2, \ldots, c_n)$ of Hamming weight $h\le n$, since $c_1^{-1}\cdot\mathbf{c}\in\mathcal{C}$, we may assume that
\[
\mathbf{c}=(1, u_1, u_2, \ldots, u_{h-1}, 0, \ldots, 0).
\]
If $\mathcal{C}$ is self-dual, then since $\mathbf{c}\cdot\mathbf{c}=0$, $(u_1^2, u_{2}^2, \ldots, u_{h-1}^2)$ must be a solution of the following equation:
\begin{equation}\label{eq-LP-str2}
X_1+X_2+\ldots +X_{h-1}\equiv-1\pmod p.
\end{equation}
Since we are interested in the composition $t$ of a codeword $\mathbf{c}$, not the codeword $\mathbf{c}$ itself, we find all solutions to Equation~\eqref{eq-LP-str2} such that $X_i\le X_j$ for $1\le i< j\le h-1$ in the set $A=\{x^2~:~x\in\mathbb{F}_p^*\}$. Let $s=(s_1, s_2, \ldots, s_{h-1})$ be one solution. Note that for any $a\in A$, if $x^2=a$, then $(-x)^2=a$. Since the elements squaring to $a$ always belong to the same coset, any codeword $\mathbf{c}=(1, u_1, u_2, \ldots, u_{h-1}, 0,\ldots, 0)\in\mathcal{C}$ such that $u_j^2=s_j$ for $1\le j\le h-1$ has the same composition determined by $s$.

Let $t(s)$ denote the composition determined by the solution $s$ and let
\[
\mathcal{O}_s=\{t(s), \varphi(t(s)), \ldots, \varphi^{((p-1)/4)-1}(t(s))\}
\]
be the orbit of $t(s)$ under $\varphi$. Let $E_h$ be the set of solutions of Equation~\eqref{eq-LP-str2} with $X_i\le X_j$ for $1\le i< j\le h-1$ in the set $A=\{x^2~:~x\in\mathbb{F}_p^*\}$. Then, if $\mathbf{c}$ is a codeword of a self-dual code $\mathcal{C}$ with Hamming weight $h$, then its composition $t$ must be in $\bigcup_{s\in E_h}\mathcal{O}_s$. Thus, we set
\[
A(t)=0\quad\mbox{for}\quad t\in S\backslash\bigcup_{h=2}^{n}\bigcup_{s\in E_h}\mathcal{O}_s.
\]

To put it all together, we need to find the largest $d$ such that the equation
\begin{equation}\label{eq-LP-f}
p^{\frac{n}{2}}\sum_{t\in S}A(t)z_0^{t_0}z_1^{t_1}z_2^{t_2}\cdots z_{(p-1)/4}^{t_{(p-1)/4}}=\sum_{t\in S}A(t)Z_0^{t_0}Z_1^{t_1}Z_2^{t_2}\cdots Z_{(p-1)/4}^{t_{(p-1)/4}}
\end{equation}
has a solution under the constraints
\begin{enumerate}
    \item[(i)] $A(n, 0, \ldots, 0)=1$, $A(t)\in\mathbb{Z}^{\ge 0}$ for $t\in S$,
    \item[(ii)] $A(t)=0$ for $t\in S$ such that $0<\sum_{j=1}^{(p-1)/4} m_j t_j < d$,
    \item[(iii)] $A(t)=A(\varphi(t))=A(\varphi^2(t))=\cdots=A(\varphi^{((p-1)/4)-1}(t))$ for $t\in S$,
    \item[(iv)] $A(t)=0$ for $t\in S\backslash\bigcup_{h=2}^{n}\bigcup_{s\in E_h}\mathcal{O}_s$.
\end{enumerate}

From the above computation for $2\le n\le 14$, in $\mathbb{F}_{13}$, we obtain:
\[
\begin{array}{|c|c|c|c|c|c|c|c|}
\hline
    n   & 2 & 4 & 6 & 8 & 10 & 12 & 14 \\\hline
    d_\pi^*(n) & 2 & 5 & 5 & 7 & 9  & 10 & 12 \\
    \hline
\end{array}
\]
where $d_\pi^*(n)$ is the largest $d$ such that Equation~\eqref{eq-LP-f} has a solution. Also in $\mathbb{F}_{17}$, we obtain:
\[
\begin{array}{|c|c|c|c|c|c|c|}
\hline
    n & 2 & 4 & 6 & 8 & 10 & 12 \\\hline
    d_\pi^*(n) & 2 & 5 & 6 & 9 & 10  & 12 \\
    \hline
\end{array}
\]
These bounds improve those listed in Tables~\ref{tab-F13-general-bound} and~\ref{tab-F17-general-bound}.

\begin{lem}{\rm(\cite{DM-BGGHK-sdcode})}\label{sd-classify-13}
    There is a unique self-dual code of length $2$, and there are two self-dual codes of length $4$, five self-dual codes of length $6$, and $21$ self-dual codes of length $8$ over $\mathbb{F}_{13}$ up to $(1, -1, 0)$-equivalence.
\end{lem}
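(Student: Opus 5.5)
This lemma is quoted from \cite{DM-BGGHK-sdcode}, so within the paper it is enough to cite it. To check it independently, I would use the standard combination of a mass formula with the building-up construction. Throughout, equivalence means $(1,-1,0)$-monomial equivalence. The relevant group $\mathcal{M}_n$ consists of signed permutation matrices and has order $2^n n!$. It preserves the Euclidean inner product, so it acts on the set of self-dual codes of length $n$.

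First I would fix the counting side. For $p\equiv 1\pmod 4$ there is a square root of $-1$ in $\mathbb{F}_p$; over $\mathbb{F}_{13}$ these are $5$ and $8$. In this case the total number of self-dual codes of length $n$ is
\[
N(n)=2\prod_{i=1}^{n/2-1}(p^i+1).
\]
The mass formula then reads $\sum_{[\mathcal{C}]} 2^n n!/|\mathrm{Aut}(\mathcal{C})|=N(n)$, where the sum runs over equivalence classes and $\mathrm{Aut}(\mathcal{C})$ is the stabiliser of $\mathcal{C}$ in $\mathcal{M}_n$. A list of pairwise inequivalent self-dual codes is therefore complete exactly when its mass equals $N(n)$.

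Length $2$ can be done by hand. Any self-dual $[2,1]$ code is generated by $(1,c)$ with $c^2=-1$, so $c\in\{5,8\}$. Since $8=-5$, negating the second coordinate maps one code to the other. Hence there is a unique class, and its mass is $N(2)=2$.

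For lengths $4,6,8$ I would generate candidates with the building-up construction of Kim and Lee. Start from a generator matrix $G_0$ of a self-dual code of length $n-2$, and choose $x$ with $x^2=-1$ (here $x=5$). For each vector $\mathbf{c}\in\mathbb{F}_{13}^{n-2}$ with $\mathbf{c}\cdot\mathbf{c}=-1$, form the new generator matrix with first row $(1,0,\mathbf{c})$. The remaining rows are $(-y_j, -xy_j, \mathbf{r}_j)$, where $\mathbf{r}_j$ is the $j$-th row of $G_0$ and $y_j=\mathbf{c}\cdot\mathbf{r}_j$. Every self-dual code of length $n$ over $\mathbb{F}_p$ with $p\equiv1\pmod 4$ arises this way up to equivalence, so running over all $\mathbf{c}$ (up to symmetry) for every class of length $n-2$ produces all classes of length $n$.

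I would then reduce the resulting codes up to $\mathcal{M}_n$-equivalence. Invariants such as the Hamming weight enumerator, the composition enumerator $\mathrm{GWE}$, and the weight distribution restricted to the cosets of $H$ give a first separation. Codes that these invariants cannot separate I would compare with a canonical-form or graph-isomorphism computation. For each surviving representative I would compute $|\mathrm{Aut}(\mathcal{C})|$ and check that the masses sum to $N(4)=28$, $N(6)=28\cdot170$ and $N(8)=28\cdot170\cdot2198$. Matching these totals certifies the counts $2$, $5$ and $21$.

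The main obstacle is the length-$8$ case. Equivalence testing and automorphism-group computation under $\mathcal{M}_8$, a group of order $2^8\cdot8!$, must be done carefully, since small automorphism groups make the mass contributions large and any missed or duplicated class shows up only as a mismatch in the total. I would handle this by computer, and use the mass identity as the final consistency check rather than trusting the invariant-based separation alone.
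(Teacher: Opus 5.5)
The paper gives no argument of its own for this lemma: it imports the classification from Betsumiya et al.\ and then uses the listed representatives, with their automorphism group orders in Table~\ref{fig:code_results_13}, to compute $d_\pi$ directly. Your proposal likewise says the citation suffices, and then adds an independent certification via a mass formula and the building-up construction, which the paper does not attempt. That certification is sound. For $p\equiv 1\pmod 4$ the number of self-dual codes of length $n$ is $2\prod_{i=1}^{n/2-1}(p^i+1)$. The signed permutation group of order $2^n n!$ acts on these codes. The Kim--Lee building-up construction is exhaustive for $n\ge 4$, provided you choose two coordinates on which $\mathcal{C}$ projects onto $\mathbb{F}_{13}^2$. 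Such a pair always exists when $n\ge 4$: otherwise $\mathcal{C}$ would contain weight-$2$ vectors supported on $\{1,2\}$ and on $\{1,3\}$, and these have nonzero inner product. The construction is not exhaustive for $n=2$, which is why handling length $2$ by hand, as you do, is necessary. Over the bare citation, your route buys a checkable certificate, and the data printed in the paper already pass it. For length $2$, $8/4=2$. For length $4$, $384/32+384/24=28$. For length $6$, $120+1280+480+1920+960=4760$. For length $8$, the $21$ terms $2^8\cdot 8!/|\mathrm{Aut}(\mathcal{C})|$ sum to $10462480=28\cdot 170\cdot 2198$. One caveat: the mass identity certifies completeness only once pairwise inequivalence has been established by differing invariants or canonical forms. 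It therefore complements that separation step rather than checking it.
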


\begin{thm}\label{13-sd-opt8}
    Let $\pi=2+3i$. For $2\le n\le 8$, the values of $d_\pi^{SD}(n)$ are as follows:
    \[
    d_\pi^{SD}(n)=\begin{cases}
        2, & \mbox{if }n=2,\\
        5, & \mbox{if }n=4,\\
        5, & \mbox{if }n=6,\\
        6, & \mbox{if }n=8.
    \end{cases}
    \]
\end{thm}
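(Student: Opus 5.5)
The plan is to combine upper bounds, obtained either from the theorems above or from the classification in Lemma~\ref{sd-classify-13}, with explicit codes attaining them. Since $(1,-1,0)$-monomial equivalence preserves Mannheim weights (negating coordinates does not change weight), $d_\pi$ is an invariant of each equivalence class. The classification therefore reduces the problem, for each $n\in\{2,4,6,8\}$, to a finite check over representatives.

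For the upper bounds:
\begin{itemize}
\item $n=4$: Theorem~\ref{thm-F13-4} gives $d_\pi\le 5$.
\item $n=6$: the linear-programming computation gives $d_\pi^*(6)=5$.
\item $n=2$: I would argue directly. A self-dual $[2,1]$ code is generated by $(1,c)$ with $c^2=-1$, so $c\in\{5,8\}\subset H$. Every nonzero codeword is then $\lambda(1,c)$ with both coordinates in the same coset $\lambda H$, and choosing $\lambda=1$ gives weight $2$. Hence $d_\pi^{SD}(2)=2$.
\item $n=8$: the LP bound only gives $7$, so the LP alone does not settle this case. Here I would invoke the classification: there are 21 classes, and I would compute the minimum Mannheim weight of a generator-matrix representative of each class. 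This is a finite enumeration of $13^4$ codewords per code, and I expect every class to give at most $6$. To organise the search, I would enumerate codewords with small Hamming weight, normalised to have first nonzero coordinate $1$, and use the coset weights $\mathrm{wt}_\pi=1,2,2$ on $H$, $2H$, $4H$. Taking minima over scalar multiples $\lambda\in\{1,2,4\}$ reduces the work to checking whether some multiple has weight at most $6$.
\end{itemize}

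For the lower bounds I would exhibit explicit codes:
\begin{itemize}
\item $n=4$: a code such as the one generated by $\begin{pmatrix}1&0&a&b\\0&1&-b&a\end{pmatrix}$ with $a^2+b^2=-1$, for example $a=2$, $b=\pm 3$ (since $4+9=13\equiv 0$, this needs adjusting to get $a^2+b^2\equiv 12$; something like $a=1$, $b=\sqrt{11}$ if it exists, or otherwise another choice). I would verify $d_\pi=5$ by running over the 12 scalar multiples of each of the $168$ nonzero codewords, grouped by coset.
\item $n=6$: either a self-dual code with $d_\pi=5$ found among the five classes, or the direct sum of the length-$4$ optimal code with the length-$2$ code. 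The direct sum has $d_\pi=2$, so it fails, and I would instead pick the best of the five classes.
\item $n=8$: take the best of the 21 classes, which should have $d_\pi=6$.
\end{itemize}

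The main obstacle is $n=8$. The LP bound of $7$ must be beaten, and the only available route is the exhaustive check over all 21 classified codes: each must be shown to contain a codeword of Mannheim weight at most $6$, and at least one must contain none of weight at most $5$. I would carry this out by computer, reporting a generator matrix of the optimal code and, for each of the other classes, a witnessing low-weight codeword.
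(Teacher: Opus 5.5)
Your plan is correct and is essentially the paper's proof, which obtains all four values by computing $d_\pi$ for every class in Lemma~\ref{sd-classify-13}. Your upper bounds at $n=2,4,6$ (the direct argument, Theorem~\ref{thm-F13-4}, and the LP value $d_\pi^*(6)=5$) just shortcut part of that computation.

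One fix is needed for $n=4$: $11$ is not a square in $\mathbb{F}_{13}$, but $(a,b)=(4,3)$ satisfies $a^2+b^2\equiv -1$. It gives an MDS self-dual code with $d_\pi=5$, namely the class $C_{13,4,2}$. Indeed, every weight-$3$ codeword has one coordinate in each of $H,2H,4H$, and no weight-$4$ codeword has all coordinates in a single coset.
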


\begin{lem}{\rm(\cite{DM-BGGHK-sdcode})}\label{sd-classify-17}
    There is a unique self-dual code of length $2$, and there are two self-dual codes of length $4$, six self-dual codes of length $6$, and $47$ self-dual codes of length $8$ over $\mathbb{F}_{17}$ up to $(1, -1, 0)$-equivalence.
\end{lem}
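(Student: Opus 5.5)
The plan is to reprove this classification computationally, using the mass formula as the completeness check. The same approach is used for Lemma~\ref{sd-classify-13}. Since $17\equiv 1\pmod 4$, $-1$ is a square in $\mathbb{F}_{17}$ ($4^2=-1$), so self-dual codes exist for every even length $n$. Their total number is
\[
N(n)=\prod_{i=0}^{n/2-1}\bigl(p^{i}+1\bigr),\qquad p=17.
\]
The group of $(1,-1,0)$-monomial matrices has order $2^n n!$. Hence a list $\mathcal{C}_1,\ldots,\mathcal{C}_r$ of pairwise inequivalent self-dual codes of length $n$ is complete if and only if
\[
\sum_{j=1}^{r}\frac{2^n\, n!}{|\mathrm{Aut}_{\pm}(\mathcal{C}_j)|}=N(n),
\]
where $\mathrm{Aut}_{\pm}$ denotes the group of signed permutations fixing the code. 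The task therefore splits into two parts: producing enough inequivalent codes, and verifying this identity for $n=2,4,6,8$.

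For $n=2$ I would argue by hand. Any self-dual code is spanned by $(1,c)$ with $1+c^2=0$, so $c=\pm 4$. The two choices are exchanged by negating the second coordinate, which gives uniqueness. For $n=4$ I would take a generator matrix in standard form $[I_2\mid A]$ with $AA^{T}=-I_2$, which we may assume after a coordinate permutation. I would then enumerate the possible $A$ up to row and column signed permutations. Two classes should appear: the direct sum of two copies of the length-$2$ code, and an indecomposable code with $d_H=3$. I would check their automorphism group orders against $N(4)=2\cdot 18=36$.

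For $n=6$ and $n=8$ I would generate candidates recursively, using the building-up construction of~\cite{IEEE-K-building,IEEE-KC-building}. Up to equivalence, every self-dual code of length $n$ arises from a code of length $n-2$ by adjoining a vector $\mathbf{x}$ together with an element $c$ satisfying $c^2=-1$. Applying this to the length-$4$ codes gives length-$6$ candidates, and applying it to those gives length-$8$ candidates. I would sort the candidates into equivalence classes by computing invariants first, namely the composition enumerator and the Hamming weight distribution. Any remaining ties would be resolved by a canonical-form or isomorphism test, for instance with Magma's code isomorphism adapted to signed permutations. For each class I would then compute $|\mathrm{Aut}_\pm|$ and check the mass identity. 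The expected outcome is $6$ classes at length $6$ with total mass $N(6)=2\cdot 18\cdot 290$, and $47$ classes at length $8$ with total mass $N(8)=2\cdot 18\cdot 290\cdot 4914$.

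The main obstacle is length $8$. The candidate list from the building-up step is large, and certifying exactly $47$ classes requires reliable equivalence testing and correct automorphism group orders. If any automorphism order is wrong, the mass sum will not close. I would rely on computer algebra for this step. As a cross-check, I would verify that the decomposable codes, which are direct sums of codes from lower lengths, are accounted for consistently.
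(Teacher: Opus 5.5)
Your proposal is correct and is essentially how the cited classification was obtained: the paper gives no argument of its own and simply quotes the result of \cite{DM-BGGHK-sdcode}, which rests on the same method of enumerating candidates and certifying completeness with the $(1,-1,0)$-mass formula.

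As a consistency check, the automorphism group orders listed for $\mathbb{F}_{17}$ in Table~\ref{fig:code_results_13} satisfy your identity exactly at lengths $2,4,6,8$. For example, at length $8$ the $47$ codes give $\sum_j 1/|\mathrm{Aut}(\mathcal{C}_j)| = 10179/2048 = N(8)/(2^8\cdot 8!)$.
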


\begin{thm}\label{17-sd-opt8}
    Let $\pi=1+4i$. For $2\le n\le 8$, the values of $d_\pi^{SD}(n)$ are as follows:
    \[
    d_\pi^{SD}(n)=\begin{cases}
        2, & \mbox{if }n=2,\\
        5, & \mbox{if }n=4,\\
        6, & \mbox{if }n=6,\\
        7, & \mbox{if }n=8.
    \end{cases}
    \]
\end{thm}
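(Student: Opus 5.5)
The proof will pair upper bounds with matching lower bounds, one length at a time. Throughout, $\pi=1+4i$ and the coset structure of $\mathbb{F}_{17}$ is the one from Theorem~\ref{thm-F17-4}: the elements of $H$ have weight $1$, those of $3H$ weight $2$, those of $2H$ weight $4$, and those of $6H$ weight $5$.

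\textbf{Upper bounds.}
\begin{itemize}
\item For $n=4$, Theorem~\ref{thm-F17-4} gives $d_\pi^{SD}(4)\le 5$.
\item For $n=6$, Theorem~\ref{thm-F17-6} gives $d_\pi^{SD}(6)\le 6$.
\item For $n=2$, every self-dual code has the form $\langle(1,c)\rangle$ with $c^2=-1$, so $c=\pm4\in H$. Its nonzero codewords are $\lambda(1,c)$, and both coordinates lie in the same coset $\lambda H$. Their Mannheim weight is therefore $2\,\mathrm{wt}_\pi(\lambda)\ge 2$, with equality at $\lambda=1$. Hence $d_\pi^{SD}(2)=2$.
\item For $n=8$, the linear-programming bound from Equation~\eqref{eq-LP-f} only gives $9$, which is not sharp. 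Instead I will use the classification in Lemma~\ref{sd-classify-17}: there are $47$ codes up to $(1,-1,0)$-monomial equivalence. This equivalence preserves Mannheim weight, because negating a coordinate keeps it in the same $H$-coset. So it suffices to compute $d_\pi$ for each of the $47$ representatives and check that the maximum is $7$.
\end{itemize}

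\textbf{Lower bounds.} I will exhibit explicit codes.
\begin{itemize}
\item For $n=4$, take the generator matrix $(I_2\mid A)$ with $A$ a $2\times 2$ matrix satisfying $AA^T=-I$, for instance $A=\begin{pmatrix}a&b\\-b&a\end{pmatrix}$ with $a^2+b^2=-1$. The case analysis in Theorem~\ref{thm-F17-4} shows which codewords can reach weight $5$: a Hamming-weight-3 codeword with $c_2^2=c_3^2=8$, that is, $c_2,c_3\in 3H$. So I will choose $a,b$ to force this pattern and check that $d_H=3$ and $d_\pi=5$. Checking all $17^2-1=288$ codewords, or better all $72$ codewords up to the action of $H$, is routine.
\item For $n=6$, take $(I_3\mid A)$ with $AA^T=-I_3$ over $\mathbb{F}_{17}$. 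By the analysis in Theorem~\ref{thm-F17-6}, it must be an MDS code ($d_H=4$) whose weight-4 codewords avoid composition (i), the all-$H$ case. Equivalently, every weight-4 codeword, after normalization, has squares $(8,9,16)$ or $(2,15,16)$.
\item For $n=8$, take the code from the classification that attains $7$.
\end{itemize}

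In each case I will reduce the computation using the fact that multiplication by $H$ preserves weight. It is then enough to compute $\min_\lambda \mathrm{wt}_\pi(\lambda\mathbf{c})$ over one representative $\mathbf{c}$ of each projective point, taking $\lambda$ over the four coset representatives $1,2,3,6$.

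\textbf{Main obstacle.} The hard step is $n=8$. The computation over all $47$ classes is straightforward by computer, but there is no slick argument that $7$ is the upper bound. I would present it as an exhaustive search: for each representative, enumerate the $(17^4-1)/4$ codewords modulo $H$ and record the minimum weight. To guard against equivalence subtleties, I would note that scaling a coordinate by $\pm1$ or permuting coordinates preserves both self-duality and Mannheim weight, so checking the class representatives suffices. A secondary difficulty is finding explicit small generator matrices for $n=4$ and $n=6$. I would search over $a,b$, and over orthogonal-like $A$ respectively, with a short computer check.
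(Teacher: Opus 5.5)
Your treatment of $n=8$ is the paper's argument. In fact the paper proves all four values that way: it computes $d_\pi$ for every class in Lemma~\ref{sd-classify-17} and reads the maxima off Table~\ref{fig:code_results_13}. Your $n=2$ argument is correct. Using Theorems~\ref{thm-F17-4} and~\ref{thm-F17-6} plus explicit codes for $n=4,6$ would be a reasonable lighter route. As written, however, the proposal has two genuine defects, both inherited from the proofs of those theorems.

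\textbf{First, the weight table is wrong.}
\begin{itemize}
\item The class of $2$ contains $2+0i$, so $\mathrm{wt}_\pi(2)=2$. Also $6\equiv 2+i$ has weight $3$.
\item This agrees with $\mathcal{W}_1=4$, $\mathcal{W}_2=8$, $\mathcal{W}_3=4$ (maximum weight $b-1=3$) and with $S(1,4)=8=1+2+2+3$ used earlier in the paper.
\item So the weights on $H,3H,2H,6H$ are $1,2,2,3$, not $1,2,4,5$.
\end{itemize}
Your reduction to $\lambda\in\{1,2,3,6\}$ is sound, but your table gives wrong scores. For example, the self-orthogonal word $(1,4,2,8)$ has cosets $H,H,2H,2H$. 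Your table scores it $10$, but its true minimum over scalar multiples is $6$. At $n=8$ the whole conclusion, both the bound $7$ and its attainment, is the output of this search. So the search must be run with weights computed from the definition, and the same applies to checking your $n=4,6$ examples.

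\textbf{Second, Theorem~\ref{thm-F17-6} does not carry the $n=6$ upper bound.} The list of solutions of $X+Y+Z\equiv -1$ in its proof omits four solutions: $(1,2,13)$, $(4,4,8)$, $(9,9,15)$ and $(4,13,16)$.
\begin{itemize}
\item The last one gives a word of weight $6$, so it is harmless.
\item The first three all describe words whose coset multiset is a scalar multiple of $\{H,H,2H,6H\}$. Every scalar multiple of such a word has Mannheim weight at least $7$. For example, $(1,1,6,8)$ has multiples of weights $7,7,8,10$.
\end{itemize}
So that argument does not exclude a $[6,3,4]$ self-dual code all of whose Hamming-weight-$4$ words are of this kind. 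Your ``equivalently, squares $(8,9,16)$ or $(2,15,16)$'' is also false.

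The bound $6$ is nevertheless true. You can prove it either by the paper's route (compute $d_\pi$ for the six classes) or by an extra argument such as the following.
\begin{itemize}
\item Write $A=4Q$ with $QQ^T=I$.
\item Forcing the six row and column words of $(I_3\mid A)$ into the bad orbit makes the rows and columns of $(q_{jk}^2)$ all equal, as multisets, to $\{16,15,4\}$, or all $\{13,13,9\}$, or all $\{8,8,2\}$.
\item The $\{13,13,9\}$ case contradicts orthogonality.
\item The other two cases each fix the code up to $(1,-1,0)$-equivalence. These codes contain a Hamming-weight-$4$ word of Mannheim weight $6$ and $4$, respectively.
\end{itemize}

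The $n=4$ step is unaffected. With the correct weights, both Hamming-weight-$3$ types in Theorem~\ref{thm-F17-4} have weight exactly $5$. For the lower bound, taking $a=b=5$ in your $(I_2\mid A)$ gives $d_\pi=5$.
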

Theorems~\ref{13-sd-opt8} and~\ref{17-sd-opt8} follow from direct computation using Lemma~\ref{sd-classify-13} and Lemma~\ref{sd-classify-17}. The computation results for each inequivalent code are given in Table~\ref{fig:code_results_13}. The notation for each code follows that used in~\cite{DM-BGGHK-sdcode}. Mannheim optimal self-dual codes are indicated by $(*)$.

\begin{center}
\begingroup
\small 
\setlength{\tabcolsep}{3pt} 
\renewcommand{\arraystretch}{0.95} 
\setlength\LTleft{0pt}
\setlength\LTright{0pt}

\begin{longtable}{lccc||lccc||lccc}
\caption{Classification of self-dual codes of lengths 2, 4, 6, and 8 over $\mathbb{F}_{13}\cong \mathbb{Z}[i]/(2+3i)$ and $\mathbb{F}_{17}\cong \mathbb{Z}[i]/(1+4i)$}
\label{fig:code_results_13}\\
\toprule
Code & $|\mbox{Aut}(\mathcal{C})|$ & $d_H$ & $d_\pi$ & Code & $|\mbox{Aut}(\mathcal{C})|$ & $d_H$ & $d_\pi$ & Code & $|\mbox{Aut}(\mathcal{C})|$ & $d_H$ & $d_\pi$\\
\midrule
\endfirsthead
\hline
Code & $|\mbox{Aut}(\mathcal{C})|$ & $d_H$ & $d_\pi$ & Code & $|\mbox{Aut}(\mathcal{C})|$ & $d_H$ & $d_\pi$ & Code & $|\mbox{Aut}(\mathcal{C})|$ & $d_H$ & $d_\pi$\\
\hline
\endhead
$C_{13, 2}(*)$ & 4 & 2 & 2 & $C_{17, 2}(*)$ & 4 & 2 & 2 & $C_{17, 8, 21}$ & 8 & 4 & 4\\
$C_{13, 4, 1}$ & 32 & 2 & 2 & $C_{17, 4, 1}$ & 32 & 2 & 2 & $C_{17, 8, 22}$ & 8 & 4 & 4\\
$C_{13, 4, 2}(*)$ & 24 & 3 & 5 & $C_{17, 4, 2}(*)$ & 16 & 3 & 5 & $C_{17, 8, 23}$ & 24 & 4 & 4\\
$C_{13, 6, 1}$ & 384 & 2 & 2 & $C_{17, 6, 1}$ & 384 & 2 & 2 & $C_{17, 8, 24}$ & 24 & 4 & 4\\
$C_{13, 6, 2}(*)$ & 36 & 3 & 5 & $C_{17, 6, 2}$ & 64 & 2 & 2 & $C_{17, 8, 25}$ & 32 & 4 & 6\\
$C_{13, 6, 3}$ & 96 & 2 & 2 & $C_{17, 6, 3}$ & 48 & 4 & 4 & $C_{17, 8, 26}(*)$ & 8 & 4 & 7\\
$C_{13, 6, 4}$ & 24 & 4 & 4 & $C_{17, 6, 4}$ & 24 & 4 & 4 & $C_{17, 8, 27}$ & 4 & 4 & 6\\
$C_{13, 6, 5}$ & 48 & 4 & 4 & $C_{17, 6, 5}$ & 16 & 3 & 4 & $C_{17, 8, 28}$ & 4 & 4 & 6\\
$C_{13, 8, 1}$ & 6144 & 2 & 2 & $C_{17, 6, 6}(*)$ & 12 & 4 & 6 & $C_{17, 8, 29}$ & 4 & 4 & 6\\
$C_{13, 8, 2}$ & 768 & 2 & 2 & $C_{17, 8, 1}$ & 6144 & 2 & 2 & $C_{17, 8, 30}$ & 4 & 4 & 6\\
$C_{13, 8, 3}$ & 144 & 2 & 2 & $C_{17, 8, 2}$ & 512 & 2 & 2 & $C_{17, 8, 31}$ & 4 & 4 & 6\\
$C_{13, 8, 4}$ & 96 & 2 & 2 & $C_{17, 8, 3}$ & 64 & 2 & 2 & $C_{17, 8, 32}$ & 2 & 4 & 6\\
$C_{13, 8, 5}$ & 192 & 2 & 2 & $C_{17, 8, 4}$ & 192 & 2 & 2 & $C_{17, 8, 33}$ & 4 & 4 & 6\\
$C_{13, 8, 6}$ & 1152 & 3 & 5 & $C_{17, 8, 5}$ & 96 & 2 & 2 & $C_{17, 8, 34}$ & 4 & 4 & 6\\
$C_{13, 8, 7}$ & 36 & 3 & 5 & $C_{17, 8, 6}$ & 48 & 2 & 2 & $C_{17, 8, 35}$ & 48 & 4 & 6\\
$C_{13, 8, 8}$ & 12 & 3 & 4 & $C_{17, 8, 7}$ & 512 & 3 & 5 & $C_{17, 8, 36}$ & 16 & 4 & 6\\
$C_{13, 8, 9}$ & 24 & 3 & 4 & $C_{17, 8, 8}$ & 32 & 3 & 4 & $C_{17, 8, 37}$ & 16 & 4 & 6\\
$C_{13, 8, 10}$ & 96 & 4 & 5 & $C_{17, 8, 9}$ & 16 & 3 & 5 & $C_{17, 8, 38}$ & 4 & 4 & 6\\
$C_{13, 8, 11}$ & 24 & 4 & 4 & $C_{17, 8, 10}$ & 16 & 3 & 4 & $C_{17, 8, 39}$ & 16 & 4 & 6\\
$C_{13, 8, 12}$ & 4 & 4 & 5 & $C_{17, 8, 11}$ & 8 & 3 & 4 & $C_{17, 8, 40}$ & 8 & 4 & 6\\
$C_{13, 8, 13}$ & 16 & 4 & 4 & $C_{17, 8, 12}$ & 4 & 3 & 5 & $C_{17, 8, 41}$ & 16 & 4 & 6\\
$C_{13, 8, 14}(*)$ & 16 & 4 & 6 & $C_{17, 8, 13}$ & 384 & 4 & 4 & $C_{17, 8, 42}$ & 8 & 4 & 6\\
$C_{13, 8, 15}(*)$ & 16 & 4 & 6 & $C_{17, 8, 14}$ & 16 & 4 & 4 & $C_{17, 8, 43}$ & 32 & 5 & 6\\
$C_{13, 8, 16}$ & 384 & 4 & 4 & $C_{17, 8, 15}$ & 32 & 4 & 4 & $C_{17, 8, 44}$ & 48 & 5 & 6\\
$C_{13, 8, 17}$ & 32 & 4 & 4 & $C_{17, 8, 16}$ & 4 & 4 & 4 & $C_{17, 8, 45}$ & 12 & 5 & 6\\
$C_{13, 8, 18}$ & 8 & 4 & 4 & $C_{17, 8, 17}$ & 64 & 4 & 4 & $C_{17, 8, 46}(*)$ & 48 & 5 & 7\\
$C_{13, 8, 19}$ & 24 & 4 & 4 & $C_{17, 8, 18}$ & 8 & 4 & 4 & $C_{17, 8, 47}$ & 16 & 5 & 6\\
$C_{13, 8, 20}(*)$ & 8 & 4 & 6 & $C_{17, 8, 19}$ & 24 & 4 & 4 & & & & \\
$C_{13, 8, 21}(*)$ & 48 & 5 & 6 & $C_{17, 8, 20}$ & 8 & 4 & 4 & & & &\\
\bottomrule
\end{longtable}
\endgroup
\end{center}

For lengths $n=10, 12, 14$ over $\mathbb{F}_{13}$ and lengths $n=10, 12$ over $\mathbb{F}_{17}$, we used the self-dual code construction method in~\cite{IEEE-KC-building}. We list those achieving the largest minimum Mannheim distance among the codes we found. For each code's generator matrix $G=[I|A]$, we give the non-identity part $A$.
\begin{itemize}
    \item A $[10, 5, d_H=5, d_\pi=7]$ code over $\mathbb{F}_{13}$ with generator matrix
    \begin{equation} \label{eq-code-10-5-5-7}
    	A=\begin{bmatrix}
    		9 & 0 & 1 & 11 & 11\\
    		0 & 8 & 6 & 11 & 5\\
    		1 & 6 & 1 & 2 & 3\\
    		11 & 11 & 2 & 3 & 2\\
    		11 & 5 & 3 & 2 & 10
    	\end{bmatrix}.
    \end{equation}
    \item A $[12, 6, d_H=5, d_\pi=8]$ code over $\mathbb{F}_{13}$ with generator matrix
    \[
    A=\begin{bmatrix}
        4 & 0 & 0 & 1 & 11 & 11\\
        0 & 2 & 1 & 0 & 6 & 7\\
        0 & 1 & 12 & 6 & 9 & 7\\
        1 & 0 & 6 & 3 & 11 & 12\\
        11 & 6 & 9 & 11 & 12 & 9\\
        11 & 7 & 7 & 12 & 9 & 6
    \end{bmatrix}.
    \]
    \item A $[14, 7, d_H=5, d_\pi=8]$ code over $\mathbb{F}_{13}$ with generator matrix
    \[
    A=\begin{bmatrix}
        7 & 1 & 0 & 0 & 0 & 7 & 2\\
        1 & 9 & 0 & 1 & 7 & 7 & 0\\
        0 & 0 & 7 & 1 & 0 & 11 & 7\\
        0 & 1 & 1 & 3 & 2 & 0 & 6\\
        0 & 7 & 0 & 2 & 6 & 7 & 11\\
        7 & 7 & 11 & 0 & 7 & 9 & 12\\
        2 & 0 & 7 & 6 & 11 & 12 & 3\\
    \end{bmatrix}.
    \]
    \item A $[10, 5, d_H=5, d_\pi=8]$ code over $\mathbb{F}_{17}$ with generator matrix
    \[
    A=\begin{bmatrix}
        3 & 0 & 1 & 5 & 10\\
        0 & 2 & 12 & 6 & 6\\
        1 & 12 & 16 & 10 & 5\\
        5 & 6 & 10 & 12 & 0\\
        10 & 6 & 5 & 0 & 5\\
    \end{bmatrix}.
    \]
    \item A $[12, 6, d_H=5, d_\pi=8]$ code over $\mathbb{F}_{17}$ with generator matrix
    \[
    A=\begin{bmatrix}
        14 & 0 & 0 & 1 & 15 & 11\\
        0 & 15 & 1 & 0 & 10 & 8\\
        0 & 1 & 3 & 10 & 5 & 0\\
        1 & 0 & 10 & 15 & 15 & 14\\
        15 & 10 & 5 & 15 & 16 & 1\\
        11 & 8 & 0 & 14 & 1 & 5\\
    \end{bmatrix}.
    \]
\end{itemize}

We summarize our current results on the minimum Mannheim distance of self-dual codes over $\mathbb{F}_{13}$ and $\mathbb{F}_{17}$ in Table~\ref{self-dual-F13-F17}. In the column labeled by $\pi$, the listed numbers indicate the size of the field $\mathcal{G}_\pi$ over which the corresponding self-dual codes are defined. For each length $n$, the quantity $d_\pi^*(n)$ denotes the upper bound obtained from the algorithm described above. The last column records the current status of $d_\pi^{SD}(n)$.

\begin{table}[H]
	\caption{The best known values of $d_{\pi}^{SD}(n)$ over $\mathbb{F}_{13}$ and $\mathbb{F}_{17}$}
	\label{self-dual-F13-F17}
	\centering
	\begin{tabular}{cccc||cccc}
		\toprule
        $\pi$ & $n$ & $d_\pi^*(n)$ & $d_\pi^{SD}(n)$ & $\pi$ & $n$ & $d_\pi^*(n)$ & $d_\pi^{SD}(n)$
        \\\midrule
        $13$ & $2$ & $2$ & $2$ & $17$ & $2$ & $2$ & $2$
        \\\hline
        $13$ & $4$ & $5$ & $5$ & $17$ & $4$ & $5$ & $5$
        \\\hline
        $13$ & $6$ & $5$ & $5$ & $17$ & $6$ & $6$ & $6$
        \\\hline
        $13$ & $8$ & $7$ & $6$ & $17$ & $8$ & $9$ & $7$
        \\\hline
        $13$ & $10$ & $9$ & $7-9$ & $17$ & $10$ & $10$ & $8-10$
        \\\hline
        $13$ & $12$ & $10$ & $8-10$ & $17$ & $12$ & $12$ & $8-12$
        \\\hline
        $13$ & $14$ & $12$ & $8-12$ & & & & 
        \\\bottomrule
	\end{tabular}
\end{table}

\begin{prop}
    Suppose there exist self-dual codes over $\mathbb{F}_{13}$ with parameters $[10, 5, d_\pi=9]$ and $[14, 7, d_\pi=12]$. Then they must be MDS codes.
\end{prop}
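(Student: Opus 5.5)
Over $\mathbb{F}_{13}\cong\mathcal{G}_{2+3i}$ every nonzero element has Mannheim weight $1$ or $2$ (cosets $H=\{1,5,8,12\}$ of weight $1$, and $2H$, $4H$ of weight $2$). So any codeword $\mathbf{c}$ satisfies ${\rm wt}_\pi(\mathbf{c})\le 2\,{\rm wt}_H(\mathbf{c})$. I will refine this by scaling: for a codeword of Hamming weight $h$, choose $\lambda\in\mathbb{F}_{13}^*$ to maximize the number of coordinates of $\lambda\mathbf{c}$ lying in $H$. The nonzero coordinates of $\mathbf{c}$ fall into three cosets. Since multiplication by nonzero scalars permutes the three cosets transitively, some $\lambda$ sends the most populated coset to $H$. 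That coset contains at least $\lceil h/3\rceil$ coordinates, so
\[
d_\pi(\mathcal{C})\le {\rm wt}_\pi(\lambda\mathbf{c})\le 2h-\lceil h/3\rceil .
\]
This is exactly the bound $d_\pi(h,1)$ from \eqref{eq-upper-bound-F13-1}, applied to a minimum Hamming weight codeword with $h=d_H(\mathcal{C})$.

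Now suppose $\mathcal{C}$ is a $[10,5]$ self-dual code with $d_\pi=9$ that is not MDS. By the Singleton bound $d_H\le 6$, so non-MDS means $d_H\le5$. Applying the bound with $h\le 5$ gives $d_\pi\le 2\cdot5-2=8<9$, a contradiction. Hence $d_H=6$ and $\mathcal{C}$ is MDS.

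Similarly, let $\mathcal{C}$ be a $[14,7]$ self-dual code with $d_\pi=12$. Singleton gives $d_H\le 8$. If $d_H\le7$, then $d_\pi\le 14-3=11<12$. So $d_H=8$ and $\mathcal{C}$ is MDS.

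The only point needing care is monotonicity: $2h-\lceil h/3\rceil$ is increasing in $h$, so it suffices to check the largest non-MDS value $h=n/2$. This is immediate. I expect no real obstacle beyond confirming that scaling preserves membership in $\mathcal{C}$ (linearity) and that the coset-permutation argument of Lemma~\ref{lemma-unit-weight-2} and Proposition~\ref{prop-constant-weight} applies.
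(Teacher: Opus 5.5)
Your proof is correct and is essentially the paper's argument. Both pigeonhole the $h$ nonzero coordinates into the three cosets of $H$ in $\mathbb{F}_{13}^{*}$ and rescale so the most populated coset lands in $H$, which gives Mannheim weight at most $2h-\lceil h/3\rceil$ ($8$ for $h=5$, $11$ for $h=7$); your monotonicity remark also explicitly covers $d_H<n/2$, which the paper leaves implicit since it only treats Hamming weight exactly $n/2$.
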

\begin{proof}
Let $\mathbf{c}\in\mathbb{F}_{13}^{10}$ be a vector of Hamming weight $5$. Since
\[
\mathbb{F}_{13}^\times=\{1, 5, 8, 12\}\cup \{2, 10, 3, 11\}\cup \{4, 7, 6, 9\},
\]
at least two nonzero coordinates $c_i$, $c_j$ of $\mathbf{c}$ are contained in the same coset. Then, there is $\alpha\in\mathbb{F}_{13}^\times$ such that $\alpha c_i, \alpha c_j\in \{1, 5, 8, 12\}$ and the Mannheim weight of $\alpha\mathbf{c}$ is $\le 8$. Thus, if a $[10, 5, d_\pi=9]$ self-dual code exists, then it must be an MDS code.

Similarly, if $\mathbf{c}\in\mathbb{F}_{13}^{14}$ is a vector of Hamming weight $7$, then at least three nonzero coordinates are contained in the same coset. Thus, there is $\alpha\in\mathbb{F}_{13}^\times$ such that the Mannheim weight of $\alpha\mathbf{c}$ is $\le 1+1+1+2+2+2+2=11$. This implies that a $[14, 7, d_\pi=12]$ self-dual code, if it exists, is an MDS code.
\end{proof}

\section{Decoding} \label{sec-decoding}

Let $\mathcal{C}$ be an $[n,k,d_{\pi}]$ code over $\mathcal{G}_\pi$ with the parity-check matrix $H$, where $\pi = a+bi \in \mathcal{G}$, $0 < a < b$, $\gcd(a,b) = 1$ and $p = a^2+b^2 \equiv 1 \pmod{4}$.
Similarly to Hamming distance, the code can correct errors of Mannheim weight exactly $\lfloor (d_{\pi} - 1)/2 \rfloor$.
Let $\vec{c} \in \mathcal{C}$ be the transmitted codeword, and $\vec{r} = \vec{c} + \vec{e}$ be the received vector, where $\vec{e}$ has Mannheim weight less than or equal to $\lfloor (d_{\pi} - 1)/2 \rfloor$.

Suppose that codewords from the code $C$ are being sent over a communication channel.
If a word $\vec{r}$ is received, the nearest neighbour decoding rule (or minimum distance decoding rule) will decode $\vec{r}$ to $\vec{c}$ if $d_{\pi}(\vec{r}, \vec{c})$ is minimal among all the codewords in $C$, i.e.,
\[ d_{\pi}(\vec{r}, \vec{c}) = \min_{\vec{x} \in C} d_{\pi}(\vec{x}, \vec{r}). \]

\noindent
\textbf{Syndrome Decoding Algorithm}

Here we give the following steps to construct a syndrome look-up table assuming complete nearest neighbour decoding.
\begin{enumerate}
\item[{(i)}]
    List all the cosets for the code, choose from each coset a word of least Mannheim weight as coset leader $\vec{u}$.
	Similar to \cite[Exercise 4.44]{book-LX}, the word $\vec{u}$ is the unique coset leader of $\vec{u} + \mathcal{C}$ if the Mannheim weight of $\vec{u}$ is at most $\lfloor (d_{\pi}(\mathcal{C}) - 1)/2 \rfloor$.

\item[{(ii)}]
    Find a parity-check matrix $H$ for the code and, for each coset leader $\vec{u}$, calculate its syndrome $S(\vec{u}) = \vec{u} \cdot H^T$.
\end{enumerate}

\begin{ex}
	Assume that $p=17$, $n=4$, $\alpha = \beta = 1+i$ and
	\[
	H =
	\begin{pmatrix}
		1 & \beta & \beta^2 & \beta^3 \\
		1 & \beta^5 & \beta^{10} & \beta^{15} \\
	\end{pmatrix}
	=
	\begin{pmatrix}
		1 & 1+i & 2i & -2+2i \\
		1 & -4-4i & -2i & -8+8i \\
	\end{pmatrix}.
	\]
	According to \cite[Example 5.1]{B-Proc}, the code with parity-check matrix $H$ is an $i$-cyclic code, a constant cyclic code.
	Since $p=17$, we have $4^4 \equiv 1 \pmod{17}$.
	That is to say, such a code is actually a code determined the following parity-check matrix
	\[
	H =
	\begin{pmatrix}
		1 & 1+i & 2i & -2+2i \\
		1 & -4-4i & -2i & -8+8i \\
	\end{pmatrix}
	=
	\begin{pmatrix}
		1 & 5 & 8 & 6 \\
		1 & 14 & 9 & 7 \\
	\end{pmatrix}.
	\]
	Let $\vec{r} = (2, -2i, -1-i, 1) = (2, 9, 12, 1)$ be the received vector, whose syndrome is $S(\vec{r}) = \vec{r} \cdot H^T = (13, 5)$.
	By computer search, we find that the coset with syndrome $(13, 5)$ is exactly $(8,1,0,0) + \mathcal{C}$.
	Moreover, the vectors with smallest Mannheim weight are
	\[ \vec{u}_0 = (8,1,0,0), \vec{u}_1 = (0,1,4,13), \vec{u}_2 = (0,9,13,0), \vec{u}_3 = (0,0,5,4), \vec{u}_4 = (1,0,13,13). \]
	The submitted codeword $\vec{s} = \vec{r} - \vec{u}$ should be one of the following vectors:
	\[ \vec{s}_0 = (11,8,12,1), \vec{s}_1 = (2,8,8,5), \vec{s}_2 = (2,0,16,1), \vec{s}_3 = (2,9,7,14), \vec{s}_4 = (1,9,16,5). \]
\end{ex}

\begin{ex}
	Assume that $p = 13$, $n = 10$, and $G = [I_5 \ A]$, where $A$ is defined as \eqref{eq-code-10-5-5-7}.
	Then $G$ can generate a $[10, 5, d_H=5, d_\pi=7]$ code over $\mathbb{F}_{13}$ with parity-check matrix $H = [-A^T \ I_5]$.
	This code can correct errors of Hamming weight $2$ and Mannheim weight $3$.
	\begin{itemize}
		\item Let
		\[ \vec{r} = (1, 2, 0, 1, 11, 2, 1, 9, 12, 8) \]
		be the received vector, whose syndrome is $S(\vec{r}) = \vec{r} \cdot H^T = (4,10,0,6,5)$.
		By computer search, we find that the coset with syndrome $S(\vec{r})$ is exactly $\vec{e} + \mathcal{C}$, where
		\[ \vec{e} = (1, 2, 0, 0, 0, 0, 0, 0, 0, 0). \]
		It is clearly that $\vec{e}$ has Hamming weight $2$ and Mannheim weight $3$.
		Thus the submitted codeword $\vec{s} = \vec{r} - \vec{e}$ should be
		\[ \vec{s} = \vec{r} - \vec{e} = (0, 0, 0, 1, 11, 2, 1, 9, 12, 8). \]
		
		\item If the received vector is
		\[ \vec{r} = (1, 1, 1, 1, 11, 2, 1, 9, 12, 8), \]
		then its syndrome is $S(\vec{r}') = \vec{r}' \cdot H^T$.
		The coset with syndrome $S(\vec{r})$ is exactly $\vec{e} + \mathcal{C}$, where
		\[ \vec{e} = (1, 1, 1, 0, 0, 0, 0, 0, 0, 0) \]
		has Hamming weight $3$ and Mannheim weight $3$.
		That is to say, with respect to Mannheim distance, the submitted codeword should be
		\[ \vec{s} = \vec{r} - \vec{e} = (0, 0, 0, 1, 11, 2, 1, 9, 12, 8). \]
		
		We can check that $\vec{e}$ has the smallest Hamming weight in the coset $\vec{e} + \mathcal{C}$.
		Thus, with respect to Hamming distance, the submitted codeword should be
		\[ \vec{s} = \vec{r} - \vec{e} = (0, 0, 0, 1, 11, 2, 1, 9, 12, 8). \]
	\end{itemize}
\end{ex}

\noindent
\textbf{Syndrome Decoding For Perfect Codes}

Let $\mathcal{C}$ be a perfect code with Mannheim weight $d_{\pi}$.
Then the number of cosets for $\mathcal{C}$ is the same as the number of codewords whose weight is less than or equal to $\lfloor (d_{\pi} - 1)/2 \rfloor$.
Then the decoding can be done in the following steps:
\begin{enumerate}
\item[{(i)}]
    When $\vec{c} \in \mathcal{C}$ is sent and $\vec{w}$ is received, calculate its syndrome $S(\vec{w}) = \vec{w} \cdot H^T$.

\item[{(ii)}]
    If $S(\vec{w}) = \vec{0}$, then assume $\vec{c} = \vec{w}$.
	
\item[{(iii)}]
    If $S(\vec{w}) \neq \vec{0}$, then there exists $\vec{u} \in \mathbb{F}_p^n$ such that $S(\vec{w}) = S(\vec{u})$.
	So assume that $\vec{c} = \vec{w} - \vec{u}$.
\end{enumerate}

\section{Conclusion}
In this paper, we have obtained several bounds for the minimum Mannheim distances of codes over Gaussian integers. By presenting explicit examples of codes that meet these bounds, we have demonstrated that our bounds are tight. Moreover, we have also presented a Mannheim-distance version of the sphere-packing bound. Using this bound, we have identified the parameters for which a 2-error perfect code may exist. Using a MacWilliams-type identity, we have derived bounds for self-dual codes with respect to the Mannheim distance and have characterized several optimal self-dual codes attaining the largest possible Mannheim distance. Finally, by presenting examples where decoding is not possible under the Hamming distance but becomes possible under the Mannheim distance, we have showed that our results are meaningful.

\section*{Declarations}  The authors declare no conflict of interest.


\begin{thebibliography}{00}
    \bibitem{book-AA-QDE} T. Andreescu, D. Andrica, \textit{Quadratic Diophantine Equations}. Springer, New York. 2015.

    \bibitem{DM-BGGHK-sdcode} K. Betsumiya, S. Georgiou, T. A. Gulliver, M. Harada, and C. Koukouvinos, On self-dual codes over some prime fields, Discrete Math., 262(1-3):37-58, 2003.

    \bibitem{IEEE-BR-1perfect} J. Borges and J. Rifà, A Characterization of 1-perfect additive codes, IEEE Trans. Inf. Theory, 45(5):1688-1697, 2002.

    \bibitem{B-Proc} S. Bouyuklieva, Applications of the Gaussian integers in coding theory, Proceedings of the 3rd International Colloquium on Differential Geometry and its Related Fields, Veliko Tarnovo, 2012.

    \bibitem{IEEE-C-Z2k} C. Carlet, $\mathbb{Z}_{2^k}$-linear codes, IEEE Trans. Inf. Theory, 44(4):1543-1547, 1998.

    \bibitem{IC-CW-leechannel} J. C. Y. Chiang and J. K. Wolf, On channels and codes for the Lee metric, Inf. Control, 19(2):159-173, 1971.

    \bibitem{DCC-CK-self-dual} W.-H. Choi, J.-L. Kim, An improved upper bound on self-dual codes over finite fields GF(11), GF(19), and GF(23), Des. Codes Cryptogr., 90:2735–2751, 2022.

    \bibitem{PIT-CH} I. Constantinescu and W. Heise, A metric for codes over residue class rings of integers, Probl. Inf. Transm, 33(3):22–28, 1997.

    \bibitem{IEEE-EY-Leemulti} T. Etzion and E.Yaakobi, Error-correction of multidimensional bursts, IEEE Trans. Inf. Theory, 55(3):961-976, 2009.

    \bibitem{SAM-GW-Leeperfect} S. W. Golomb and L. R. Welch, Perfect codes in the Lee metric and the packing of polyominoes, SIAM J. Appl. Math., 18(2):302-317, 1970.

    \bibitem{IEEE-GS-chainring} M. Greferath and S. E. Schmidt, Gray isometries for finite chain rings and a nonlinear ternary $(36, 3^{12}, 15)$ code, IEEE Trans. Inf. Theory, 45(7):2522-2524, 1999.

    \bibitem{IBM-G-Griesmer} J. H. Griesmer, A bound for error-correcting codes, IBM J. Res. Dev., 4(5):532–542, 1960.

    \bibitem{IEEE-HKCSS-Z4} A. R. Hammons, P. V. Kumar, A. R. Calderbank, N. J. A. Sloane and P. Sole, The $\mathbb{Z}_4$-linearity of Kerdock, Preparata, Goethals, and related codes, IEEE Trans. Inf. Theory, 40(2):301-319, 1994.

    \bibitem{FFA-H-self-dual} M. Harada, The existence of a self-dual [70,35,12] code and formally self-dual codes, Finite Fields Their Appl., 3(2):131–139, 1997.

    \bibitem{FFA-H-self-dual-2} M. Harada, Construction of extremal Type II $\mathbb{Z}_{2k}$-codes, Finite Fields Their Appl., 87:102154, 2023.

    \bibitem{IEEE-H-gaussian} K. Huber, Codes over Gaussian integers, IEEE Trans. Inf. Theory, 40(1):207-216, 2002.

    \bibitem{AAECC-H-MTTDMM} K. Huber, The MacWilliams theorem for two-dimensional modulo metrics, Appl. Algebra Engrg. Comm. Comput.,  8(1):41-48, 1997.

    \bibitem{IEEE-K-building} J.-L. Kim, New extremal self-dual codes of lengths 36, 38 and 58, IEEE Trans. Inf. Theory, 47(1):386–393, 2001.

    \bibitem{IEEE-KC-building} J.-L. Kim and W.-H. Choi, Self-dual codes, symmetric matrices, and eigenvectors, IEEE Access, 9:104294-104303, 2021.

    \bibitem{JCTA-LL-self-dual} H. Lee, Y. Lee, Construction of self-dual codes over finite rings $\mathbb Z_{p^m}$, J. Combin. Theory Ser. A, 115:407–422, 2008.

    \bibitem{book-LX} S. Ling, C. Xing, \textit{Coding Theory: A First Course}. Cambridge University Press, New York. 2004.

    \bibitem{JLMS-L-QDE-4} W. Ljunggren, Some remarks on the Diophantine equations $x^2 - D y^4 = 1$ and $x^4 - D y^2 = 1$, J. Lond. Math. Soc. 1(1):542-544, 1966.

	\bibitem{TIT-MBG-metrics} C. Martínez, R. Beivide and E. Gabidulin, Perfect Codes for Metrics Induced by
	Circulant Graphs, IEEE Trans. Inf. Theory, 53(9):3042-3052, 2007.

    \bibitem{IEEE-M-CM} H. Matsui, An algorithm for finding self-orthogonal and self-dual codes over Gaussian and Eisenstein integer residue rings Via Chinese Remainder Theorem, IEEE Access, 11:23260-23267, 2023.
	
	\bibitem{book-ME-problem} M. R. Murty, J. Esmonde, \textit{Problems in Algebraic Number Theory}. second edition, Springer, New York. 2005.

    \bibitem{QIC-OG-quantum} M. Özen and M. Güzeltepe, Quantum codes from codes over Gaussian integers with respect to the Mannheim metric, Quantum Info. Comput. 12(9–10):813-819, 2012.

    \bibitem{ITI-P-plotkin} M. Plotkin, Binary codes with specified minimum distance, IRE Trans. Inf. Theory, 6(4):445–450, 1960.

    \bibitem{IEEE-RS-LeeBCH} R. M. Roth and P. H. Siegel, Lee-metric BCH codes and their application to constrained and partial-response channels, IEEE Trans. Inf. Theory, 40(4):1083-1096, 1994.

    \bibitem{IEEE-SSAA-GaussianBCH} M. Sajjad, T. Shah, M. Alammari and H. Alsaud, Construction and decoding of BCH-codes over the Gaussian field, IEEE Access, 11:71972-71980, 2023.

    \bibitem{IEEE-S-LeeRM} K.-U. Schmidt, Complementary sets, generalized Reed-Muller codes, and power control for OFDM, IEEE Trans. Inf. Theory, 53(2):808-814, 2007.

    \bibitem{FCN-SLLK-QAM} J. Ssimbwa, B. Lim, J.-H. Lee and Y.-C. Ko, A survey on robust modulation requirements for the next generation personal satellite communications, Front. Commun. Netw., 3:850781, 2022.

    \bibitem{SAM-T-fieldperfect} A. Tietäväinen, On the nonexistence of perfect codes over finite fields, SIAM J. Appl. Math., 24(1):88-96, 1973.

    \bibitem{IEEE-WLZ} R. Wan, Y. Li, S. Zhu, New MDS self-dual codes over finite field $\mathbb F_{r^2}$, IEEE Trans. Inf. Theory, 69(8):5009-5016, 2023.
\end{thebibliography}
\end{document}